\documentclass[11pt,oneside,letterpaper]{article}
\usepackage{amsmath}
\usepackage{amsfonts}
\usepackage{amssymb}
\usepackage{amsthm}
\usepackage{graphicx}
\usepackage{caption}
\usepackage{subcaption}
\usepackage{rotating}
\usepackage{multirow}
\usepackage{color}
\usepackage{rotfloat}
\usepackage{pgfplots}
\usepackage[T1]{fontenc}
\usepackage{amssymb,amsmath}
\usepackage{pdflscape}
\usepackage[left=1in, bottom=1in, right=1in, top=1in]{geometry}
\voffset = 0.1in
\usepackage{times}
\usepackage{bbm, dsfont}
\usepackage{natbib}
\usepackage{cases}
\usepackage{url}
\usepackage{xr}
\externaldocument{Supplementary_Material}
\urlstyle{same}
\usepackage{etoolbox}
\apptocmd{\UrlBreaks}{\do\-}{}{}
\bibpunct{(}{)}{;}{a}{,}{,}
\usepackage{float}
\usepackage{pdfsync}
\usepackage{setspace}
\usepackage{enumerate}
\numberwithin{equation}{section}
\onehalfspacing
\usepackage[shortlabels]{enumitem}
\setlength{\bibsep}{0cm}

\setlength{\parindent}{0.5cm}
\theoremstyle{definition}\newtheorem{definition}{Definition}
\newtheorem{lemma}{Lemma}[section]
\newtheorem{proposition}{Proposition}[section]
\newtheorem{theorem}{Theorem}
\newtheorem{assumption}{Assumption}

\theoremstyle{definition}\newtheorem{example}{Example}
\theoremstyle{definition}
\theoremstyle{definition}\newtheorem{remark}{Remark}[section]
\theoremstyle{definition}

\DeclareMathOperator*{\pconv}{\overset{P}{\longrightarrow}}
\DeclareMathOperator*{\dconv}{\overset{d}{\longrightarrow}}
\DeclareMathOperator*{\aequal}{\overset{a}{=}}
\title{Testing for Restricted Stochastic Dominance under Survey Nonresponse with Panel Data: Theory and an Evaluation of Poverty in Australia}
\author{Matthew J. Elias\thanks{e61 Institute, Sydney, Australia.}\and Rami V. Tabri\thanks{\ Corresponding author. Department of Econometrics and Business Statistics,
Monash University, Melbourne, Australia; Email: rami.tabri@monash.edu}}
\begin{document}
\maketitle
\begin{abstract}
This paper lays the groundwork for a unifying approach to stochastic dominance testing under survey nonresponse that integrates the partial identification approach to incomplete data and design-based inference for complex survey data. We propose a novel inference procedure for restricted $s$th-order stochastic dominance, tailored to accommodate a broad spectrum of nonresponse assumptions. The method uses pseudo-empirical likelihood to formulate the test statistic and compares it to a critical value from the chi-squared distribution with one degree of freedom. We detail the procedure's asymptotic properties under both null and alternative hypotheses, establishing its uniform validity under the null and consistency against various alternatives. Using the Household, Income and Labour Dynamics in Australia survey, we demonstrate the procedure's utility in a sensitivity analysis of temporal poverty comparisons among Australian households.\\
\noindent JEL Classification: C12;C14\\
Keywords: Empirical Likelihood; Panel Data; Stochastic Dominance; Nonresponse.
\end{abstract}

\section{Introduction}\label{Section - Intro}
\par This paper connects three bodies of literature:stochastic dominance testing, partial identification for incomplete data, and design-based inference for complex survey data (e.g.~involving stages, clustering, and stratification). Although there is extensive literature on stochastic dominance testing (e.g., \citealp{mcfadden1989,Abadie2002,barrettdonald2003,Linton-Maasoumi-Whang,Linton2010,davidsonduclos2013,Donald-Hsu,LOK-Tabri2021}, among others), most methods fall short in practical applications because they assume complete datasets and rely on either random sampling or time series data processes. In empirical applications, data often come from complex socioeconomic surveys like the Current Population Survey, the Household, Income and Labour Dynamics in Australia (HILDA) Survey, and the Survey of Labour Income Dynamics, where nonresponse is common and data are not simple random samples. While the partial identification literature offers bounding approaches to address missing data issues (e.g., \citealp{blundell2007,klinesantos2013,manski2016}, among others), these often overlook the intricate designs of surveys, potentially leading to incorrect standard error estimations and subsequent distortions in tests' size and power. 

\par Conversely, the design-based inference literature accommodates survey design complexities but generally depends on designers' handling of nonresponse who either posit point-identifying assumptions on nonresponse or specify a model of nonrandom missing data with reweighting/imputation (e.g., \citealp{berger2020empirical,seker2015poverty,Chen-Duclos,Qin-Zhang-Leung2009}, among others). Such assumptions may not hold when nonrespondents possess key characteristics influencing survey outcomes, such as income in socioeconomic surveys where the rich are often underrepresented (\citealp{Bourguignon-2018}), rendering these inferences potentially misleading and not credible. It is also noteworthy that econometricians have considered design-based perspectives to inference in other contexts. See, for example,~\cite{BHATTACHARYA2005145}, who developed tests for Lorenz dominance, and~\cite{Abadie-Athey-Imbens-Wooldridge}, who focus on causal estimands in a regression framework. At the intersection of the three literatures is the testing procedure of~\cite{Tabri2021}. It is, however, limited in scope because it applies only to ordinal data generated from independent cross-sections and uses the worst-case bounds to account for nonresponse.

\par This paper bridges those bodies of literature. It provides the first asymptotic framework and inference procedure 
that integrates design-based inference with partial identification for handling incomplete data in econometrics, specifically for restricted stochastic dominance. Our approach effectively handles the survey’s complex design and the missing outcome data due to nonresponse. Importantly, it accommodates a broad spectrum of assumptions on nonresponse, enabling researchers to transparently perform sensitivity analyses of test conclusions by clearly linking them to various nonresponse scenarios.

\par To illustrate our approach, consider a data example on analyzing equivalized household net income (EHNI) data from the HILDA survey for 2001 (wave 1) and 2002 (wave 2). The objective is to detect a decline in poverty using the \emph{headcount ratio} across a given range of poverty lines, $[\underline{t}, \overline{t}]$, by seeking evidence for restricted stochastic dominance: $F_2(y) < F_1(y)$ for all  $y$ in this interval, where $F_j$ represents the cumulative distribution function (CDF) for EHNI in wave  $j$. Strong evidence for this decline is usually sought through a test of $H_0:$ Not $H_1$ against $H_1: F_2(y) < F_1(y)\,\forall y\in [\underline{t}, \overline{t}]$ and rejecting $H_0$ in favor of $H_1$ (\citealp{davidsonduclos2013}). A significant challenge with implementing this test in practice is that the achieved sample is incomplete:
\begin{align*}
    Y_i = \begin{cases}
    (Y^{1}_i, Y^{2}_i) & \text{if } Z^1_i = Z^2_i = 1, \\
    (Y^1_i, *) & \text{if } Z^1_i = 1, Z^2_i = 0, \\
    (*, *) & \text{if } Z^1_i = Z^2_i = 0,
    \end{cases}
    \quad \text{for}\quad i=1,\ldots,n,
\end{align*}
where "*" indicates missing data, \( Y^{j}_i \) and \( Z^{j}_i \) are the EHNI and response indicators for wave \( j \) respectively, expressed in 2001 Australian dollars. Nonrespondents in wave 1 are not followed up, equating nonresponse in wave 1 to unit nonresponse and in wave 2 to wave nonresponse. The unit nonresponse rate is approximately 33\%, while for wave 2, it is about 8\%; item nonresponse is disregarded for simplicity.

\par Consequently, without any assumptions on nonresponse the EHNI CDFs are only partially identified. The first panel in Figure~\ref{Figure: Intro} reports the survey-weighted estimates of the worst-case identified sets of these CDFs over a given range of poverty lines, assuming that the theoretically largest and lowest values of EHNI are 1000000 and -150000. Observe that while these sets are wide, they are maximally credible in that they reflect the inherent uncertainties that cannot be eliminated through less plausible assumptions. However, since one set is a proper subset of the other, they are uninformative for comparing $F_1$ and $F_2$ in terms of restricted stochastic dominance.
\begin{figure}[pt]
\centering
\includegraphics[width=15cm, height=6cm]{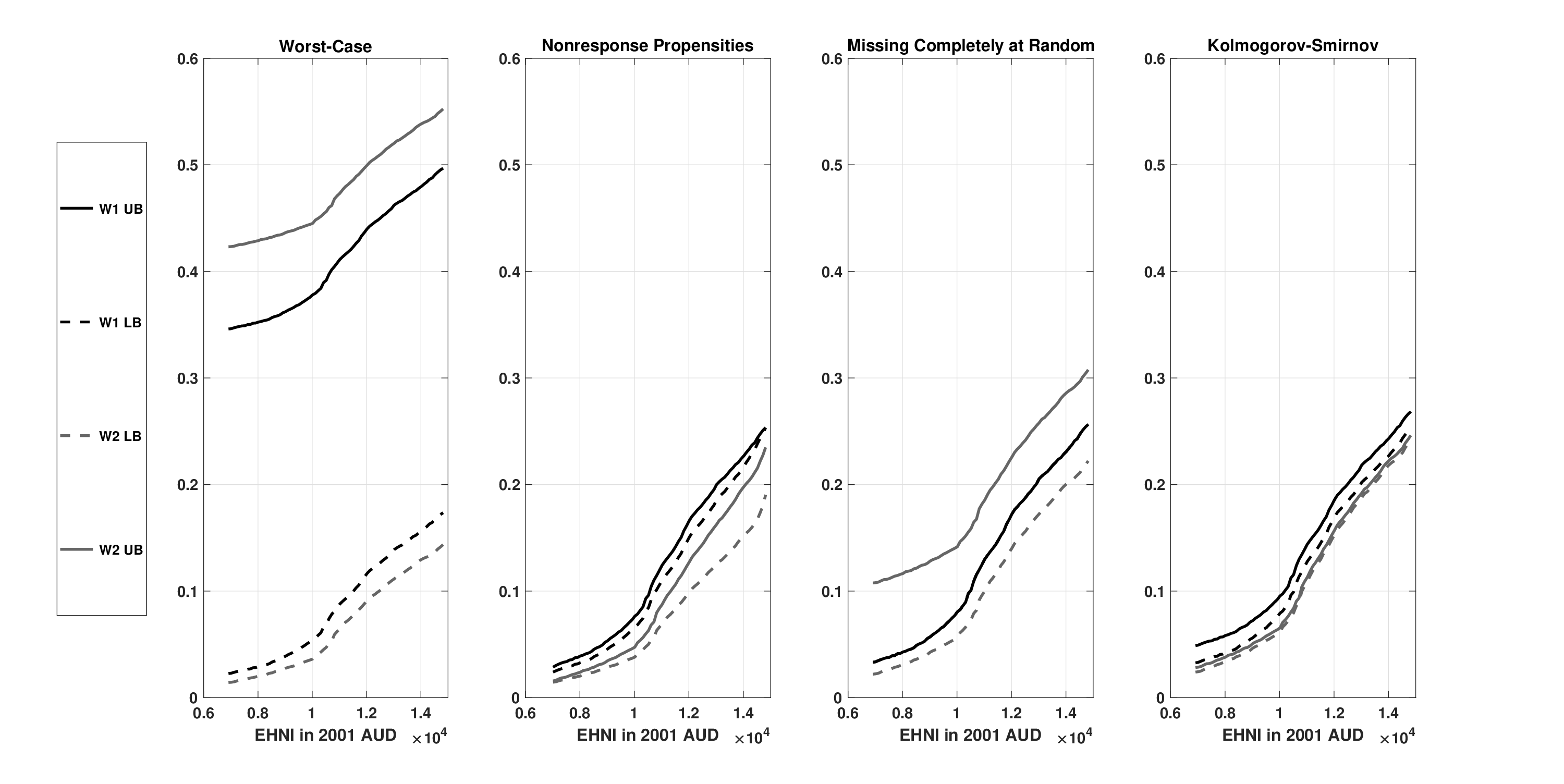}
\caption{Survey-weighted estimates: identified sets of $F_1$ and $F_2$. }\label{Figure: Intro}
\end{figure}

\par An alternative approach involves considering a range of assumptions on nonresponse, in the vast middle ground between the worst-case scenario (no assumptions) and the standard practice of assuming random nonresponse to point-identify $F_1$ and $F_2$. This method allows consumers of economic research to draw their own conclusions based on their preferred assumptions. The subsequent panels in Figure~\ref{Figure: Intro} illustrate this by reporting estimates under various assumptions: U-shape bounds on nonresponse propensities conditional on EHNI (panel 2), the missing completely at random (MCAR) assumption for unit nonresponse combined with a worst-case scenario for wave nonresponse (panel 3), and a Kolmogorov-Smirnov neighborhood of the MCAR assumption for both nonresponse types (panel 4). While the sets in panel 3 are uninformative for comparing $F_1$ and $F_2$ using restricted stochastic dominance for the same reason as the worst-case scenario, panels 2 and 4 reveal that the upper boundaries of wave 2 identified sets are strictly below the lower boundaries of wave 1, presenting some evidence of a decline in poverty (i.e., $H_1$). Strong evidence for this decline under those nonresponse assumptions can be obtained by testing
\begin{align}\label{eq - into test problem}
H^1_0: \text{Not } H^1_1 \quad\text{vs.} \quad H_1: \overline{F}_2(y) < \underline{F}_1(y)\;\forall y\in [\underline{t}, \overline{t}]
\end{align}
and rejecting $H^1_0$ in favor of $H^1_1$. Here, $\underline{F}_1$ and $\overline{F}_2$ represent the lower and upper boundaries of identified sets for $F_1$ and $F_2$, respectively, under the maintained nonresponse assumptions. The point being that rejection of $H^1_0$ in favor of $H^1_1$ implies rejection of $H_0$ in favor of $H_1$ under the maintained nonresponse assumptions.

\par We propose a survey-weighted statistical testing procedure for test problems of the form~(\ref{eq - into test problem}). Our approach employs estimating functions with nuisance functionals (e.g.,~\citealp{GODAMBE-Thompson-book-chapter}, and~\citealp{zhaowu2020}) in a general framework that covers a broad spectrum of nonresponse assumptions through specifications of the nuisance functionals -- see Section~\ref{Subsection Examples} for some examples. This approach is helpful for performing a sensitivity analysis of inferential conclusions of the test under different assumptions on nonresponse within our framework. We illustrate this point in Section~\ref{Section - Illustration} with an empirical example of poverty orderings (\citealp{fostershorrocks1988b}) using data from the HILDA Survey.

\par The testing procedure employs the pseudo-empirical likelihood method (\citealp{chensitter1999,wurao2006}) to formulate the test statistic, applicable across any predesignated order of stochastic dominance and range. It compares the statistic against a chi-squared distribution's quantile with one degree of freedom. To derive the test's asymptotic properties, we have developed a design-based framework for survey sampling from large finite populations. We establish the test's asymptotic validity with uniformity, and its asymptotic power against both local and non-local alternatives, achieving consistency against distant alternatives. However, the test exhibits asymptotic bias against alternatives that rapidly converge to the null hypothesis boundary, which is not surprising, as the test is not asymptotically similar on the boundary.

\par In developing our testing method, we primarily addressed unit and wave nonresponse for simplicity, though the method can be adapted to include item nonresponse. We have also limited our analysis to temporal pairwise comparisons between wave one and any subsequent wave, which is particularly useful when assessing restricted stochastic dominance of an outcome variable from the survey’s start to a later period. Extending beyond these comparisons introduces significant methodological challenges due to the changing composition of the target population and the impact of nonresponse. Section~\ref{Subsection - Discussion - Longitudinal weights} discusses these issues in depth and introduces a pseudo-empirical likelihood testing method based on the \emph{restricted tests} (\citealp{Aitchison-1962}) approach as well as technical details in the Appendix.

\par This paper is organised as follows. Section~\ref{Section - Setup} introduces the testing problem of interest and presents examples of bounds on the dominance functions. Section~\ref{Section - Testing Procedure} describes the pseudo-empirical likelihood ratio testing procedure and the decision rule. Section~\ref{Section - Asymptotic Framework and Results} introduces the large finite population asymptotic framework, and presents our results. Section~\ref{Section - Discussion} discusses the scope of our main results, their implications for empirical practice, and some directions of future research. Section~\ref{Section - Illustration} illustrates our testing procedure using HILDA survey data in testing for poverty orderings among Australian households, and Section~\ref{Section - Conclusion} presents our conclusions.

\section{Setup}\label{Section - Setup}

\par Panel survey samples on the variable of interest are drawn from units in the first wave of a reference finite population who are also followed over time. The inference procedure of this paper uses data arising from panel surveys between wave 1 and a given subsequent wave. Borne in mind are two finite populations corresponding to the first wave and the subsequent wave of interest. Let $\mathcal{P}_{N_K} = \left\{ \{Y^K_i, Z^K_i \} : i = 1,2,\hdots, N_K \right\}$ denote a finite population characterized by the variables $(Y^K, Z^K)$ for each $K\in\{A,B\}$, where $N_K$ is the population total, $Y^K_i\in\mathcal{X}_K\subset\mathbb{R}$ is the outcome variable of interest, and $Z^K_i$ 0/1 binary variable indicating the unit's response to the survey in the reference population, for each $i$. In the design-based approach (\citealp{neyman1934twoaspects}), the elements of $\mathcal{P}_{N_K}$ are treated as constant for each $K\in\{A,B\}$. This approach is different from mainstream statistics and econometrics, and is the core of survey data estimation and inference.

\par Without loss of generality, let $A$ correspond to the reference population in wave 1 and $B$ its subsequent counterpart, so that period $A$ precedes period $B$. In this setup,
\begin{align}\label{eq - Sample pairs}
    \text{the practitioner observes}\quad Y_i= \ \begin{cases}
    (Y^A_i,Y^{B}_i) & \text{if} \ Z^A_i=Z^B_i = 1 \\
    (Y^A_i,*) & \text{if} \ Z^A_i=1,Z^B_i = 0,\\
    (*, *) & \text{if} \ Z^A_i=Z^B_i = 0,
    \end{cases}
\end{align}
where ``$*$'' denotes the missing value code, with sampling unit $i$ from population $A$ (i.e., in wave 1) and followed through time to population $B$  (i.e., the subsequent wave). Unit nonresponse corresponds to the event $\left\{Z^{A}=Z^{B}=0\right\}$ in our setup since units that do not respond in wave 1 (i.e., population $A$) are not followed into later waves (i.e., population $B$). Wave nonresponse corresponds to the event $\left\{Z^{A}=1,Z^{B}=0\right\}$, since the unit responds in wave 1 but not in the subsequent wave. The event $\{Z^A_i=0,Z^B_i = 1\}$ does not occur in our setup since units that have not responded to the survey in the first wave are not followed in subsequent waves by design, so that $\{Z^A_i=0,Z^B_i = 1\}=\emptyset$, holds. We do not treat item nonresponse for simplicity and to avoid notational clutter. Its treatment is similar to that of the other kinds of nonresponse and requires the introduction of two more 0/1 binary variables to capture this behavior.

\par The objective of this paper is to compare the finite population distributions of the outcome variable arising from $\mathcal{P}_{N_A}$ and $\mathcal{P}_{N_B}$ using stochastic dominance under survey nonresponse. For each $K\in\{A,B\}$, the CDFs of $Y_K$ under $\mathcal{P}_{N_K}$ is defined as $F_{N_K}(x)=N_K^{-1}\sum_{i=1}^{N_K}\mathbbm{1}[Y_i^K \leq x]$ for each $x\in\mathcal{X}_K$. Following~\cite{davidsonSD2008}, for $K = A,B$, $x \in \mathbb{R}$, define the
dominance functions by the recursion: $D_{N_K}^1 (x)= F_{N_K}(x)$ and  $D^{s}_{N_K}(x) =\int_{-\infty}^{x}D^{s-1}_{N_K}(u)\, \text{d}u$ for $s=2,3,4,\ldots$. Tedious calculation of these integrals yields  $D^{s}_{N_K}(x)=N_K^{-1}\sum_{i=1}^{N_K} \frac{(x-Y_i^K)^{s-1}}{(s-1)!} \mathbbm{1}[Y_i^K \leq x]$ for $K\in\{A,B\}$ and $s=1,2,\ldots$. We say that $F_{N_A}$ stochastically dominates $F_{N_B}$ at order $s \in \mathbb{N}$, if $D_{N_A}^s (x) \leq D_{N_B}^s (x)$ $\forall x \in \mathcal{X}_A \cup \mathcal{X}_B$. We have strict dominance when the inequality is strict over all points in the joint support $\mathcal{X}_A \cup \mathcal{X}_B$. We make the following assumption on $\mathcal{X}_A$ and $\mathcal{X}_B$.
\begin{assumption}\label{Assumption - Compactness of X_K}
$\mathcal{X}_K\subset\mathbb{R}$ is compact, for $K=A,B$.
\end{assumption}
\noindent This assumption is natural in many applications, for example in the context of income and wealth distributions.

\subsection{Hypotheses and Estimating Functions}
\par The test problem of interest in practice is
\begin{align}\label{eq - test problem}
H_0:\max_{x\in[\underline{t},\overline{t}]}\left(D_{N_A}^{s}(x)-D_{N_B}^{s}(x)\right)\geq0\;\;\text{vs.}\;\; H_1:\max_{x\in[\underline{t},\overline{t}]}\left(D_{N_A}^{s}(x)-D_{N_B}^{s}(x)\right)<0,
\end{align}
where $[\underline{t},\overline{t}] \subseteq \text{interior}\left(\mathcal{X}_A \cup \mathcal{X}_B \right)$ and the order $s$ are pre-designated by the researcher and depends on the context of the application; for example, in poverty analysis using the headcount ratio (i.e., $s=1$), this interval could be the set of feasible poverty lines. The null hypothesis states that $F_{N_A}$ does not stochastically dominate $F_{N_B}$ at the $s$th order over the interval $[\underline{t},\overline{t}]$, and the alternative hypothesis $H_1$ is its negation; that is, strict restricted stochastic dominance of $F_{N_A}$ by $F_{N_B}$ at the $s$th order.

\par The natural estimand in developing a statistical procedure for the test problem~(\ref{eq - test problem}) is
\begin{align}\label{eq - contrasts}
\left\{D_{N_A}^s (x) - D_{N_B}^s (x), x\in[\underline{t},\overline{t}]\right\}.
\end{align}
Using the notion of estimating functions (\citealp{GODAMBE-Thompson-book-chapter}), we can define the contrasts~(\ref{eq - contrasts}) as the unique solution of \emph{census estimating equations}, yielding their point-identification. However, nonresponse calls their point-identification into question. In an attempt to circumvent these identification challenges posed by missing data, survey designers have implemented assumptions that nonresponse is ignorable, such as MCAR and Missing at Random (MAR), which point-identify the dominance contrasts~(\ref{eq - contrasts}). While such assumptions enable the development of statistical procedures for tackling the testing problem~(\ref{eq - test problem}), they are also typically implausible in practice as nonresponse is not necessarily ignorable.

\par It is productive to firstly consider the identification of the contrasts~(\ref{eq - contrasts}) under nonresponse. Given $s$, by the Law of Total Probability, for each $x\in\mathbb{R}$
\begin{align}\label{eq - LTP Expansion}
D_{N_K}^s (x) & =\sum_{z^A,z^B\in\{0,1\}}\mathbbm{E}_{F_K\left(\cdot| Z^A=z^A,Z^B=z^B\right)}\left[\frac{(x-Y^K)^{s-1}}{(s-1)!}\,\mathbbm{1}\left[Y^K\leq x\right]\right]\,\delta_{z^A z^{B}},
\end{align}
with $F_K\left(x|Z^A=z^A,Z^B=z^B\right)=\delta_{z^A z^B}^{-1}\,N^{-1}_{A}N^{-1}_{B}\sum_{i,j}\mathbbm{1}\left[Y_i^K\leq x, Z_i^A=z^A,Z^B_j=z^B\right]$, and $\delta_{z^A z^{B}} = N_A^{-1} N_B^{-1}\sum_{i,j}\mathbbm{1}\left[Z^A_i=z^A,Z^B_j=z^B\right]$ whose sums run through the elements of the two finite populations, $\mathcal{P}_{N_A}$ and $\mathcal{P}_{N_B}$. Since $\{Z^A=0,Z^B=1\}=\emptyset$, holds, it implies that $\delta_{01}=0$, simplifying the representation of $D_{N_K}^s (\cdot)$ in~(\ref{eq - LTP Expansion}) by dropping the conditional expectation $\mathbbm{E}_{F_K\left(\cdot| Z^A=0,Z^B=1\right)}\left[\frac{(x-Y^K)^{s-1}}{(s-1)!}\,\mathbbm{1}\left[Y^K\leq x\right]\right]$. For population $K=A$, this representation reveals that $D_{N_A}^s(\cdot)$ is not point-identified, since all terms in this representation are identifiable by the data, except for $\mathbbm{E}_{F_A\left(\cdot| Z^A=Z^B=0\right)}\left[\frac{(x-Y^K)^{s-1}}{(s-1)!}\,\mathbbm{1}\left[Y^K\leq x\right]\right]$. The same outcome holds for population $B$, but now the non-identifiable parts are $$\mathbbm{E}_{F_B\left(\cdot| Z^A=1,Z^B=0\right)}\left[\frac{(x-Y^B)^{s-1}}{(s-1)!}\,\mathbbm{1}\left[Y^B\leq x\right]\right]\;\text{and}\;\mathbbm{E}_{F_B\left(\cdot| Z^A=Z^B=0\right)}\left[\frac{(x-Y^B)^{s-1}}{(s-1)!}\,\mathbbm{1}\left[Y^B\leq x\right]\right].$$

\par Consequently, the contrasts~(\ref{eq - contrasts}) are not point-identified unless one is willing to make strong assumptions on nonresponse in practice. The most appealing way to mitigate this identification problem created by nonrepsonse is for survey designers to improve the response rates of their surveys and to obtain validation data that delineates the nature of nonresponse. However, in the absence of this, the only way to conduct inference in the test problem~(\ref{eq - test problem}) is by making assumptions that either directly or indirectly constrain the distribution of missing data. Of course, such assumptions are generally nonrefutable as the available data imply no constraints on the missing data. But this approach enables empirical researchers to draw their own conclusions using assumptions they deem credible enough to maintain.

\par Our approach is to develop a testing procedure based on bounds of the dominance contrasts that depend on observables and encode assumptions on nonresponse. Suppose that the sharp identified set of $D_{N_K}^s (\cdot)$ is given by $\underline{D}_{N_K}^s(x)\leq  D_{N_K}^s (x) \leq \overline{D}_{N_K}^s(x)$ $\forall x\in[\underline{t},\overline{t}]$ and $K=A,B$, where $\underline{D}_{N_K}^s(\cdot)$ and $\overline{D}_{N_K}^s(\cdot)$ are the respective lower and upper bounding functions of this set. Then the boundaries of these identified sets imply bounds on the contrasts:
\begin{align}\label{eq - bounds on DF}
    \underline{D}_{N_A}^s(x) - \overline{D}_{N_B}^s(x) \leq D_{N_A}^s (x) - D_{N_B}^s (x) \leq \overline{D}_{N_A}^s(x) -\underline{D}_{N_B}^s(x), \quad \forall x \in [\underline{t},\overline{t}].
\end{align}
The identified set~(\ref{eq - bounds on DF}) captures all the information on the contrasts under the maintained assumption on nonresponse. Now imposing sign restrictions on its bounds is advantageous as it leads to a method of inferring restricted stochastic dominance on the distributions in question. In particular, consider the testing problem:
\begin{align}\label{eq - boundary of null}
H^{1}_0:\max_{x\in[\underline{t},\overline{t}]}\left( \overline{D}_{N_A}^s(x) -\underline{D}_{N_B}^s(x)\right)\geq0\;\;\text{vs.}\;\; H^{1}_1:\max_{x\in[\underline{t},\overline{t}]}\left( \overline{D}_{N_A}^s(x) -\underline{D}_{N_B}^s(x)\right)<0.
\end{align}
Rejecting $H^{1}_0$ in favor of $H^{1}_1$ in~(\ref{eq - boundary of null}) implies rejection of $H_0$ in favor of $H_1$ in~(\ref{eq - test problem}) since $$\max_{x\in[\underline{t},\overline{t}]}\left( \overline{D}_{N_A}^s(x) -\underline{D}_{N_B}^s(x)\right)<0\implies D^{s}_{N_A}(x)\leq\overline{D}_{N_A}^{s}(x)<\underline{D}_{N_B}^{s}(x)\leq D^{s}_{N_B} (x),$$ holds, $\forall x\in[\underline{t},\overline{t}]$.

\par Next, we present a general estimating function approach that targets the contrasts \\ $\left\{\overline{D}_{N_A}^{s}(x)-\underline{D}_{N_B}^{s}(x),x\in[\underline{t},\overline{t}]\right\}$. We also demonstrate this approach's scope in applications with several important examples. For each $x\in[\underline{t},\overline{t}]$, consider the estimating function $h_s\left(Y^A, Y^B,Z^A,Z^B,\theta(x),\varphi(x)\right)$, given by
\begin{equation}
\begin{aligned}
&\varphi_3(x)\mathbbm{1}\left[Z^A=1\right]-\theta(x)\mathbbm{1}\left[Z^A=1\right]-\frac{(x-Y^B)^{s-1}}{(s-1)!}\mathbbm{1}\left[Y^B\leq x,Z^A=Z^B=1\right]\varphi_4(x)\\
&+\frac{(x-Y^A)^{s-1}}{(s-1)!} \left[\mathbbm{1}\left[Y^A\leq x,Z^A=Z^B=1\right]\varphi_1(x)+\mathbbm{1}\left[Y^A\leq x,Z^A=1,Z^B=0\right]\varphi_2(x)\right],
\end{aligned}\label{eq - Estimating Function}
\end{equation}
where $\varphi=[\varphi_1,\varphi_2,\varphi_3,\varphi_4]$ is a vector of nuisance functionals. For a given vector $\varphi_{N_A,N_B}$, the estimand in this estimating function is the parameter $\theta(\cdot)$. It solves the census estimating equations
\begin{align}\label{eq - Census Est Eq. Bounds}
\frac{1}{N_A}\frac{1}{N_B}\sum_{i=1}^{N_A}\sum_{j=1}^{N_B}h_s\left(Y_i^A, Y_j^B,Z_i^A,Z_j^B,\theta(x),\varphi_{N_A,N_B}(x)\right)=0\quad \forall x\in[\underline{t},\overline{t}].
\end{align}
and has the form
\begin{align*}
\theta_{N_A,N_B}(x;\varphi_{N_A,N_B}) & = \mathbbm{E}_{F_A\left(\cdot| Z^A=Z^B=1\right)}\left[\frac{(x-Y^A)^{s-1}}{(s-1)!}\,\mathbbm{1}\left[Y^A\leq x\right]\right]\frac{\delta_{11}}{\delta_{11}+\delta_{10}}\,\varphi_{1,N_A,N_B}(x) \\
& + \mathbbm{E}_{F_A\left(\cdot| Z^A=1,Z^B=0\right)}\left[\frac{(x-Y^A)^{s-1}}{(s-1)!}\,\mathbbm{1}\left[Y^A\leq x\right]\right]\frac{\delta_{10}}{\delta_{11}+\delta_{10}}\,\varphi_{2,N_A,N_B}(x)\\
& - \mathbbm{E}_{F_B\left(\cdot| Z^A=Z^B=1\right)}\left[\frac{(x-Y^B)^{s-1}}{(s-1)!}\,\mathbbm{1}\left[Y^B\leq x\right]\right]\frac{\delta_{11}}{\delta_{11}+\delta_{10}}\,\varphi_{4,N_A,N_B}(x)\\
& +\varphi_{3,N_A,N_B}(x),
\end{align*}
for each $x\in[\underline{t},\overline{t}]$.
This estimand targets $\overline{D}_{N_A}^{s}(x)-\underline{D}_{N_B}^{s}(x)$ under specifications of $\varphi_{N_A,N_B}$ that encode the maintined nonresponse assumptions. The next section illustrates the scope of our approach using several examples.

\subsection{Examples}\label{Subsection Examples}
\par To fix ideas, we introduce examples of sharp bounds that our general formulation covers. We defer a formal analysis to Section~\ref{Appendix D} in the Supplementary Material, where we also derive the identified sets of the dominance functions $D_{N_A}^{s}(\cdot)$ and $D_{N_B}^{s}(\cdot)$ under the informational assumptions of the examples.

\par The first example concerns the scenario where there is no information on the nonresponse-generating mechanism available to the practitioner. The worst-case bounds on the contrasts must be used in practice. These bounds summarize what the data, and only the data, say about the dominance contrast~(\ref{eq - contrasts}). While they may be wide in practice, it does not necessarily preclude their use for conducting distributional comparisons.~\cite{Tabri2021} make this point but in the context of ordinal data and first-order stochastic dominance.

\begin{example}\label{Example - WC Bounds}
{\bf Worst-Case Bounds}. Proposition~\ref{prop - WC ID Set} reports the worst-case identified set of the dominance functions and describes their boundaries. The worst-case bounds on the dominance contrast can be derived using this result. In this scenario, the following specification of $\varphi_{N_A,N_B}$ applies: for each $x\in[\underline{t},\overline{t}]$,
\begin{align*}
\varphi_{1,N_A,N_B}(x) & =\varphi_{4,N_A,N_B}(x)=\varphi_{2,N_A,N_B}(x)=\delta_{11}+\delta_{10}\quad\text{and}\\ 
\varphi_{3,N_A,N_B}(x) & =\frac{(x-\underline{Y}^A)^{s-1}}{(s-1)!}\,\delta_{00}\quad\text{where}\quad \underline{Y}^A=\inf\mathcal{X}_A.
\end{align*}
This specification defines $\theta_{N_A,N_B}(\cdot;\varphi_{N_A,N_B})=\overline{D}_{N_A}^{s}(\cdot)-\underline{D}_{N_B}^{s}(\cdot)$, where for each $x\in[\underline{t},\overline{t}]$,
\begin{align*}
\overline{D}_{N_A}^{s}(x) & = \mathbbm{E}_{F_A\left(x| Z^A=Z^B=1\right)}\left[\frac{(x-Y^A)^{s-1}}{(s-1)!}\,\mathbbm{1}\left[Y^A\leq x\right]\right]\,\delta_{11}\nonumber \\
& \qquad +\mathbbm{E}_{F_A\left(x| Z^A=1,Z^B=0\right)}\left[\frac{(x-Y^A)^{s-1}}{(s-1)!}\,\mathbbm{1}\left[Y^A\leq x\right]\right]\,\delta_{10}+\frac{(x-\underline{Y}^A)^{s-1}}{(s-1)!}\,\delta_{00}\,\text{and}\\
\underline{D}_{N_B}^{s}(x) & =  \mathbbm{E}_{F_B\left(x| Z^A=Z^B=1\right)}\left[\frac{(x-Y^B)^{s-1}}{(s-1)!}\,\mathbbm{1}\left[Y^B\leq x\right]\right]\,\delta_{11}.
\end{align*}
Under Assumption~\ref{Assumption - Compactness of X_K}, these bounds are finite, as $\underline{Y}^A\in\mathbb{R}$, holds.
\end{example}

\par The second example presents bounds that arise from imposing the MCAR assumption on unit nonresponders. While implausible for modeling this type of nonresponse in voluntary surveys, the dominant practice by survey-designers has been to use weights to implement this assumption.
\begin{example}\label{Example - MCAR and WC Bounds}
{\bf MCAR for Unit Nonresponse}. Assuming MCAR for unit nonresponse means
\begin{align}\label{eq - Example MCAR}
F_K\left(x| Z^A=Z^B=1\right)=F_K\left(x| Z^A=Z^B=0\right)\quad\forall x\in\mathcal{X}_K,
\end{align}
holds, for $K=A,B$. Proposition~\ref{prop - MCAR UNR WC WNR ID Set} reveals that this assumption point-identifies $D_{N_A}^{s}(\cdot)$ and only partially identifies $D_{N_B}^{s}(\cdot)$, for each $s\in\mathbb{Z}_+$, since $F_B(x|Z^A=1,Z^B=0)$ is not pinned down by the conditions~(\ref{eq - Example MCAR}). Thus, in conjunction with the WC lower bound for $F_B(x|Z^A=1,Z^B=0)$, the following specification of $\varphi_{N_A,N_B}$ encodes conditions~(\ref{eq - Example MCAR}):
\begin{align*}
\varphi_{1,N_A,N_B}(x) =\varphi_{4,N_A,N_B}(x)=\frac{(\delta_{11}+\delta_{00})(\delta_{11}+\delta_{10})}{\delta_{11}},\quad\varphi_{2,N_A,N_B}(x) =1,
\end{align*}
and $\varphi_{3,N_A,N_B}(x) = 0$.
This specification defines $\theta_{N_A,N_B}(\cdot;\varphi_{N_A,N_B})=\overline{D}_{N_A}^{s}(\cdot)-\underline{D}_{N_B}^{s}(\cdot)$, where for each $x\in[\underline{t},\overline{t}]$,
\begin{align*}
\overline{D}_{N_A}^{s}(x) & = \mathbbm{E}_{F_A\left(x| Z^A=Z^B=1\right)}\left[\frac{(x-Y^A)^{s-1}}{(s-1)!}\,\mathbbm{1}\left[Y^A\leq x\right]\right]\,(\delta_{11}+\delta_{00})\nonumber \\
& \qquad +\mathbbm{E}_{F_A\left(x| Z^A=1,Z^B=0\right)}\left[\frac{(x-Y^A)^{s-1}}{(s-1)!}\,\mathbbm{1}\left[Y^A\leq x\right]\right]\,\delta_{10}\quad\text{and}\\
\underline{D}_{N_B}^{s}(x) & =  \mathbbm{E}_{F_B\left(x| Z^A=Z^B=1\right)}\left[\frac{(x-Y^B)^{s-1}}{(s-1)!}\,\mathbbm{1}\left[Y^B\leq x\right]\right]\,(\delta_{11}+\delta_{00}).
\end{align*}
\end{example}

\par The third example presents bounds that arise from constraining the nonresponse propensities conditional on the outcome, based on ideas in Section 4.2 of~\cite{manski2016}. The interesting aspect of this example is that information in the form of shape constraints on $\text{Prob}\left(Z^A=Z^B=0| Y^K\leq x\right)$ for $K=A,B$, and $\text{Prob}\left(Z^A=1,Z^B=0| Y^B\leq x\right)$ can be incorporated into our inference procedure.
\begin{example}\label{Example - NBD of MCAR}
{\bf Restrictions on Unit and Wave Nonresponse Propensities}. This example constrains the distribution of missing data through bounds on the conditional probabilities
$\text{Prob}\left(Z^A=1,Z^B=0| Y^B\leq x\right)$ and $\text{Prob}\left(Z^A=Z^B=0| Y^K\leq x\right)$ for $K=A,B$. For each $x\in\mathcal{X}_A$, suppose that $$\delta_{00}\,L^{A}_{00}(x)\leq \text{Prob}\left(Z^A=Z^B=0| Y^A\leq x\right)\leq U^{A}_{00}(x) \,\delta_{00},$$ where $\text{Prob}\left(Z^A=Z^B=0| Y^A\leq x\right) = \frac{\sum_{i,j}\mathbbm{1}\left[Y_i^A\leq x, Z_i^A=Z^B_j=0\right]}{\sum_{i}\mathbbm{1}\left[Y^A_i\leq x\right]}$,
and $L^{A}_{00}$ and $U^{A}_{00}$ are CDFs that are predesignated by the practitioner and satisfy $L^{A}_{00}\leq U^{A}_{00}$. Bounds on the probability $\text{Prob}\left(Z^A=Z^B=0| Y^B\leq x\right)$ similar to the above also hold, but with CDFs $L^{B}_{00}$ and $U^{B}_{00}$. Furthermore, consider bounds on the probability $\text{Prob}\left(Z^A=1,Z^B=0| Y^B\leq x\right)$: for each $x\in\mathcal{X}_B$, $\delta_{10}\,L_{10}(x) \leq \text{Prob}\left(Z^A=1,Z^B=0| Y^B\leq x\right)\leq U_{10}(x)\,\delta_{10}$, where $L_{10}$ and $U_{10}$ are CDFs that satisfy $L_{10}\leq U_{10}$, and $\text{Prob}\left(Z^A=1,Z^B=0| Y^B\leq x\right)= \frac{\sum_{i,j}\mathbbm{1}\left[Y_j^B\leq x, Z_i^A=1,Z^B_j=0\right]}{\sum_{j}\mathbbm{1}\left[Y^B_j\leq x\right]}$. In practice, shape constraints on the above conditional probabilities can be incorporated into our inference procedure by imposing them on the bounding functions.

\par Now consider the CDFs given by $\overline{G}_A(x)=\frac{1-\delta_{00}}{1-U^{A}_{00}(x)\delta_{00}}$ for each $x\in\mathcal{X}_A$ and $\underline{G}_B(x) =\frac{\delta_{11}}{1-L^{B}_{00}(x)\delta_{00}-L_{10}(x)\delta_{10}}$ for each $x\in\mathcal{X}_B$.
We define their dominance functions recursively: for $G\in\{\overline{G}_A,\underline{G}_B\}$, these functions are $D_{G}^1 (x)= G(x)$ and  $D^{s}_{G}(x) =\int_{-\infty}^{x}D^{s-1}_{G}(u)\, \text{d}u$ for $s=2,3,4,\ldots$. Furthermore, we define the recursively defined functions on $\mathbb{R}^2$ given by $R_0(y,x)=\mathbbm{1}[y\leq x]$, $R_j(y,x)=\int_{-\infty}^x R_{j-1}(y,u)\,du$ for $j=1,2,\ldots$.

\par Proposition~\ref{prop - NBD of MCAR ID Set} reports the identified sets of the dominance functions under these informational conditions. Using  this result, the specification of $\theta_{N_A,N_B}(\cdot;\varphi_{N_A,N_B})=\overline{D}_{N_A}^{s}(\cdot)-\underline{D}_{N_B}^{s}(\cdot)$, where for each $x\in[\underline{t},\overline{t}]$, takes the form
\begin{align*}
\overline{D}_{N_A}^{s}(x) & =\left(\frac{\delta_{11}}{1-\delta_{00}}\,F_A\left(x| Z^A=Z^B=1\right)+\frac{\delta_{10}}{1-\delta_{00}}\,F_A\left(x| Z^A=1,Z^B=0\right)\right)D^{s}_{\overline{G}_A}(x)\\
& \qquad-\mathbbm{1}[s>1]\frac{\delta_{11}}{1-\delta_{00}}\sum_{j=0}^{s-2}\mathbbm{E}_{F_A\left(\cdot| Z^A=Z^B=1\right)}\left[D^{s}_{\overline{G}_A}(Y^A)\,R_j(Y^A,x)\right]\\
& \qquad-\mathbbm{1}[s>1]\frac{\delta_{10}}{1-\delta_{00}}\sum_{j=0}^{s-2}\mathbbm{E}_{F_A\left(\cdot| Z^A=1,Z^B=0\right)}\left[D^{s}_{\overline{G}_A}(Y^A)\,R_j(Y^A,x)\right]\\
\underline{D}_{N_B}^{s}(x) & =F_B\left(x| Z^A=Z^B=1\right)\,D^{s}_{\underline{G}_B}(x)\\
 & \qquad -\mathbbm{1}[s>1]\sum_{j=0}^{s-2}\mathbbm{E}_{F_B\left(\cdot| Z^A=Z^B=1\right)}\left[D^{s}_{\underline{G}_B}(Y^B)\,R_j(Y^B,x)\right].
\end{align*}
Therefore, the following specification of $\varphi_{N_A,N_B}$ encodes this assumption on unit and wave nonrepsonse: for each $x\in[\underline{t},\overline{t}]$, $\varphi_1(x)=\varphi_2(x)=D^{s}_{\overline{G}_A}(x)$, $\varphi_4(x) = \frac{1-\delta_{00}}{\delta_{11}}\,D^{s}_{\underline{G}_B}(x)$, and
\begin{align*}
\varphi_3(x) & =\mathbbm{1}[s>1]\sum_{j=0}^{s-2}\mathbbm{E}_{F_B\left(\cdot| Z^A=Z^B=1\right)}\left[D^{s}_{\underline{G}_B}(Y^B)\,R_j(Y^B,x)\right]\\
 & \qquad-\mathbbm{1}[s>1]\frac{\delta_{11}}{1-\delta_{00}}\sum_{j=0}^{s-2}\mathbbm{E}_{F_A\left(\cdot| Z^A=Z^B=1\right)}\left[D^{s}_{\overline{G}_A}(Y^A)\,R_j(Y^A,x)\right]\\
& \qquad-\mathbbm{1}[s>1]\frac{\delta_{10}}{1-\delta_{00}}\sum_{j=0}^{s-2}\mathbbm{E}_{F_A\left(\cdot| Z^A=1,Z^B=0\right)}\left[D^{s}_{\overline{G}_A}(Y^A)\,R_j(Y^A,x)\right].
\end{align*}
\end{example}

\par The fourth example is a neighborhood-based approach to a sensitivity analysis of empirical conclusions to departures from the MCAR assumption for unit and wave nonresponse. It is based on~\cite{klinesantos2013}, who put forward a construction using the maximal Kolmogorov-Smirnov distance between the distributions of missing and observed outcomes, which allows a determination of the critical level of selection for which hypotheses regarding the dominance contrast cannot be rejected.
\begin{example}\label{Example - Kline and Santos}
The MCAR assumption for both unit and wave nonresponse is
\begin{equation}
\label{eq - Example MCAR-KS}
\begin{aligned}
F_K\left(x| Z^A=Z^B=1\right) & =F_K\left(x| Z^A=Z^B=0\right)\quad\forall x\in\mathcal{X}_K,K=A,B,\quad\text{and}\\
F_B\left(x| Z^A=Z^B=1\right) & =F_B\left(x| Z^A=1,Z^B=0\right)\quad\forall x\in\mathcal{X}_B.
\end{aligned}
\end{equation}
The approach of~\cite{klinesantos2013} is to build neighborhoods for the non-identified CDFs, \\ $F_K\left(\cdot| Z^A=Z^B=0\right)$ and $F_B\left(\cdot| Z^A=1,Z^B=0\right)$, according to the maximal Kolmogorov-Smirnov distance to quantify their divergence from the identified CDFs. Proposition~\ref{prop - Kline Santos ID Set} reports the identified sets of the dominance functions $D_{N_K}^{s}(\cdot)$ under the conditions~(\ref{eq - Example MCAR-KS}). For each $K\in\{A,B\}$, the boundary of the identified set depends on parameter $\gamma_A,\gamma^{00}_B,\gamma^{10}_B\in[0,1]$. The parameters $\gamma_A$ and $\gamma^{00}_B$ represent the fraction in populations $A$ and $B$, respectively, of unit nonresponders whose outcome variable is not well represented by the distribution $F_A\left(\cdot| Z^A=Z^B=1\right)$ and $F_B\left(\cdot| Z^A=Z^B=1\right)$, respectively. Similarly, the parameter $\gamma^{10}_B$ represent the fraction in population $B$ of wave nonresponders whose outcome variable is not well represented by the distribution $F_B\left(\cdot| Z^A=Z^B=1\right)$.

\par Using this result, we obtain $\theta_{N_A,N_B}(\cdot;\varphi_{N_A,N_B})=\overline{D}_{N_A}^{s}(\cdot)-\underline{D}_{N_B}^{s}(\cdot)$, where for each $x\in[\underline{t},\overline{t}]$ and $\gamma_A,\gamma_B\in[0,1]$
\begin{align*}
\overline{D}_{N_A}^{s}(x) & = \mathbbm{E}_{F_A\left(\cdot| Z^A=Z^B=1\right)}\left[\frac{(x-Y^A)^{s-1}}{(s-1)!}\,\mathbbm{1}\left[Y^A\leq x\right]\right]\,(\delta_{11}+(1-\gamma_A)\delta_{00})\nonumber \\
& \qquad +\mathbbm{E}_{F_A\left(\cdot| Z^A=1,Z^B=0\right)}\left[\frac{(x-Y^A)^{s-1}}{(s-1)!}\,\mathbbm{1}\left[Y^A\leq x\right]\right]\,\delta_{10}
+ \frac{(x-\underline{Y}^A)^{s-1}}{(s-1)!}\,\gamma_A \delta_{00}\,\text{and}\\
\underline{D}_{N_B}^{s}(x) & =  \mathbbm{E}_{F_B\left(\cdot| Z^A=Z^B=1\right)}\left[\frac{(x-Y^B)^{s-1}}{(s-1)!}\,\mathbbm{1}\left[Y^B\leq x\right]\right]\,(\delta_{11}+(1-\gamma^{00}_B)\delta_{00}+(1-\gamma^{10}_B)\delta_{10}).
\end{align*}
Therefore, the following specification of $\varphi_{N_A,N_B}$ encodes this assumption on unit and wave nonrepsonse: for each $x\in[\underline{t},\overline{t}],$
\begin{align*}
\varphi_1(x) & =\frac{(\delta_{11}+(1-\gamma_A)\delta_{00})(\delta_{11}+\delta_{10})}{\delta_{11}},\;\varphi_2(x)=\delta_{11}+\delta_{10}, \\
\,\varphi_3(x) & =\frac{(x-\underline{Y}^A)^{s-1}}{(s-1)!}\,\gamma_A \delta_{00},\;\;\text{and}\;\;\varphi_4(x)=\frac{(\delta_{11}+(1-\gamma^{00}_B)\delta_{00}+(1-\gamma^{10}_B)\delta_{10})(\delta_{11}+\delta_{10})}{\delta_{11}}.
\end{align*}
\end{example}

\subsection{Sample Estimating Equations}
\par In the above examples, the estimating function~(\ref{eq - Estimating Function}) depends only on observables and the nuisance parameter. For a maintained assumption on nonresponse, these examples show that there is a unique value of the nuisance parameter that encodes it into the contrasts $\left\{\overline{D}_{N_A}^{s}(x) -\underline{D}_{N_B}^{s}(x),x\in[\underline{t},\overline{t}]\right\}$. We denote this value of the nuisance parameter by $\varphi_{N_A,N_B}$. It can be consistently estimated using a plug-in procedure with the achieved sample $\{Y_i,i\in V\}$ and the inclusion probabilities $\pi_i=\text{Prob}\left[i\in V\right]$ for $i\in V$, where $Y_i$ is defined in~(\ref{eq - Sample pairs}) for each $i$, $V\subset \{1,\ldots, N_A\}$ is a survey sample of units from population $\mathcal{P}_{A,N_A}$ (i.e., wave 1). In practice the inclusion probabilities are reported as \emph{design} weights, $\{W_{\ell},\ell\in V\}$ satisfying the normalization $W_\ell/k=\pi^{-1}_\ell/\sum_{i\in V}\pi^{-1}_i$, where $k=\sum_{i\in V}$.
\par The estimator of $\theta_{N_A,N_B}(\cdot;\varphi_{N_A,N_B})$ solves the sample-analogue census estimating equations~(\ref{eq - Census Est Eq. Bounds}):$$k^{-1}\sum_{i\in V}W_i\,h_s\left(Y_i^A, Y_i^B,Z_i^A,Z_i^B,\theta(x),\hat{\varphi}(x)\right)=0\;\quad \forall x\in[\underline{t},\overline{t}],$$ where $\hat{\varphi}$ is the plug-in estimator of $\varphi_{N_A,N_B}$. We can simplify this set of equations using the fact that the moment function $h_s\left(Y_i^A, Y_i^B,Z_i^A,Z_i^B,\theta(x),\varphi(x)\right)$ equals zero for each $i\in V-U$, where $U=\{i\in V: Z_i^{A}=1\}$ is the subsample corresponding to the unit responders, and that
\begin{align}\label{eq - scaled design weight}
\frac{W_i}{k}=\frac{\pi^{-1}_i}{\sum_{i\in V}\pi^{-1}_i}=\frac{\pi^{-1}_i}{\sum_{i\in U}\pi^{-1}_i}\,\frac{\sum_{i\in U}\pi^{-1}_i}{\sum_{i\in V}\pi^{-1}_i},
\end{align}
holds. As the ratio $\frac{\sum_{i\in U}\pi^{-1}_i}{\sum_{i\in V}\pi^{-1}_i}$ is positive, let $\{W^{\prime}_i,i\in U\}$ be such that
\begin{align}\label{eq - Design weights-responders}
\frac{W^{\prime}_i}{n}=\frac{\pi^{-1}_i}{\sum_{i\in U}\pi^{-1}_i}\quad\forall i\in U,\quad \text{where} \quad n=\sum_{i\in U}.
\end{align}
Then, we simplify the sample-analogue of the census estimating equations as \\
$0 =\sum_{i\in V}\frac{W_i}{k}\,h_s\left(Y_i^A, Y_i^B,Z_i^A,Z_i^B,\theta(x),\hat{\varphi}(x)\right)=\sum_{i\in U}\frac{W^\prime_i}{n}\,h_s\left(Y_i^A, Y_i^B,Z_i^A,Z_i^B,\theta(x),\hat{\varphi}(x)\right)$ for each $x\in[\underline{t},\overline{t}]$,
by substituting out $W_i/k$ using~(\ref{eq - scaled design weight}) and dividing out the common factor $\frac{\sum_{i\in U}\pi^{-1}_i}{\sum_{i\in V}\pi^{-1}_i}$. The solution of the resulting equation is the sample-analogue estimator
\begin{align}\label{eq - theta hat}
\hat{\theta}(x;\hat{\varphi}(x))=\frac{1}{n}\sum_{i\in U}W^\prime_i H_i (x;\hat{\varphi}(x)) \quad \forall x\in[\underline{t},\overline{t}],
\end{align}
where for each $i\in U$, $H_i (x;\hat{\varphi}(x))$ is given by
\begin{equation}
\begin{aligned}
\frac{(x-Y_i^A)^{s-1}}{(s-1)!} & \left[\mathbbm{1}\left[Y_i^A\leq x,Z_i^A=Z_i^B=1\right]\hat{\varphi}_1(x)+\mathbbm{1}\left[Y_i^A\leq x,Z_i^A=1,Z_i^B=0\right]\hat{\varphi}_2(x)\right]\\
 & \qquad+\hat{\varphi}_3(x)\mathbbm{1}\left[Z_i^A=1\right]-\frac{(x-Y_i^B)^{s-1}}{(s-1)!}\mathbbm{1}\left[Y_i^B\leq x,Z_i^A=Z_i^B=1\right]\hat{\varphi}_4(x).
 \label{eq: moment function}
\end{aligned}
\end{equation}

\section{Testing Procedure}\label{Section - Testing Procedure}
\par This section introduces the statistical procedure for implementing the hypothesis testing problem~(\ref{eq - boundary of null}). The procedure is based on the empirical likelihood test of~\cite{davidsonduclos2013}. The test focuses on the boundary of $H_0^1$ in~(\ref{eq - boundary of null}). For a pair of CDFs $F_{N_A}$ and $F_{N_B}$ of the outcome variable of interest, the rejection probability will be highest on the subset of the boundary of the null hypothesis $H_0^1$ where we have exactly one $x\in[\underline{t},\overline{t}]$ such that $\overline{D}_{N_A}^s (x) = \underline{D}_{N_B}^s (x)$. Therefore, we impose the restriction corresponding to the boundary of $H_0^1$ for a single $x\in[\underline{t},\overline{t}]$. To maximize the pseudo-empirical likelihood function (PELF) under this restriction, for each $x\in[\underline{t},\overline{t}]$, compute the maximum PELF whilst imposing $\overline{D}_{N_A}^s (x) = \underline{D}_{N_B}^s (x)$, which is equivalent to $\theta_{N_A,N_B}(x;\varphi_{N_A,N_B}(x))=0$. This corresponds to the maximization problem:
\begin{equation}\label{eq: PELF}
\begin{aligned}
\max_{\Vec{p} \in (0,1]^{n}} \sum_{i\in U} W^\prime_i \log(p_i)\; &\text{s.t.}\;\; p_i>0 \; \forall i,\;\;\sum_{i\in U} W^\prime_i p_i=1,\;\;\text{and}\\
&\sum_{i\in U} W^\prime_i p_i H_i (x;\hat{\varphi}(x)) = 0,
\end{aligned}
\end{equation}
where $\hat{\varphi}$ is a plug-in estimator of $\varphi$, and $H_i (x;\hat{\varphi}(x))$ is the moment function~(\ref{eq: moment function}). The moment condition in \eqref{eq: PELF} imposes the restriction $\overline{D}_{N_A}^s (x) = \underline{D}_{N_B}^s (x)$, by tilting the estimator $\hat{\theta}(x;\hat{\varphi}(x))$ through the probabilities $\{p_i, i\in U\}$ on the subsample $\{Y_i, i\in U\}$.

\par For a fixed $x \in [\underline{t},\overline{t}]$ denote by $L_R (x)$ the maximized value of the constrained maximization problem \eqref{eq: PELF}. Additionally, let $L_{UR} = \sum_{i\in U} W^\prime_i \log(n^{-1})$, which is the unconstrained maximum value of the PELF which corresponds to \eqref{eq: PELF} omitting the constraint $\sum_{i\in U} W^{\prime}_i p_i H_i (x;\hat{\varphi}(x)) = 0$. Then the pseudo-empirical likelihood-ratio statistic for the test problem~(\ref{eq - boundary of null}) is
\begin{align}
\label{LR test statistic}
    LR = \begin{cases}
    \min\limits_{x \in [\underline{t},\overline{t}]} 2\left(L_{UR} - L_R(x)\right)/\widehat{\text{Deff}}(x) & \text{if} \ \hat{\theta}(x;\hat{\varphi}(x))<0 \quad \forall x \in [\underline{t},\overline{t}] \\
    0, & \text{otherwise}
    \end{cases}
\end{align}
where $\widehat{\text{Deff}}(x)$ is an estimator of the design-effect
\begin{align}\label{eq - deff}
\text{Deff}(x) = \left[n^{-1}\sum_{i\in U}(W^\prime_i/n)H^2_i (x;\hat{\varphi}(x))\right]^{-1}\,Var\left(\hat{\theta}(x;\hat{\varphi}(x)) \right).
\end{align}
The variance calculation in the numerator of this expression is with respect to the randomness emanating from the survey's design, so that $\widehat{\text{Deff}}(\cdot)$ coincides with $\text{Deff}(\cdot)$ in~(\ref{eq - deff}), except that we replace the design-variance $Var\left(\hat{\theta}(x;\hat{\varphi}(x)) \right)$ with a consistent estimator of it. Such estimators are abundant and well-established, and which one to use in practice depends on how much information the practitioner has from the survey designers; e.g., joint selection probabilities enables the use of Taylor linearization, and the availability of replication design weights enables the use of the jackknife (see, for example, Chapter 4 of~\citealp{fuller2009}). For each $x\in [\underline{t},\overline{t}]$, if the moment equality constraint in~(\ref{eq: PELF}) holds in the population, so that $\theta_{N_A,N_B}(x;\varphi_{N_A,N_B}(x))=0,$ then the denominator of $\text{Deff}(x)$, given by $\sum_{i\in U}(W^\prime_i/n)H^2_i (x)$, is the H\'ajek estimator (\citealp{hajek1971}) of the population variance
 \begin{align}\label{eq - S_NA_NB}
 S_{N_A,N_B}(x)=\frac{1}{N_A} \frac{1}{N_B}\sum_{i=1}^{N_A}\sum_{j=1}^{N_B}\left[h_s\left(Y_i^A, Y_j^B,Z_i^A,Z_j^B,0,\varphi_{N_A,N_B}(x)\right)\right]^2.
 \end{align}

\par As the test statistic \eqref{LR test statistic} is formulated piece-wise, we only implement the procedure if we observe $\hat{\theta}(x;\hat{\varphi}(x))<0$ for all $x \in [\underline{t},\overline{t}]$, in the sample. For a fixed nominal level $\alpha \in (0,1)$, the decision rule of the test is to
\begin{align}\label{eq - decision rule}
    \text{reject} \ H_0^1 \iff LR > c(\alpha),
\end{align}
where $c(\alpha)$ is be the $1-\alpha$ quantile of the chi-squared distribution with one degree of freedom. The next section presents the asymptotic framework for this testing procedure and establishes its asymptotic properties under the null and alternative hypotheses.

\begin{remark}\label{Remark - Design Effect}
The design-effect arises in the Taylor expansion of $L_{UR} - L_R(x)$ under the null and alternative hypotheses; see Lemmas~\ref{Rami Lemma 1} and~\ref{Rami Lemma 2} for results under the null, and see Lemmas~\ref{power theorem Lemma 1} and~\ref{power theorem Lemma 2} for results under the alternative. If we do not account for it as in~(\ref{LR test statistic}), then the testing procedure would suffer from a size distortion under the null, since the distributional limit of the test statistic would be a scaled $\chi^2_1$, with scale involving the design variance $Var\left(\hat{\theta}(x;\hat{\varphi}(x)\right)$. Under the alternative, the test would suffer from distortion in type 2 error, and hence, distort its power. Thus, accounting for the survey's design through the design-effect adjustment in $LR$, corrects these distortions.
\end{remark}
\section{Asymptotic Framework and Results}\label{Section - Asymptotic Framework and Results}
Given a sampling scheme for wave 1, denote by $\mathcal{U}_{N_A} = \{ V : V \subset \{1,2,\hdots, N_A \}\}$ the set of all possible samples from the sampling frame $\{1,2,\hdots, N_A \}$ according to the survey's design. Given a sample $V \in \mathcal{U}_{N_A}$, the survey follows those same sampled units through time to period $B$ to generate a panel survey between periods $A$ and $B$, yielding the dataset: $\left\{(Y_i^A, Z_i^A), (Y_i^B, Z_i^B) \right\}_{i \in U}$. For each $K\in\{A,B\}$, let $\mathcal{M}_{N_K} = \left\{\mathcal{P}_{N_K} : Y^K_i \in \mathcal{X}_K \ \text{and} \ Z^K_i \in \{0,1\} \ \text{for} \ i=1,2,\hdots, N_K \right\}$. This set collects all possible finite populations $\mathcal{P}_{N_K}$. Then the set of all finite populations for both periods $A$ and $B$ of sizes $N_A$ and $N_B$ is given by:
\begin{equation*}
    \mathcal{M}_{N_A, N_B} = \left\{\Pi = \{\mathcal{P}_{N_A}, \mathcal{P}_{N_B} \} : \mathcal{P}_{N_K} \in  \mathcal{M}_{N_K}, \ K = A,B \right\}.
\end{equation*}
Following~\cite{bleuerkratina2005}, for a fixed finite population $\Pi_{N_A,N_B}$ the probability sampling design associated with a sampling scheme on $\Pi_{N_A,N_B}$ is the function:
\begin{equation}
    P : \sigma_{N_A} \times \mathcal{M}_{N_A, N_B} \rightarrow [0,1]
    \label{eq: probability measure}
\end{equation}
such that,
\begin{enumerate}[(i)]
    \item $\sigma_{N_A}$ is a sigma-algebra generated by $\mathcal{U}_{N_A}$;
    \item $P(V,\cdot) > 0 \ \text{is Borel measurable in } \mathcal{M}_{N_A, N_B}, \ \text{for all} \ V \in \mathcal{U}_{N_A}$; and
    \item $P(\cdot, \Pi) \ \text{is a probability measure on } \mathcal{U}_{N_A}, \ \text{for all} \ \Pi \in \mathcal{M}_{N_A, N_B}$.
\end{enumerate}
The design probability space is $(\mathcal{U}_{N_A}, \sigma_{N_A}, P)$ with $P(V, \cdot)> 0$ for all $V \in \mathcal{U}_{N_A}$ and $\sum\limits_{V \in \mathcal{U}_{N_A}} P(U,\cdot) = 1$. Under this setup the survey sample size, $k = \sum\limits_{i \in V}$, is a random variable and all uncertainty is generated from the probability sampling scheme $P$. We follow the notation in finite population literature to indicate that ``$\mid \Pi_{N_A,N_B}$" means the sample, $\left\{(Y_i^A, Z_i^A), (Y_i^B, Z_i^B) \right\}_{i \in U}$, is drawn from the population $\Pi_{N_A,N_B}$. Therefore, for a fixed $\Pi_{N_A, N_B} \in \mathcal{M}_{N_A, N_B}$, $\mathbbm{E}(\cdot \mid \Pi_{N_A, N_B})$ and $Var(\cdot \mid \Pi_{N_A, N_B})$ denote the expectation and variance taken over all possible samples from $\Pi_{N_A,N_B}$ with respect to the probability space $(\mathcal{U}_{N_A}, \sigma_{N_A}, P)$. The survey design-weights $W_i$ satisfy the normalization~(\ref{eq - scaled design weight}), where $\pi_i = \sum_{\{V \in \mathcal{U}_{N_A} \ : \ i \in V \}} P(V, \Pi_{N_A,N_B})$ is the inclusion probability of element $i$ into the sample. The next sections employ this finite population framework to establish the asymptotic properties of the proposed testing procedure in~(\ref{eq - decision rule}), as $N_A$ and $N_B$ diverge.

\subsection{Asymptotic Null Properties}\label{Section - Unifomr Asy Valid}
Firstly, we define the set of finite populations that are compatible with the null hypothesis $H_0^1$. For any $N_A, N_B \in \mathbb{N}$ this set is given by
\begin{align*}
    \mathcal{M}_{N_A , N_B}^0  & = \left\{ \Pi \in \mathcal{M}_{N_A , N_B} : \max_{x \in [\underline{t}, \overline{t}]} \theta_{N_A,N_B}(x;\varphi_{N_A,N_B}(x)) \geq 0 \right\},
\end{align*}
where $\theta_{N_A,N_B}(x;\varphi_{N_A,N_B}(x))=\overline{D}_{N_A}^s (x) - \underline{D}_{N_B}^s (x)$  for each $x \in [\underline{t}, \overline{t}]$. Then, the true population, $\Pi_ 0$, satisfies $H_0^1$ if and only if $\Pi_0 \in \mathcal{M}_{N_A , N_B}^0$. For any $N_A, N_B \in \mathbb{N}$, the size of the test is given by:  $\sup_{\Pi \in \mathcal{M}_{N_A , N_B}^0 } \mathbbm{E}\left(\mathbbm{1}[LR > c(\alpha)] \, \mid \, \Pi \right)$. For a fixed $\Pi\in\mathcal{M}_{N_A , N_B}^0$, $\mathbbm{E}\left(\mathbbm{1}[LR > c(\alpha)] \, \mid \, \Pi \right)$ is the probability of rejecting $H_0^1$ taken over all possible samples from $\Pi$ under the probability sampling design \eqref{eq: probability measure}. Therefore, the size of the test is the largest rejection probability over all finite populations in the model $\mathcal{M}_{N_A , N_B}^0$. To approximate the asymptotic size, we embed $\mathcal{M}_{N_A , N_B}^0$ into a hypothetical sequence of models $\left\{ \mathcal{M}_{N_A , N_B}^0, \; N_A, N_B = 1,2,\hdots \right\}$
that satisfy enough restrictions so that for a given nominal level $\alpha \in (0,1)$:
\begin{equation}
    \limsup_{N_A, N_B \rightarrow \infty} \sup_{\Pi \in \mathcal{M}_{N_A , N_B}^0 } \mathbbm{E}\left(\mathbbm{1}[LR > c(\alpha)] \, \mid \, \Pi \right) \leq \alpha.
    \label{eq: asymptotic size}
\end{equation}
The approach to proving~(\ref{eq: asymptotic size}) uses a characterization of it in terms of sequences of finite populations,
$\left\{\Pi_{N_A,N_B}=\{\mathcal{P}_{A,N_A},\mathcal{P}_{B,N_B}\},N_A,N_B =1,2,\ldots\right\},$ where $\Pi_{N_A,N_B}\in\mathcal{M}^0_{N_A,N_B}$ for all $N_A$ and $N_B,$ and the asymptotic distribution of $\{LR\mid \Pi_{N_A,N_B}\}_{N_A,N_B=1}^{+\infty}$ is calculated along this hypothetical infinite sequence. Recall that $LR\mid \Pi_{N_A,N_B}$ means the statistic, $LR,$ is a function of the survey samples selected from population $\Pi_{N_A,N_B}$.

\par The bounds generated from nonresponse assumptions that we consider are sharp. Under Assumption~\ref{Assumption - Compactness of X_K}, this sharpness implies $\varphi_i\in L^\infty ([\underline{t}, \overline{t}])$ holds, for each $i=1,2,3,4$, where $L^\infty ([\underline{t}, \overline{t}])$ is the space of uniformly bounded measurable functions from $[\underline{t}, \overline{t}]$ into $\mathbb{R}$. The reason is that the worst-case bounds are finite everywhere on $[\underline{t}, \overline{t}]$ under this assumption (see Example 1), so that $\varphi_i$ being unbounded on $[\underline{t},\overline{t}]$ results in bounds that are \emph{not} sharp. Let $\Psi$ denote the vector space of 4-dimensional valued functions, with each component an element of $L^\infty ([\underline{t}, \overline{t}])$. For $\varphi\in\Psi$, the norm of this space is $\|\varphi\|_{\Psi}=\sup_{i=1,2,3,4}\sup_{x\in[\underline{t},\overline{t}]}|\varphi_i(x)|$.

\par Next, we describe the conditions on the surveys' designs for obtaining~(\ref{eq: asymptotic size}). For a given sequence of finite populations $\left\{\Pi_{N_A,N_B}\right\}_{N_A,N_B=1}^{+\infty},$ the conditions we impose on the designs of the surveys are given by the following assumption.
\begin{assumption}\label{Assumption 1}
Fix $s \in \mathbb{N}$. For a sequence of finite populations $\{ \Pi_{N_A , N_B} \}_{{N_A , N_B} = 1}^\infty$ we impose the following conditions on the survey's design.
\begin{enumerate}
    \item $\mathbbm{E}( n \mid \Pi_{N_A, N_B}) \rightarrow \infty \ \text{as} \ N_A, N_B \rightarrow \infty$.
    \item $\|\hat{\varphi}-\varphi_{N_A,N_B}\|_{\Psi} \mid \Pi_{N_A, N_B}\pconv 0$ as $N_A, N_B \rightarrow \infty $.
    \item $\max\limits_{x \in [\underline{t}, \overline{t}]} \left| \hat{\theta}(x;\hat{\varphi}(x)) - \theta_{N_A,N_B}(x;\varphi_{N_A,N_B}(x)) \right| \mid \Pi_{N_A, N_B} \pconv 0$ as $N_A, N_B \rightarrow \infty $.
    \item $\text{For} \ N_A, N_B = 1,2,\hdots, \ Var\left(\hat{\theta}(x;\hat{\varphi}(x)) \mid \Pi_{N_A, N_B} \right) > 0 \ \text{for each} \ x \in [\underline{t},\overline{t}]$.
    \item $\dfrac{\hat{\theta}(\cdot;\hat{\varphi}(\cdot)) -  \theta_{N_A,N_B}(\cdot;\varphi_{N_A,N_B}(\cdot))}{\sqrt{Var\left(\hat{\theta}(\cdot;\hat{\varphi}(\cdot))\mid \Pi_{N_A, N_B} \right)}} \mid \Pi_{N_A, N_B} \rightsquigarrow \mathbb{G} \ \text{in} \ L^\infty ([\underline{t}, \overline{t}])$ as $N_A,N_B \rightarrow\infty$, \\ \\
    where $\rightsquigarrow$ denotes weak convergence and $\mathbb{G}$ is a zero mean Gaussian process.
    \item $\widehat{Var}\left(\hat{\theta}(\cdot;\hat{\varphi}(\cdot))\right)$ satisfies ${\displaystyle\max_{x \in [\underline{t}, \overline{t}]} \left| \frac{Var\left(\hat{\theta}(x;\hat{\varphi}(x))\right)}{\widehat{Var}\left(\hat{\theta}(x;\hat{\varphi}(x))\right)} - 1 \right| \mid \Pi_{N_A, N_B}\pconv 0}$  \\ as $N_A, N_B \rightarrow \infty$.
    \item The above conditions hold for all subsequences $\left\{ \Pi_{{N_A}_m , {N_B}_m } \right\}_{m=1}^\infty$ of $\left\{ \Pi_{N_A, N_B} \right\}_{N_A, N_B =1}^\infty$.
\end{enumerate}
\end{assumption}
\noindent These conditions are versions of commonly used large-sample properties in the survey sampling and partial identification inference literatures; see, for example,~\cite{zhaowu2020} and~\cite{Andrews-Soares2010}. Condition 1 imposes the divergence of the mean subsample size of $U$ as the population totals diverge. Condition 2 imposes the design-consistency of $\hat{\varphi}$ in the norm $\|\cdot\|_{\Psi}$. Condition 3 imposes design-consistency of $\hat{\theta}(\cdot;\hat{\varphi})$, with uniformity over $[\underline{t}, \overline{t}]$, which can be justified by the preceding condition and the Continuous Mapping Theorem, as the estimand is a linear function of the nuisance parameter. Condition 4 imposes positive design-variance for the statistic $\hat{\theta}(\cdot;\hat{\varphi})$. Condition 5 requires that a design-based functional central limit theorem holds for the standardized version of $\hat{\theta}(\cdot;\hat{\varphi})$. Condition 6 imposes design-consistency of the design-variance's estimator, with uniformity over $[\underline{t}, \overline{t}]$. Condition~7 is important for establishing~(\ref{eq: asymptotic size}) via Theorem~\ref{Rami Theorem 1} below.

\par The embedding sequence of null models for developing~(\ref{eq: asymptotic size}) is made precise in the following definition.
\begin{definition}
Suppose that the sequence of null models $\left\{ \mathcal{M}_{N_A , N_B}^0, \; N_A, N_B = 1,2,\hdots \right\}$ is such that every sequence of finite populations $\{ \Pi_{N_A, N_B} \}_{N_A , N_B = 1}^\infty$, with $\Pi_{N_A, N_B} \in \mathcal{M}_{N_A , N_B}^0$ for each $N_A , N_B = 1,2,\ldots$, satisfy the conditions of Assumptions~\ref{Assumption - Compactness of X_K} and \ref{Assumption 1}. Let $\mathbb{W}_0$ denote the set of all such sequences $\{ \Pi_{N_A, N_B} \}_{N_A , N_B = 1}^\infty$ that satisfy Assumption \ref{Assumption 1}.
\label{Definition 1}
\end{definition}
\noindent The first result presents a characterization of~(\ref{eq: asymptotic size}) in terms of $\{\mathcal{M}^0_{N_A,N_B},N_A,N_B=1,2,\ldots\}$ that satisfies Definition~\ref{Definition 1}.
\begin{theorem}\label{Rami Theorem 1}
Let $\alpha \in (0,1)$ and $\mathbb{W}_0$ be as in Definition \ref{Definition 1}. Then \eqref{eq: asymptotic size} is equivalent to
\begin{align}
    \limsup_{N_A , N_B \rightarrow \infty} \mathbbm{E}\left( \mathbbm{1}[LR > c(\alpha)] \, \mid \, \Pi_{N_A, N_B} \right) \leq \alpha \quad \forall \; \{ \Pi_{N_A, N_B} \}_{N_A , N_B = 1}^\infty \in \mathbb{W}_0.
    \label{eq: asymptotic size sequences}
\end{align}
\end{theorem}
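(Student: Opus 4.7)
The plan is to prove the equivalence by establishing both implications separately. The forward direction, from the uniform bound~(\ref{eq: asymptotic size}) to the sequential bound~(\ref{eq: asymptotic size sequences}), is immediate: for any $\{\Pi_{N_A,N_B}\}_{N_A,N_B=1}^{\infty} \in \mathbb{W}_0$, Definition~\ref{Definition 1} ensures that $\Pi_{N_A,N_B} \in \mathcal{M}^0_{N_A,N_B}$ for every pair $(N_A,N_B)$, so the pointwise inequality
$$\mathbbm{E}\left(\mathbbm{1}[LR > c(\alpha)] \mid \Pi_{N_A,N_B}\right) \leq \sup_{\Pi \in \mathcal{M}^0_{N_A,N_B}} \mathbbm{E}\left(\mathbbm{1}[LR > c(\alpha)] \mid \Pi\right)$$
holds, and taking $\limsup$ of both sides preserves it.

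For the reverse direction I would argue by contradiction. Suppose that~(\ref{eq: asymptotic size sequences}) holds but~(\ref{eq: asymptotic size}) fails, so
$$\limsup_{N_A,N_B \to \infty}\, \sup_{\Pi \in \mathcal{M}^0_{N_A,N_B}} \mathbbm{E}\left(\mathbbm{1}[LR > c(\alpha)] \mid \Pi\right) = \beta > \alpha.$$
By the definition of $\limsup$ I can extract a subsequence $\{(N_{A,m},N_{B,m})\}_{m=1}^{\infty}$ along which the sup tends to $\beta$, and by the definition of the supremum I can select, for each $m$, a finite population $\Pi^\ast_m \in \mathcal{M}^0_{N_{A,m},N_{B,m}}$ with $\mathbbm{E}(\mathbbm{1}[LR > c(\alpha)]\mid \Pi^\ast_m) \geq \beta - \varepsilon_m$ for some $\varepsilon_m \downarrow 0$. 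I would then splice these choices into a full double-indexed sequence $\{\tilde{\Pi}_{N_A,N_B}\}$ by setting $\tilde{\Pi}_{N_{A,m},N_{B,m}} = \Pi^\ast_m$ and filling the remaining indices with populations drawn from any reference sequence already known to satisfy Assumptions~\ref{Assumption - Compactness of X_K} and~\ref{Assumption 1}.

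The main obstacle, and the reason Condition~7 of Assumption~\ref{Assumption 1} is included, is verifying that the spliced sequence $\{\tilde{\Pi}_{N_A,N_B}\}$ belongs to $\mathbb{W}_0$. Conditions~1--6 must hold for \emph{every} subsequence of $\{\tilde{\Pi}_{N_A,N_B}\}$, not merely for the full sequence, because the spliced object is an interleaving of two distinct streams of populations whose individual asymptotic behavior must not be destroyed by the merge. Condition~7 is precisely what permits this: since each constituent sequence satisfies Conditions~1--6 along all of its subsequences, any subsequence of $\{\tilde{\Pi}_{N_A,N_B}\}$ decomposes into subsubsequences of the constituents, each of which still obeys the required large-sample conditions, so Conditions~1--6 transfer to the spliced sequence. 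Once $\{\tilde{\Pi}_{N_A,N_B}\} \in \mathbb{W}_0$ is confirmed, applying the hypothesis~(\ref{eq: asymptotic size sequences}) yields
$$\alpha \geq \limsup_{N_A,N_B \to \infty} \mathbbm{E}\left(\mathbbm{1}[LR > c(\alpha)] \mid \tilde{\Pi}_{N_A,N_B}\right) \geq \lim_{m\to\infty} \mathbbm{E}\left(\mathbbm{1}[LR > c(\alpha)] \mid \Pi^\ast_m\right) = \beta,$$
contradicting $\beta > \alpha$ and completing the proof. The substantive content is the subsequence bookkeeping; the remainder is standard manipulation of suprema and limits.
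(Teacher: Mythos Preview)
Your proposal is correct and takes essentially the same route as the paper: both directions match, with the reverse proved by contraposition via a subsequence of near-maximizers of the supremum. One remark: your splicing and Condition~7 bookkeeping are more elaborate than needed, since under the hypothesis of Definition~\ref{Definition 1} \emph{every} full sequence $\{\Pi_{N_A,N_B}\}$ with $\Pi_{N_A,N_B}\in\mathcal{M}^0_{N_A,N_B}$ automatically satisfies Assumption~\ref{Assumption 1} and hence lies in $\mathbb{W}_0$, so once you fill the off-subsequence indices with any populations from $\mathcal{M}^0_{N_A,N_B}$ membership is immediate without decomposing into subsubsequences --- the paper in fact skips even the splicing and works directly with the subsequence $\{\Pi'_{N_{A,m},N_{B,m}}\}_{m=1}^\infty$.
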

\begin{proof}
See Appendix~\ref{Proof of Rami Theorem 1}.
\end{proof}
\noindent An important distinction between our framework and the conventional approach in the literature on inference for finite populations is that, like~\cite{Tabri2021}, we develop the behavior of the test over a set of sequences of finite populations, whereas that literature's focus has been on a single sequence of that sort (e.g.,~\citealp{wurao2006}, and~\citealp{zhaowu2020}). The result of Theorem~\ref{Rami Theorem 1} shows that this distinction is analogous to the difference between uniform and pointwise asymptotics in the partial identification literature.

\par The next result establishes the uniform asymptotic validity of the testing procedure, which is essential for reliable inference in large finite populations where the test statistic's limiting distribution is discontinuous as a function of the underlying population sequence.
\begin{theorem}\label{Rami Theorem 2}
Let $\left\{ \mathcal{M}_{N_A , N_B}^0, \; N_A, N_B = 1,2,\hdots \right\}$, $\mathbb{W}_0$ and $\alpha$ be the same as in Theorem \ref{Rami Theorem 1}. Then \eqref{eq: asymptotic size sequences} holds.
\end{theorem}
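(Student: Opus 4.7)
I will establish \eqref{eq: asymptotic size sequences} by combining a subsequencing reduction with a pointwise analysis at a maximizer of the population contrast $\theta_{N_A,N_B}(\cdot;\varphi_{N_A,N_B}(\cdot))$. Fix $\{\Pi_{N_A,N_B}\}_{N_A,N_B=1}^\infty \in \mathbb{W}_0$ and write $p_{N_A,N_B}:=\mathbbm{E}(\mathbbm{1}[LR > c(\alpha)] \mid \Pi_{N_A,N_B})$. Since $\limsup p_{N_A,N_B} \leq \alpha$ whenever every subsequence of $\{p_{N_A,N_B}\}$ admits a further subsequence with limit $\leq \alpha$, I reduce to the latter claim. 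Set $M_{N_A,N_B} := \max_{x\in[\underline{t},\overline{t}]} \theta_{N_A,N_B}(x;\varphi_{N_A,N_B}(x)) \geq 0$, pick $x^*_{N_A,N_B}\in[\underline{t},\overline{t}]$ attaining the maximum (Assumption \ref{Assumption - Compactness of X_K}), and extract a further subsequence along which $M_{N_A,N_B}\to M^* \in [0,+\infty]$ and $x^*_{N_A,N_B} \to x^\star$; Assumption \ref{Assumption 1}.7 propagates the remaining conditions along it. Because $\{LR > c(\alpha)\}$ requires $\hat{\theta}(x;\hat{\varphi}(x)) < 0$ for every $x\in[\underline{t},\overline{t}]$, and the minimum in \eqref{LR test statistic} is bounded above by its value at $x^*_{N_A,N_B}$, I obtain the key inequality
$$p_{N_A,N_B}\leq \mathbbm{P}\left(\hat{\theta}(x^*_{N_A,N_B};\hat{\varphi}(x^*_{N_A,N_B})) < 0,\ \frac{2(L_{UR}-L_R(x^*_{N_A,N_B}))}{\widehat{\text{Deff}}(x^*_{N_A,N_B})} > c(\alpha)\,\Big|\,\Pi_{N_A,N_B}\right).$$

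\noindent If $M^* > 0$, then Assumption \ref{Assumption 1}.3 yields $\hat{\theta}(x^*_{N_A,N_B};\hat{\varphi}(x^*_{N_A,N_B})) \pconv M^* > 0$, so the first event in the display above has vanishing probability and $p_{N_A,N_B}\to 0\leq\alpha$. If $M^*=0$, I invoke Lemmas \ref{Rami Lemma 1} and \ref{Rami Lemma 2} to obtain the Taylor expansion
$$\frac{2(L_{UR}-L_R(x^*_{N_A,N_B}))}{\widehat{\text{Deff}}(x^*_{N_A,N_B})} = \frac{[\hat{\theta}(x^*_{N_A,N_B};\hat{\varphi}(x^*_{N_A,N_B}))]^2}{\widehat{Var}\bigl(\hat{\theta}(x^*_{N_A,N_B};\hat{\varphi}(x^*_{N_A,N_B}))\bigr)} + o_P(1),$$
with the denominator a consistent estimator of $Var(\hat{\theta}(x^*_{N_A,N_B};\hat{\varphi}(x^*_{N_A,N_B})))$ by Assumption \ref{Assumption 1}.6. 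Writing $\tilde{\epsilon}_{N_A,N_B}:=(\hat{\theta}(x^*_{N_A,N_B};\hat{\varphi}(x^*_{N_A,N_B}))-M_{N_A,N_B})/\sqrt{Var(\hat{\theta}(x^*_{N_A,N_B};\hat{\varphi}(x^*_{N_A,N_B})))}$ and $\lambda_{N_A,N_B}:=M_{N_A,N_B}/\sqrt{Var(\hat{\theta}(x^*_{N_A,N_B};\hat{\varphi}(x^*_{N_A,N_B})))}\geq 0$, the functional CLT in Assumption \ref{Assumption 1}.5 evaluated at $x^\star$ via continuous mapping yields $\tilde{\epsilon}_{N_A,N_B}\dconv N(0,1)$. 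After passing to one more subsequence along which $\lambda_{N_A,N_B}\to\lambda^*\in[0,+\infty]$, the joint rejection event implies $\tilde{\epsilon}_{N_A,N_B}+\lambda_{N_A,N_B}<0$ together with $(\tilde{\epsilon}_{N_A,N_B}+\lambda_{N_A,N_B})^2 > c(\alpha)-o_P(1)$, hence $\tilde{\epsilon}_{N_A,N_B}<-\lambda_{N_A,N_B}-\sqrt{c(\alpha)}+o_P(1)$, and Slutsky gives
$$p_{N_A,N_B}\leq \Phi(-\lambda_{N_A,N_B}-\sqrt{c(\alpha)})+o(1)\to \Phi(-\lambda^*-\sqrt{c(\alpha)})\leq \Phi(-\sqrt{c(\alpha)})=\tfrac{\alpha}{2}\leq\alpha,$$
where $\Phi(-\sqrt{c(\alpha)})=\alpha/2$ because $c(\alpha)$ is the $(1-\alpha)$-quantile of $\chi^2_1$.

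\noindent\textbf{Main obstacle.} The principal technical hurdle is transferring the weak convergence in $L^\infty([\underline{t},\overline{t}])$ from Assumption \ref{Assumption 1}.5 into a one-dimensional distributional statement at the \emph{moving} maximizer $x^*_{N_A,N_B}$. Compactness of $[\underline{t},\overline{t}]$ permits extracting a subsequence with $x^*_{N_A,N_B}\to x^\star$, and almost-sure continuity of the Gaussian limit $\mathbb{G}$ (as is standard for functional limits on $L^\infty$ of compact intervals) makes evaluation at $x^\star$ a continuous functional, so continuous mapping delivers the required $N(0,1)$ marginal. A secondary subtlety is that the Taylor remainder from Lemmas \ref{Rami Lemma 1}--\ref{Rami Lemma 2} must remain uniformly $o_P(1)$ at the moving point $x^*_{N_A,N_B}$, which is underwritten by the uniform-over-$[\underline{t},\overline{t}]$ consistencies in Assumptions \ref{Assumption 1}.3 and \ref{Assumption 1}.6. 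These uniformities are what make the pointwise argument robust to the data-dependence of $x^*_{N_A,N_B}$.
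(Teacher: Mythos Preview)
Your proposal is correct and follows the paper's core strategy: bound $LR$ above by $2(L_{UR}-L_R(x))/\widehat{\text{Deff}}(x)$ at a single well-chosen $x$, then use the Taylor expansion underlying Lemmas~\ref{Rami Lemma 1}--\ref{Rami Lemma 2} to reduce to a squared $t$-statistic whose limit is controlled by Conditions~5 and~6 of Assumption~\ref{Assumption 1}. The paper evaluates at a point $x_m$ with $\theta_{{N_A}_m,{N_B}_m}(x_m)=0$ (on-boundary subsequences) or at a fixed $x_e$ with $C(x_e)=0$ (subsequences drifting to the boundary), arguing informally that these exhaust the worst cases; you instead work uniformly at the maximizer $x^*_{N_A,N_B}$, dispatch the interior case $M^*>0$ explicitly via Assumption~\ref{Assumption 1}.3, and handle the moving evaluation point with the extended continuous mapping theorem. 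By retaining the sign constraint $\hat{\theta}(x^*_{N_A,N_B})<0$ in your bound you obtain the sharper one-sided tail $\Phi(-\sqrt{c(\alpha)})=\alpha/2$ in place of the paper's $\alpha$. One cosmetic point worth tightening: in the subcase $\lambda^*=+\infty$ the additive $o_P(1)$ form of the Taylor remainder is awkward because the leading term diverges, but the sign constraint alone already gives $p_{N_A,N_B}\leq\text{Prob}(\tilde{\epsilon}_{N_A,N_B}<-\lambda_{N_A,N_B})\to 0$, so your displayed limit $\Phi(-\lambda^*-\sqrt{c(\alpha)})=0$ is still correct.
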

\begin{proof}
See Appendix \ref{Proof of Rami Theorem 2}.
\end{proof}
\noindent The key technical steps in the proof of Theorem~\ref{Rami Theorem 2} is to determine the asymptotic distribution of $\{ 2(L_{UR} - L_R (x))/\widehat{\text{Deff}}(x) \mid \Pi_{N_A, N_B} \}_{N_A,N_B=1}^\infty$ along sequences $\{ \Pi_{N_A, N_B} \}_{N_A , N_B = 1}^\infty \in \mathbb{W}_0$ that drift to/on the boundary of the model of the null hypothesis $H_0^1$ where the rejection probability is highest. We show that the limiting distribution for those sequences is $\chi_1^2$. Hence, the test achieves level-$\alpha$ asymptotically. Consequently, a rejection of $H^1_0$ based on this test using a small significance level constitutes very strong evidence in favor of $H^1_1$, and hence, is very strong evidence in favor of $H_1$ defined in~(\ref{eq - test problem}), under the maintained nonresponse assumptions. 

\subsection{Asymptotic Power Properties}\label{Section - Pwer Properties}
We now develop asymptotic properties of the testing procedure along sequences of finite populations under the alternative hypothesis $H_1^1$. In a similar vein to the method used to prove the results for asymptotic size, let $\left\{ \mathcal{M}_{N_A , N_B}^1, \ N_A , N_B = 1,2,\hdots \right\}$ be the embedding sequence where $$\mathcal{M}_{N_A , N_B}^1= \left\{ \Pi \in \mathcal{M}_{N_A , N_B} :\max_{x \in [\underline{t}, \overline{t}]} \theta_{N_A,N_B}(x;\varphi_{N_A,N_B}(x))< 0 \right\},$$ for each $N_A , N_B$. In this formulation, the true population, $\Pi_0$, satisfies $H_1^1$ if and only if $\Pi_0 \in \mathcal{M}_{N_A , N_B}^1$. Thus, $\mathcal{M}_{N_A , N_B}^1$ corresponds to the model of the alternative hypothesis $H_1^1$.

\par For any $N_A , N_B \in \mathbb{N}$, test power is given by $\mathbbm{E}\left( \mathbbm{1}[LR > c(\alpha) ] \, \mid \, \Pi_0 \right)$ when $\Pi_0 \in \mathcal{M}_{N_A , N_B}^1$. Along a sequence of finite populations $\{ \Pi_{N_A , N_B} \}_{N_A,N_B = 1}^\infty$ under the alternative hypothesis, the asymptotic power is given by
\begin{equation}
    \lim_{N_A, N_B \rightarrow \infty} \mathbbm{E} \left( \mathbbm{1}[LR > c(\alpha)] \, \mid \, \Pi_{N_A, N_B} \right).
    \label{asymptotic power}
\end{equation}
As $\alpha \in (0,1)$ and $c(\alpha)$ is a fixed critical value the stochastic behavior the test statistic $\{ LR \mid \Pi_{N_A, N_B} \}_{N_A, N_B}^\infty$ drives the asymptotic power of the testing procedure.  We impose the following conditions on the sequences of finite populations $\{ \Pi_{N_A, N_B} \}_{N_A, N_B = 1}^\infty$ under $H_1^1$, which are useful for deriving the asymptotic behavior of the test statistic.
\begin{assumption}
For a given sequence of finite populations $\{ \Pi_{N_A , N_B} \}_{{N_A , N_B} = 1}^\infty$, we impose the following conditions on the survey's design:
\begin{enumerate}
    \item $\dfrac{\mathbbm{E}( n \mid \Pi_{N_A, N_B})}{n} \pconv d\in \mathbb{R}_{++}$ as $N_A ,N_B \rightarrow \infty$.

    \item For each $x\in [\underline{t}, \overline{t}]$:
    \begin{enumerate}[(i)]
        \item $\lim\limits_{N_A, N_B \rightarrow \infty } Var\left(\hat{\theta}(x;\hat{\varphi}(x))\mid \Pi_{N_A, N_B} \right) = 0$.
        \item $\lim\limits_{N_A, N_B \rightarrow \infty } \mathbbm{E}( n \mid \Pi_{N_A, N_B}) Var\left(\hat{\theta}(x;\hat{\varphi}(x)) \mid \Pi_{N_A, N_B} \right) \in \mathbb{R}_{++}$.
        \item  $\lim\limits_{N_A, N_B \rightarrow \infty} S_{N_A,N_B}(x)\in \mathbb{R}_{++}$, where $S_{N_A,N_B}$ is given by~(\ref{eq - S_NA_NB}).
    \end{enumerate}
\end{enumerate}
\label{Assumption 2}
\end{assumption}

\par The conditions of Assumption \ref{Assumption 2} are commonly imposed on the sampling design of surveys (e.g., see Assumption 3 of \citealp{wang2012}). Condition 1 imposes a restriction on the design so that the average sample size and the sample size of $\{Y_i,i\in U\}$ have a similar growth rate with increasing population sizes.  On Condition 2, Part (i) imposes diminishing design-variance to zero as the population sizes diverge, Part (ii) pins the rate of decrease of the design-variance in (i) with respect to the average sample size of $\{Y_i,i\in U\}$. and Part (iii)  imposes finite limiting population variance.

\par Next we describe the embedding of the sequences of finite populations under the alternative hypothesis with respect to which we compute limiting power~(\ref{asymptotic power}).
\begin{definition}
\label{sequences of alternative hypothesis}
Let $\mathcal{M}_{N_A , N_B}^1 (\varepsilon) = \left\{ \Pi \in \mathcal{M}_{N_A , N_B} : \max_{x \in [\underline{t}, \overline{t}]} \theta_{N_A,N_B}(x;\varphi_{N_A,N_B}(x)) < -\varepsilon \right\}$ for each  $N_A , N_B \in \mathbb{N}$, with given $\varepsilon>0$. This is the set of finite populations whose dominance contrasts are negative and uniformly bounded away from zero by $\varepsilon$. Let $\mathbb{W}_1 (\varepsilon)$ denote the set of all $\{ \Pi_{N_A , N_B} \}_{N_A , N_B = 1}^\infty$ such that $\Pi_{N_A , N_B} \in \mathcal{M}_{N_A , N_B}^1 (\varepsilon)$ for each $N_A, N_B \in \mathbb{N}$, satisfying Assumptions~\ref{Assumption - Compactness of X_K}, \ref{Assumption 1}, and \ref{Assumption 2}.
\end{definition}
\noindent The index set of binding inequalities enters the calculation of the limit~(\ref{asymptotic power}) for the sequences of finite populations in this setup. The following function plays a central role in characterizing the impact of this set on limiting power: for a given sequence of finite populations $\{ \Pi_{N_A, N_B} \}_{N_A, N_B = 1}^\infty$, define the function
\begin{equation}
    C(x) =  \lim_{N_A, N_B \rightarrow \infty} \frac{\theta_{N_A,N_B}(x;\varphi_{N_A,N_B}(x))}{\sqrt{Var\left(\hat{\theta}(x;\hat{\varphi}(x))\mid \; \Pi_{{N_A}, {N_B} }  \right)}} \; \mid \; \Pi_{{N_A}, {N_B} } \quad \text{for} \ x \in [\underline{t}, \overline{t}].
\label{C function}
\end{equation}

\par We have the following result.
\begin{theorem}
Fix $\alpha \in (0,1)$. Let $\mathbb{W}_1 (\varepsilon)$ be defined as in Definition \ref{sequences of alternative hypothesis} and $C(x)$ be defined as in~(\ref{C function}). The following statements hold.
\begin{enumerate}
    \item Fix $\varepsilon > 0$, then for a sequence of finite populations $\{\Pi_{N_A , N_B} \}_{N_A, N_B=1}^\infty \in \mathbb{W}_1 (\varepsilon)$:
    $$\lim\limits_{N_A , N_B \rightarrow \infty} \mathbbm{E}\left( \mathbbm{1}[LR > c(\alpha)] \, \mid \, \Pi_{N_A , N_B} \right) = 1.$$
    \item For a sequence of finite populations $\{\Pi_{N_A , N_B} \}_{N_A, N_B=1}^\infty \in \mathbb{W}_1 (0)$ define $X_0 = \{ x \in [\underline{t}, \overline{t}] : C(x) = 0\}$.
    If $X_0 \neq \emptyset$, then
    $$\lim\limits_{N_A , N_B \rightarrow \infty} \mathbbm{E}\left( \mathbbm{1}[LR > c(\alpha) ] \, \mid \, \Pi_{N_A , N_B} \right)=\text{Prob}\left[\mathbb{G}^{2}(x)>c(\alpha),\mathbb{G}(x)<0,x\in X_0\right].$$
\end{enumerate}
\label{power theorem}
\end{theorem}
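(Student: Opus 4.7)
The plan is to combine the uniform Taylor expansion of the pseudo-empirical likelihood ratio provided by Lemmas~\ref{power theorem Lemma 1} and~\ref{power theorem Lemma 2} with the functional central limit theorem in Assumption~\ref{Assumption 1}.5. The expansion yields
\begin{equation*}
\frac{2(L_{UR}-L_R(x))}{\widehat{\text{Deff}}(x)}=\frac{\hat{\theta}(x;\hat{\varphi}(x))^2}{\widehat{Var}(\hat{\theta}(x;\hat{\varphi}(x)))}+o_P(1)
\end{equation*}
uniformly in $x\in[\underline{t},\overline{t}]$ along any sequence in $\mathbb{W}_1(\varepsilon)$ with $\varepsilon\geq 0$, so both parts reduce to studying this standardized square together with the sign indicator $\mathbbm{1}[\hat{\theta}(x;\hat{\varphi}(x))<0\;\forall x]$ built into the definition~(\ref{LR test statistic}).

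For Part~1, since $\theta_{N_A,N_B}(x;\varphi_{N_A,N_B}(x))\leq -\varepsilon$ uniformly in $x$, the uniform consistency in Assumption~\ref{Assumption 1}.3 gives $\text{Prob}(\hat{\theta}(x;\hat{\varphi}(x))<0\;\forall x\mid\Pi_{N_A,N_B})\to 1$, so the minimum branch of~(\ref{LR test statistic}) governs $LR$ with probability tending to one. The FCLT in Assumption~\ref{Assumption 1}.5 together with the rates in Assumption~\ref{Assumption 2}.2(i)--(ii) produces $\hat{\theta}(x;\hat{\varphi}(x))-\theta_{N_A,N_B}(x;\varphi_{N_A,N_B}(x))=O_P(\sqrt{Var(\hat{\theta}(x;\hat{\varphi}(x)))})=o_P(1)$ uniformly, hence $\hat{\theta}(x;\hat{\varphi}(x))^2\geq(\varepsilon/2)^2$ uniformly w.p.$\to 1$, while Assumptions~\ref{Assumption 1}.6 and~\ref{Assumption 2}.2(i) give $\widehat{Var}(\hat{\theta}(x;\hat{\varphi}(x)))=o_P(1)$ uniformly. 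Combining, $\min_x \hat{\theta}(x;\hat{\varphi}(x))^2/\widehat{Var}(\hat{\theta}(x;\hat{\varphi}(x)))\to+\infty$ in probability and $\text{Prob}(LR>c(\alpha)\mid\Pi_{N_A,N_B})\to 1$ follows.

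For Part~2, decompose
\begin{equation*}
\frac{\hat{\theta}(x;\hat{\varphi}(x))}{\sqrt{\widehat{Var}(\hat{\theta}(x;\hat{\varphi}(x)))}}=\frac{\hat{\theta}(x;\hat{\varphi}(x))-\theta_{N_A,N_B}(x;\varphi_{N_A,N_B}(x))}{\sqrt{Var(\hat{\theta}(x;\hat{\varphi}(x)))}}\sqrt{\frac{Var}{\widehat{Var}}}+\frac{\theta_{N_A,N_B}(x;\varphi_{N_A,N_B}(x))}{\sqrt{\widehat{Var}(\hat{\theta}(x;\hat{\varphi}(x)))}},
\end{equation*}
and apply Assumptions~\ref{Assumption 1}.5--1.6 together with the definition~(\ref{C function}) to conclude via Slutsky's theorem and the Continuous Mapping Theorem in $L^\infty([\underline{t},\overline{t}])$ that $\hat{\theta}^2/\widehat{Var}\rightsquigarrow(\mathbb{G}(\cdot)+C(\cdot))^2$ jointly with the sign constraint $\mathbb{G}(\cdot)+C(\cdot)<0$. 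Because $\theta_{N_A,N_B}(x)<0$ for every $N_A,N_B$, $C(x)\leq 0$ on $[\underline{t},\overline{t}]$ with equality precisely on $X_0$; on $[\underline{t},\overline{t}]\setminus X_0$ the value $C(x)<0$ (finite or equal to $-\infty$) renders the sign condition automatic and inflates the squared contribution, so the minimizing index concentrates on $X_0$. Consequently, on the limiting event of uniform negativity, $\min_x(\mathbb{G}(x)+C(x))^2=\min_{x\in X_0}\mathbb{G}(x)^2$, and $\{LR>c(\alpha)\}$ is asymptotically equivalent to $\{\mathbb{G}(x)^2>c(\alpha)\text{ and }\mathbb{G}(x)<0\text{ for all }x\in X_0\}$, yielding the stated limiting probability.

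The main obstacle is twofold. First, Assumption~\ref{Assumption 2} is stated pointwise in $x$, whereas the min-based statistic and the $L^\infty$ weak limit require uniform control of $Var(\hat{\theta}(\cdot;\hat{\varphi}(\cdot)))$; this gap is bridged by exploiting the continuity of the estimating function~(\ref{eq - Estimating Function}) in $x$, the compactness of $[\underline{t},\overline{t}]$, and the FCLT in Assumption~\ref{Assumption 1}.5, which together enforce an equicontinuity property on the standardized process. Second, isolating the contribution of $X_0$ when $C$ can take the value $-\infty$ on parts of $[\underline{t},\overline{t}]\setminus X_0$ requires a tightness argument that confines the effective minimization to neighborhoods of $X_0$, together with continuity of the Gaussian process $\mathbb{G}$ to rule out probability mass on the boundary events $\{\mathbb{G}(x)=0\}$ and $\{\mathbb{G}(x)^2=c(\alpha)\}$ that could otherwise perturb the limit.
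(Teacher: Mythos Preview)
Your overall strategy matches the paper's: establish that the in-sample dominance event $\Upsilon$ has limiting probability one (or the stated value), then analyze the pointwise behavior of $2(L_{UR}-L_R(x))/\widehat{\text{Deff}}(x)$ separately on $X_0$ and its complement using Lemmas~\ref{power theorem Lemma 1} and~\ref{power theorem Lemma 2}. On that level, the proofs coincide.

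However, your description of \emph{what} Lemmas~\ref{power theorem Lemma 1} and~\ref{power theorem Lemma 2} deliver is inaccurate, and this matters for Part~1. You claim they yield a uniform Taylor expansion $2(L_{UR}-L_R(x))/\widehat{\text{Deff}}(x)=\hat{\theta}(x;\hat{\varphi}(x))^2/\widehat{Var}(\hat{\theta}(x;\hat{\varphi}(x)))+o_P(1)$. That expansion hinges on the Lagrange multiplier satisfying $\kappa/n=o_P(1)$ so that $\log(1+\kappa H_i/n)$ admits a second-order approximation. Under fixed alternatives ($\varepsilon>0$), the first-order condition gives $\kappa\approx \bar H(x)/V(x)$ with $\bar H(x)\to\theta(x)\le -\varepsilon$ and $V(x)=O_P(n^{-1})$, so $\kappa$ is $O_P(n)$ and the expansion breaks down. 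This is precisely why Lemma~\ref{power theorem Lemma 1} does \emph{not} proceed via the Taylor route: it invokes the dual representation $2(L_{UR}-L_R(x))=\max_{\xi}2\sum W_i'\log(1+\xi H_i/n)$, plugs in the specific choice $\xi=-1/\sqrt{Var(\hat\theta)}$ (which \emph{is} $o_P(n)$), and obtains a diverging lower bound via~\eqref{Rami B.7}. Lemma~\ref{power theorem Lemma 2} does use the expansion, but only for $x\in X_0$, where $\kappa/n=o_P(1)$ is legitimate. So your mechanism for Part~1 is not the one the lemmas supply; if you simply cite the conclusion of Lemma~\ref{power theorem Lemma 1} (pointwise divergence) rather than the expansion, your Part~1 argument goes through exactly as the paper's does.

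For Part~2 your treatment of $x\notin X_0$ is also loose: you allow $C(x)$ to be finite negative and assert that this ``inflates the squared contribution,'' but $(\mathbb{G}(x)+C(x))^2$ with $C(x)\in(-\infty,0)$ need not dominate $\mathbb{G}(x')^2$ for $x'\in X_0$. The paper sidesteps this by applying Lemma~\ref{power theorem Lemma 1} to each $x\notin X_0$, which (given its proof) effectively treats such points as having $C(x)=-\infty$; you should make the same move rather than arguing from the limiting form $(\mathbb{G}+C)^2$.
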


\begin{proof}
See Appendix \ref{Proof of power theorem}.
\end{proof}
\section{Discussion}\label{Section - Discussion}
This section presents a discussion of the scope of our main results and implications for empirical practice. Section~\ref{Subsection - Discussion - H^1_0} discusses the interpretation of $H^1_0$. Section~\ref{Subsection - Discussion - Fakih} compares the method of this paper with that of~\cite{Tabri2021}, and Section~\ref{Subsection - Discussion - Other Tests} compares and contrasts our method to other tests in the moment inequalities inference literature. Section~\ref{Subsection - Discussion - B dominates A} describes the adjustments of our method for inferring that $B$ dominates $A$. Finally, Section~\ref{Subsection - Discussion - Longitudinal weights} puts forward an adjustment of calibration to enable pairwise comparisons with waves beyond the first one.

\subsection{Interpreting $H^1_0$}\label{Subsection - Discussion - H^1_0}

\par For a given order of stochastic dominance $s\in\mathbb{N}$ and range $[\underline{t},\overline{t}]$,  this paper's statistical procedure aims to infer $D_{N_A}^s (x) < D_{N_B}^s (x)$ for all $x\in[\underline{t},\overline{t}]$ using a paired sample from a panel survey with a maintained assumption on nonresponse. The inference relies on ranking the bounds of these dominance functions, where a rejection event implies $D_{N_A}^s (x) < D_{N_B}^s (x)$ throughout the interval $[\underline{t},\overline{t}]$ (see testing problem~(\ref{eq - boundary of null})). Under the null hypothesis $H_0^1$, an $x^*$ exists within $[\underline{t},\overline{t}]$ where $D_{N_A}^s (x^) \geq D_{N_B}^s (x^*)$, a condition that allows $D_{N_A}^s (x) < D_{N_B}^s (x)$ elsewhere within the interval. This ambiguity makes $H_0^1$ uninformative due to the partial identification of $D_{N_A}^s (\cdot)-D_{N_B}^s (\cdot)$. Failure to reject this hypothesis yields no definitive conclusions about the populations' rankings under restricted $s$th order stochastic dominance. In such a situation, we recommend empirical researchers perform a sensitivity analysis of this empirical conclusion (i.e., non-rejection of $H^1_0$) with respect to plausible assumptions on the nonresponse-generating process, using our testing procedure. Additionally, considering higher orders of dominance may be beneficial, especially in poverty and inequality analysis (e.g., \citealp{atkinson1970,atkinson1987,fostershorrocks1988b, Deaton}). Section~\ref{Section - Illustration} provides an empirical illustration using HILDA survey data.

\subsection{Comparison to~\cite{Tabri2021}}\label{Subsection - Discussion - Fakih}
\par~\cite{Tabri2021} developed a pseudo-empirical likelihood testing procedure for a test problem akin to~(\ref{eq - into test problem}), but for, first-order stochastic dominance, ordinal variables, worst-case bounds, and survey data from independent cross-sections. While their procedure is beneficial for such applications, it is narrow in scope. The worst-case bounds can be uninformative in practice, and they do not consider data arising from panel surveys to capture dynamics. Furthermore, 
the data are discrete and they focus on first-order dominance, which excludes important applications, such poverty and inequality analysis (e.g.,~\citealp{fostershorrocks1988b,Deaton}), collusion detection in industrial organization (\citealp{Aryal-Gabrielli}), and the ranking of strategies in management sciences (e.g., \citealp{Harris-Mapp,FONG20101237,minviel_benoit_2022}). See also Chapter 1 of~\cite{whang2019} and the references therein for other applications of stochastic dominance.

\par Contrastingly, this paper's setup is more complex than the setup of~\cite{Tabri2021}. Paired data from panel surveys possess more complex forms of nonresponse. Unit and item nonresponse can occur within period and wave nonresponse across periods, as well as attrition where sampled units who have previously responded permanently exit the survey. Furthermore, the testing procedure enables the incorporation of prior assumptions on nonresponse, allowing researchers to examine the informational content of their assumptions and their impacts on the inferences made. Another important difference is in the treatment of the design-effect in the statistical procedure. This paper's procedure employs an estimator of the design-effect as it must be estimated in practice. By contrast, the procedure of~\cite{Tabri2021} ignores design-effect estimation as it assumes that it is asymptotically equal to unity with uniformity --- see Condition (v) of Assumption~1 in their paper. While they do explain how to adjust their procedure to include an estimator of the design-effect, they do not explicitly account for it in the statements of their results. Reliable estimation of the design-effect becomes more important when considering high order of stochastic dominance, since asymptotically the design-variance would be based on the behaviour of random functions with powers of $s-1$, which could have a profound effect on the testing procedure.

\subsection{Comparison to Other Tests}\label{Subsection - Discussion - Other Tests}
\par This paper contributes to the vast literature on inference for parameters defined by moment inequalities. While most testing procedures in this literature assume random sampling and test for the opposite of our hypotheses our focus is different. In the context of this paper, those procedures, such as that of~\cite{ANDREWS2017275}, apply to the test problems  
\begin{align*}
H_0 :\,\max_{x\in[\underline{t},\overline{t}]}\left( \overline{D}_{N_A}^s(x) -\underline{D}_{N_B}^s(x)\right)\leq0\;\;\text{vs.}\;\; H_1: \max_{x\in[\underline{t},\overline{t}]}\left( \overline{D}_{N_A}^s(x) -\underline{D}_{N_B}^s(x)\right)>0,
\end{align*}
but under the random sampling assumption. Unlike typical approaches that infer non-dominance from a null of dominance, we posit a null of non-dominance to infer strict dominance, addressing the methodological challenge where failure to reject the null does not necessarily confirm it, unless test power is high (\citealp{davidsonduclos2013}, p. 87). This choice reflects our testing objective, which is to infer strict dominance.

\par Few tests in the moment inequalities literature consider a null of non-dominance, with notable exceptions by \cite{Tabri2021}, \cite{davidsonduclos2013}, and a test proposed by \cite{KPS} based on the minimum $t$-statistic for second-order stochastic dominance under complete data and random sampling. \cite{davidsonduclos2013} also adapt it for first-order stochastic dominance. Our Supplementary Material’s Lemmas~\ref{Rami Lemma 1} and~\ref{Rami Lemma 2} demonstrate that under the null hypothesis $H_0^1$, the statistic $2(L_{UR}-L_{R}(x))/\widehat{\text{Deff}}(x)$ is asymptotically equivalent to the square of a $t$-statistic. Additionally, Lemmas~\ref{power theorem Lemma 1} and~\ref{power theorem Lemma 2} confirm analogous results under local alternatives, establishing the asymptotic local equivalence of our pseudo-empirical likelihood test and the test using the minimum of the squared $t$-statistics. Therefore, practitioners can also use the minimum $t$-statistic instead of our pseudo-empirical likelihood ratio statistic~(\ref{LR test statistic}) and obtain asymptotically valid tests by comparing it to a quantile from the $\chi^2_1$ distribution using the same decision rule~(\ref{eq - decision rule}).

\par Moreover, econometricians have also considered design-based approaches to inference for Lorenz dominance but for particular survey designs (e.g.,~\citealp{Zheng,BHATTACHARYA2005145,BHATTACHARYA2007674}). While Lorenz dominance tests are similar in spirit to various stochastic dominance tests in the literature, their treatment is different as their test statistics are more complicated functionals of the underlying distribution functions. Importantly, these testing procedures have demonstrated that complex sampling designs can greatly affect the estimates of standard errors required for inference. However, they are applicable to complete data, which is their main drawback, as nonresponse is inevitable in socioeconomic surveys.

\subsection{Testing $B$ Dominates $A$}\label{Subsection - Discussion - B dominates A}
In our framework, populations $A$ and $B$ represent wave 1 and a subsequent wave, respectively, and interest centers on establishing that $A$ dominates $B$ at a predesignated order $s$ and over a range $[\underline{t},\overline{t}]$. The practitioner may also be interested in establishing that $B$ dominates $A$, which entails considering the testing problem
\begin{align}\label{eq - Testing Problem B Dom A}
H_0:\max_{x\in[\underline{t},\overline{t}]}\left(D_{N_B}^{s}(x)-D_{N_A}^{s}(x)\right)\geq0\;\;\text{vs.}\;\; H_1:\max_{x\in[\underline{t},\overline{t}]}\left(D_{N_B}^{s}(x)-D_{N_A}^{s}(x)\right)<0.
\end{align}
The framework of Section~\ref{Section - Setup} covers this scenario through specifying $\varphi_{N_A,N_B}$ appropriately to obtain the sharp upper bound $\overline{D}^s_{N_B}(\cdot)-\underline{D}^s_{N_A}(\cdot)$ on the contrast $D_{N_B}^{s}(\cdot)-D_{N_A}^{s}(\cdot)$, so that estimation and testing can proceed as described above.

\par We repeat the same examples as in Section~\ref{Subsection Examples} but for the testing problem~(\ref{eq - Testing Problem B Dom A}), and relegate those details to Appendix~\ref{Subsection - Appendix D - B Dom A} for brevity. The technical derivations of the bounds in each example follow steps similar to the ones in the examples of Section~\ref{Subsection Examples}. The general idea with this set of examples is to carefully specify $\varphi_{N_A,N_B}$ to obtain the desired form of $\theta_{N_A,N_B}(\cdot)=\overline{D}^s_{N_B}(\cdot)-\underline{D}^s_{N_A}(\cdot)$.

\subsection{Comparisons of Two Waves Beyond Wave 1}\label{Subsection - Discussion - Longitudinal weights} 

\par This section presents a discussion of how to carry out inference for stochastic dominance with two populations beyond the first wave within our framework. Firstly, it points to a limitation of a leading practice in the literature called calibration that has been used to address nonresponse. Second, it proposes a bounds approach to inference for stochastic dominance based on restricted tests (\citealp{Aitchison-1962}). This approach combines the basic idea of calibration with the framework of the previous sections in the paper.

\par A longitudinal weight between these waves must be utilized instead of the design weights to compare populations of two subsequent waves beyond the first. The idea behind longitudinal weights is to reflect the finite population at the initial wave of the comparison by accounting for the dynamic nature of the panel, where the achieved sample in wave 1 evolves with time due to following rules and population changes. The process of calibration estimation obtains such weights. The seminal paper of~\cite{Deville_Sarndal-1992} formalized the general concept and techniques of calibration estimation in the context of survey sampling with complete data to improve estimators' efficiency and ensure coherency with population information. The idea of using extra information to enhance inference has a long tradition in statistics and econometrics; see, for example, ~\cite{Aitchison-1962,Imbens-Lancaster-1994,PARENTE20171}.

\par The calibration estimator uses calibrated weights, which are as close as possible, according to a given distance measure, to the original sampling design weights while also respecting a set of constraints, which represent known auxiliary population benchmarks/information, to ensure the resulting weighted estimates match (typically external) high-quality totals or means of the initial wave in the comparison.\footnote{See~\cite{Wu-Lu-2016} for examples of distance measures used in calibration.} This method is ideal for incorporating such information in the setup with complete responses and a representative sample of the target population.

\par Calibration is also widely applied under nonresponse, but with the achieved sample of responding units -- a procedure that is known as \emph{weighting} (see Chapter 5 of~\citealp{Sarndal-Lundstrom}). Under nonresponse, weighting thus relies on the assumption that the supports of the outcome variable conditional on response and nonresponse coincide. While suitable for specific empirical applications, this assumption can be untenable in other applications. It is implausible when there are reasons to expect that the missing units tend to belong to a subpopulation of the target population. For example, a widely shared view of household surveys is that the missing incomes correspond to households at the top of the income distribution (e.g.,~\citealp{Lustig-2017, Bourguignon-2018}). The issue is that the achieved sample on the outcome variable (i.e., of responding units) in the first wave is biased because it represents a subpopulation of the target population. Following the units of this sample over time could perpetuate their selectivity bias in subsequent waves, where the known auxiliary population benchmarks/information in the subsequent waves may not be compatible with the subpopulation that the achieved sample represents. Consequently, calibrating design weights using the achieved (biased) sample can result in misleading inferences.

\par The challenge with calibration in our missing data setup is that the achieved sample is generally not representative of the target population. The achieved sample could represent a subpopulation of the target population, which may not satisfy the auxiliary information. If there are known auxiliary benchmarks/information on that subpopulation, then the design weights can be calibrated using this sample and auxiliary benchmarks/information. This calibration approach can be combined with our bounds approach of the previous sections to develop a statistical procedure for inference on stochastic dominance for the target populations under nonresponse. The known auxiliary benchmarks/information on the subpopulations may be derived from or implied by their counterparts on the target population in the form of subpopulation totals or means, for instance.



\par More concretely, $A$ now represents the population of the initial wave in the comparison and $B$ now represents the later wave. The subpopulation $\mathcal{P}_{N_{A_1}}= \left\{ \{Y^A_i, Z^A_i \} \in \mathcal{P}_{N_A} : Z_i^{W_1}=1\right\},$ is the one the sample represents, where $Z_i^{W_1}$ is the 0/1 binary variable indicating on response in the population corresponding to the first wave of the survey. Let $\{G_i:i=1,\ldots, N_{A_1}\}$ where $G_i\in\mathbb{R}$ be the population value of the auxiliary variable for each $i$, and $N_{A_1}$ is the subpopulation total. The information we have is the constraint $\frac{1}{N_{A_1}}\sum_{i=1}^{N_{A_1}}G_i=G_0$, but because of nonresponse, the data on this auxiliary variable will have missing values, which would be the units $\{i\in U: Z^{A_1}_i=0\}$. If there are known bounds on $G_i$, i.e., $\underline{G}\leq G_i\leq \overline{G}$ for all $i$ in the subpopulation $A_1$, then using the aforementioned bounds, the following inequality restrictions must hold
\begin{align}
\frac{1}{N_{A_1}}\left(\sum_{i=1:Z^{A_1}_i=1}^{N_{A_1}}G_i+\sum_{i=1:Z^{A_1}_i=0}^{N_{A_1}}\overline{G}\right) \geq G_0\,\text{and}\,\frac{1}{N_{A_1}}\left(\sum_{i=1:Z^{A_1}_i=1}^{N_{A_1}}G_i+\sum_{i=1:Z^{A_1}_i=0}^{N_{A_1}}\underline{G}\right)\leq G_0.\label{eq - WC bounds Aux}
\end{align}
Employing the information~(\ref{eq - WC bounds Aux}) in testing entails the consideration of the \emph{restricted} testing problem:
\begin{equation}\label{eq - test problem Longitudinal}
\begin{aligned}
H_0^2:& \max_{x\in[\underline{t},\overline{t}]}\left(\overline{D}_{N_A}^{s}(x)-\underline{D}_{N_B}^{s}(x)\right)\geq0,~\text{and}~(\ref{eq - WC bounds Aux}) \quad \text{versus}\\
H^2_1:& \max_{x\in[\underline{t},\overline{t}]}\left(\overline{D}_{N_A}^{s}(x)-\underline{D}_{N_B}^{s}(x)\right)<0\,~\text{and}~(\ref{eq - WC bounds Aux}),
\end{aligned}
\end{equation}
where the bounds, $\overline{D}_{N_A}^{s}$ and $\underline{D}_{N_B}^{s}$, account for nonresponse through the use of side information or a maintained assumption on nonresponse.

\par A testing procedure using the method of pseudo-empirical likelihood is feasible since it can implement the information~(\ref{eq - WC bounds Aux}) as additional constraints in formulating the pseudo-empirical likelihood-ratio statistic. These constraints are imposed under both the null and alternative hypotheses for \emph{internal consistency} (\citealp{Wu-Lu-2016}), so that the transformed design weights are compatible with subpopulation $A_1$. We conjecture the asymptotic form of the test statistic is equivalent to generalized likelihood ratio statistic for a cone-based testing problem on a multivariate normal mean vector (e.g., Theorem~3.4 of~\citealp{Raubertas1986}). There are challenges with implementing the testing procedure, as the subset of inequalities in~(\ref{eq - WC bounds Aux}) that are active/binding enters the design effect; see Appendix~\ref{Appendix E Longitudinal weights} for the details. This set is unknown and difficult to estimate reliably, which can profoundly impact the overall reliability of the testing procedure. It renders the distribution as non-pivotal, which creates challenges for calculation of its quantiles. One possible avenue forward is to adapt the bootstrap procedure of~\cite{Wang-Peng-Kim-2022}
to the setup of inequality restrictions in order circumvent this computational difficulty --  we leave it for future research.

\section{Empirical Application}\label{Section - Illustration}
The empirical analysis aims to investigate the temporal orderings of poverty among Australian households using data from the HILDA panel survey. This survey's design, described in detail by~\cite{watson2002}, is complex, involving multiple stages, clustering, stratification, and unequal probability sampling. The unit of observation is the household, which means that each household contributes one data point to the dataset. We use equivalized household net income (EHNI) based on the OECD's equivalence scale as the measure of material resources available to a household .\footnote{This scale assigns a value of 1 to the first household member, of 0.7 to each additional adult and of 0.5 to each child.}~The equivalization adjustments of households' net incomes are essential for meaningful comparisons between different types of households, as it accounts for variations in household size and composition and considering the economies of scale that arise from sharing dwellings.

\par The HILDA panel survey has been conducted annually since 2001 and provides reports that present selected empirical findings on Australian households and individuals across the survey waves. A recent issue,~\cite{HILDA_Stat_Report_2022}, indicates an upward trend in the median and average EHNI. This suggests a steady decline in household poverty between 2001 and 2022, as shown in Figures 3.1 and 3.2 of their report. However, this empirical evidence is rather weak and potentially misleading as it does not consider (i) household poverty lines, which concerns the lower tails of the EHNI distributions, and (ii) a poverty index.

\par Our empirical analysis utilizes restricted stochastic dominance (RSD) orderings to effectively rank distributions of households'EHNI in terms of poverty. The concept of poverty orderings based on RSD conditions was introduced by~\cite{fostershorrocks1988b} to provide robust comparisons of poverty. This approach overcomes the challenge of defining a single poverty line to identify and categorize households as poor. Instead, it compares distributions across a range of poverty lines. In this context, the range of poverty lines corresponds to different levels of EHNI for various types of households. A household is considered to be in poverty if its EHNI falls below its corresponding poverty line. By employing RSD orderings and considering multiple poverty lines, our analysis offers a more comprehensive and nuanced understanding of poverty and avoids the limitations of relying on a single threshold to define poverty.

\par The robust comparison of households' EHNI distributions using the HILDA survey data is complicated by the presence of nonresponse, because the distributions are only partially identified. In producing their annual reports, the survey designers have implemented the following assumptions on nonresponse. They address unit nonresponse through re-weighting responding households by distributing the weight of nonresponding households to other like responding households, assuming nonrespondents are the same as respondents. In other words, they implement the MCAR assumption for unit nonresponse -- see~\cite{watsonfry2002} for more details. Furthermore, they address wave and item nonresponse through imputation.

\par The assumptions made by the survey designers on nonresponse, specifically unit and wave nonresponse, are implausible in practice. There is descriptive empirical evidence showing nonresponse rates tend to be highest at both ends of the income distribution. For example, logistic regression analyses demonstrate a U-shaped relationship between unit nonresponse and median weekly household incomes (see Table~A3.1 in~\citealp{watsonfry2002}) and wave nonresponse is also U-shaped with respect to individual EHNI (see Table~10.2 in~\citealp{Watson-Wooden.ch10}). However, the imputation of missing household incomes in the case of item nonresponse is considered reliable due to the availability of comprehensive information on partially responding households' characteristics and the income components of responding individuals. This additional information allows for reasonably accurate imputations -- see~\cite{watson2004} for an example of their imputation method for missing income data in wave 2.

\par The empirical analysis focuses on the first four waves, and uses the forgoing evidence on nonresponse to establish bounds on dominance contrasts, which are then implemented in our statistical procedure. The rest of this section is organized as follows. Section~\ref{Subsection - Emp Sec - Descrip} describes the dataset and Section~\ref{Subsection - Emp Sec - Results} presents our results with a discussion.

\subsection{Description of Dataset}\label{Subsection - Emp Sec - Descrip}

\par We have focused on pairwise comparisons between the first wave and waves 2, 3, and 4. In the notation introduced earlier, $A$ represents the population of Australian households in 2001, which corresponds to wave 1. $B$ would then represent either of the population of Australian households in the years 2002, 2003, and 2004, corresponding to waves 2, 3, and 4, respectively.

\par In the first wave of the HILDA survey, there were 12252 addresses issued which resulted in 804 addresses being identified as out of scope (as they were vacant, non-residential, or all members of the household were not living in Australia for 6 months or more). In addition, there were 245 households added to the sample due to multiple households living at one address. This resulted in $11693$ in-scope households of which 7682 responded. These responding households are followed through time.

\par Households can grow, split, and dissolve over time, which creates unbalanced panels. As the split households's were followed, to form a balanced panel, we double counted the split households, but equally distributed their design weights. For example, between waves 1 and 2, 712 of (responding) households in wave 1 split. This approach yields a balanced panel with $n=7682+712=8395$ unit responding households a from a sample size of $k=11693+712=12405$. Table~\ref{Table Nonresponse} reports the values of $k$ and $n$ for all of the wave pairs based on this construction of the balanced panel. There are other balancing approaches and we do not take a stand on which one to favour; see, for example, the "fare shares approach" described in~\cite{taylor2010} for a different re-balancing method.


\par Now we provide details on estimation of the fractions $\delta_{00}$, $\delta_{10}$, and $\delta_{11}$. A standard approach uses the survey design weights $\{\omega_i: i\in V\}$ in their estimation. However, the HILDA survey does not provide the subset of those weights corresponding to unit nonresponding households. This means the practitioner only has access to $\{\omega_i: i\in U\}$ and not $\{\omega_i: i\in V-U\}$. Furthermore, the survey team has scaled the provided weights so that $\sum_{i\in U}\omega_i=N_A$, where $N_A=7404297$ is the total number of Australian households in 2001. Despite this setup, we can still estimate $\delta_{00}$, $\delta_{10}$, and $\delta_{11}$ as follows. Estimate $\delta_{00}$ as $\hat{\delta}_{00}=(k-n)/k$, which is an unweighted estimator of $\delta_{00}$ on account of not having the complete set of weights. Since the provided weights satisfy $\sum_{i\in U}\omega_i=N_A$, we re-scale them so that the resulting weights $\{\omega^\prime_i: i\in U\}$ sum to $(n/k)N_A$. Then we estimate $\delta_{10}$ and $\delta_{11}$ as $\hat{\delta}_{10}=N^{-1}_A\sum_{i\in U:Z^B=0}\omega^\prime_i$ and $\hat{\delta}_{11}=1-\hat{\delta}_{10}-\hat{\delta}_{00}$, respectively. Table~\ref{Table Nonresponse} reports the estimates of these fractions for all wave pairs.

\begin{table}[pt]
\caption{Preliminary Calculations}\label{Table Nonresponse}
\centering
\bigskip
\resizebox{15cm}{!}{
\begin{tabular}{c*{8}{c}}
Population & $k$   & $n$    & $\hat{\delta}_{00}$ & $\hat{\delta}_{10}$  & $\hat{\delta}_{11}$ & $[\underline{t},\overline{t}]$ & Deflation Factor \\
\hline \\
Waves 1 and 2 & 12405    &  8395          & 0.3233       & 0.0855           & 0.5912 & $[6917.01,14872]$ & 1.03\smallskip\\
         \hline \\
Waves 1 and 3 & 12876    &  8865         &  0.3115      & 0.1275          & 0.561  & $[6622,14872]$ & 1.06 \smallskip\\
         \hline \\
Waves 1 and 4 & 13255   &  9244         & 0.3026     & 0.1601           & 0.5373 & $[6917.01,15262]$ & 1.08
 \smallskip\\
         \hline
\end{tabular}
}
\end{table}

\begin{figure}[t]
\centering
\includegraphics[width=16cm, height=10cm]{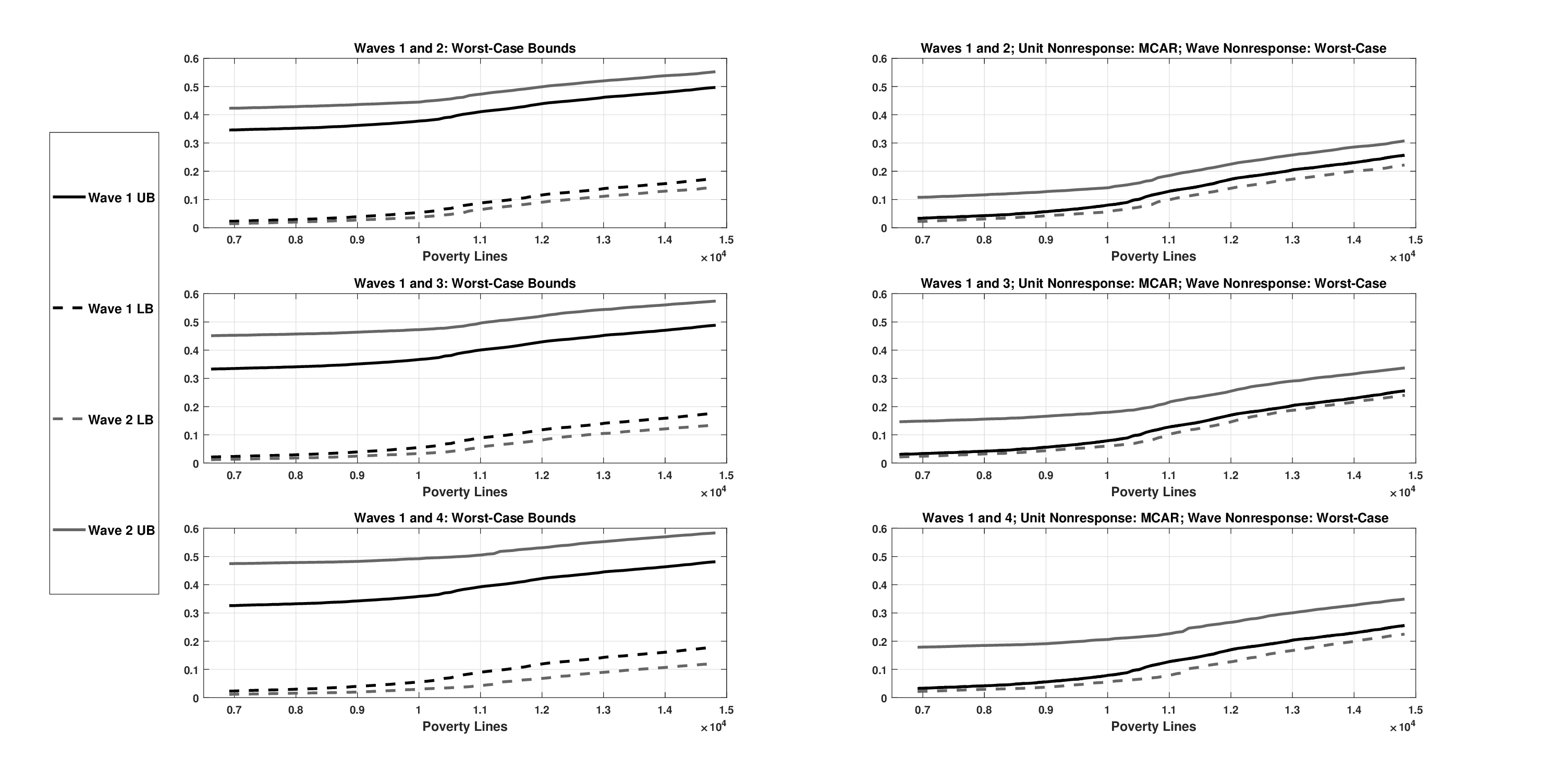}
\caption{Figures report bounds on $D^1_{N_A}$ and $D^1_{N_B}$ over the range of poverty lines under two sets of assumptions on nonresponse. The left figure reports the worst-case bounds. The right figure reports the bounds under the MCAR assumption on unit nonresponse and worst-case scenario on wave nonresponse.}\label{Figure: WC bounds and MCAR UNR}
\end{figure}

\par The weights $\{W^\prime_i,i\in U\}$ we use in the statistical procedure and in the derivation of our results are obtained by re-scaling $\{\omega^\prime_i: i\in U\}$ so that they sum to $n$. The range of poverty lines we consider are reported in Table~\ref{Table Nonresponse}, all denominated in 2001 AUD using the deflation factors obtained from the Reserve Bank of Australia's inflation calculator.\footnote{The inflation calculator's URL is \text{https://www.rba.gov.au/calculator/}.} The poverty line ranges are based on the tables reported in~\cite{Wilkins2001,Wilkins2002,Wilkins2003,Wilkins2004} after equivalizing them according to the OECD equivalence scale.

\par We do not expect the EHNI population distribution to change rapidly across the first 4 waves. We encode this restriction by setting $\mathcal{X}_A=\mathcal{X}_B$ for $B\in\{\text{wave 2, wave 3, wave 4}\}$. Furthermore, we have set $\mathcal{X}_A=\mathcal{X}_B=[-150000,1000000]$, where the upper and lower bounds are larger and smaller, respectively, than the observed values, because of the high incidence of unit nonresponse. Of course, this assumption is irrefutable; however, it is credible since there is evidence from logistic regression analyses reported in Table~A3.1 of~\cite{watsonfry2002} demonstrating a U-shaped relationship between unit nonresponse and median weekly household incomes of different neighborhoods. In consequence, the EHNI of unit nonresponders is likely to be in the tails of the EHNI population distribution. In practice, one can also study the sensitivity of outcomes based on it.

\par Figure~\ref{Figure: WC bounds and MCAR UNR} reports estimates of the bounds on the dominance functions $D^1_{N_A}$ and $D^1_{N_B}$ under two sets of assumptions on nonresponse for each pair of waves in our study. The left figure depicts their worst-case bounds. These bounds summarize what the data, and only the data, say about $D^1_{N_A}$ and $D^1_{N_B}$. They are instructive since it establishes ``a domain of consensus among researchers who may hold disparate beliefs about what assumptions are appropriate'' (\citealp{HOROWITZ2006445}). However, they are not informative in our setup because we find the identified set of $D^1_{N_A}$ is a proper subset of its $D^1_{N_B}$ counterpart for each pair of waves. The right panels of the figure depict the bounds under the MCAR assumption on unit nonresponse without any assumption on wave nonresponse. The HILDA survey implements this assumption on unit nonresponse in its data releases. It is a very strong assumption that point-identifies $D^1_{N_A}$. Without any assumptions on wave nonresponse, $D^1_{N_B}$ is only partially identified. As with the worst-case bounds, these bounds are also not informative because the point estimate of $D^1_{N_A}$ is an element of $D^1_{N_B}$'s identified set, for each pair of waves.

\par Finally, to approximate the design effect~(\ref{eq - deff}), we have used the jackknife and the replication design weights (provided by the survey's release). See~\cite{hayes2008} for a general description of this procedure to calculate standard errors of estimators using HILDA survey data.

\subsection{Results}\label{Subsection - Emp Sec - Results}
\par We would like to evaluate the dynamics of poverty between waves 1 and 2, waves 1 and 3, and waves 1 and 4, using the \emph{headcount ratio} for each poverty line in their respective range $[\underline{t},\overline{t}]$. Therefore, the testing problem of interest is given by~(\ref{eq - Testing Problem B Dom A}) with $s=1$; that is
\begin{align}\label{eq - Test Problem Empirical s1}
H_0 :\max_{x\in[\underline{t},\overline{t}]}\left(\overline{D}_{N_B}^{1}(x)-\underline{D}_{N_A}^{1}(x)\right)\geq0\;\;\text{vs.}\;\; H_1  :\max_{x\in[\underline{t},\overline{t}]}\left(\overline{D}_{N_B}^{1}(x)-\underline{D}_{N_A}^{1}(x)\right)<0.
\end{align}

\par This section reports the results of a sensitivity analysis of the event ``Reject $H_0$'' in the forgoing test problem for each pair of waves at the 5\% significance level, using nonresponse assumptions based on Examples~\ref{Example - NBD of MCAR} and~\ref{Example - Kline and Santos}, which present alternative perspectives on how missing data differ from observed outcomes. In those examples, the focus was on the contrast $x\mapsto D^1_{N_A}(x)-D^1_{N_B}(x)$, and in this testing problem, we must consider the contrast $x\mapsto D^1_{N_B}(x)-D^1_{N_A}(x)$ instead. Using the results of Propositions~\ref{prop - NBD of MCAR ID Set} and~\ref{prop - Kline Santos ID Set}, which report the forms of $\overline{D}_{N_B}^{1}$ and $\underline{D}_{N_A}^{1}$ for these two types of neighborhood assumptions, we can derive the corresponding identified sets of the contrast of interest.

\subsubsection{Neighborhood of MCAR: Kolmogorov-Smirnov Distance}
\par The result of Proposition~\ref{prop - Kline Santos ID Set} delivers the forms of $\overline{D}_{N_B}^{1}$ and $\underline{D}_{N_A}^{1}$. In particular, for each $x\in[\underline{t},\overline{t}]$ and  $\gamma_A,\gamma^{00}_B,\gamma^{10}_B\in[0,1]$, the dominance functions are given by$ \underline{D}_{N_A}^{1}(x)= F_A\left(x| Z^A=Z^B=1\right)\,(\delta_{11}+(1-\gamma_A)\delta_{00})+F_A\left(x| Z^A=1,Z^B=0\right)\,\delta_{10}$ and $\overline{D}_{N_B}^{1}(x)=  F_B\left(x| Z^A=Z^B=1\right)\,(\delta_{11}+(1-\gamma^{00}_B)\delta_{00}+(1-\gamma^{10}_B)\delta_{10})+(\gamma^{00}_B \delta_{00}+\gamma^{10}_B\delta_{10})$.

\par Next, we report results of the sensitivity analysis of the  event ``Reject $H_0$'' in the test problem~(\ref{eq - Test Problem Empirical s1}) for each pair of waves, with respect all values of the triple $(\gamma_A,\gamma^{00}_B,\gamma^{10}_B)$ in the 3-dimensional grid $\Xi=\{0,0.1,0,2,\ldots,0.9,1\}^3$. Let $\Xi^{1,j}$ denote the subset of $\Xi$ where this rejection of the null occurs for the waive pairs $1$ and $j$, for $j=2,3,4$. Interestingly, we did not find considerable deviations from the MCAR assumptions at the 5\% significance level, as the tests did not reject this null hypothesis for the majority of points in $\Xi$ for each pair of waves. Specifically, we have obtained $\Xi^{1,2}=\Xi^{1,3}=\left\{(0,0,0), (0.1,0,0)\right\}$ and $\Xi^{1,4}=\left\{(0,0,0),(0.1,0,0),(0.2,0,0),(0.3,0,0)\right\}$, where the MCAR assumption arises under the parameter specification $(\gamma_A,\gamma^{00}_B,\gamma^{10}_B)=(0,0,0)$. The implication is that the decline in poverty over time is not robust to deviations from the MCAR assumption with respect to the Kolmogorov-Smirnov distances between $F_K\left(\cdot| Z^A=Z^B=1\right)$ and $F_K\left(\cdot| Z^A=Z^B=0\right)$ for $K=A,B$, and between $F_B\left(\cdot| Z^A=Z^B=1\right)$ and $F_B\left(\cdot| Z^A=1,Z^B=0\right)$.

\par The non-rejections arise because $H_0$ in~(\ref{eq - Test Problem Empirical s1}) occurs in the sample, forcing the test statistic to equal zero. Therefore, we cannot conclude anything informative about the ranking of the two EHNI distributions using first-order restricted stochastic dominance, for $(\gamma_A,\gamma^{00}_B,\gamma^{10}_B)$ in $\Xi-\Xi^{1,j}$ and $j=2,3,4$. Hence, we have also considered sensitivity with respect to the poverty index by testing using second-order restricted dominance (i.e., $s=2$) and the values of $(\gamma_A,\gamma^{00}_B,\gamma^{10}_B)$ in $\Xi-\Xi^{1,j}$ for $j=2,3,4$. Ranking EHNI distributions using second-order restricted dominance corresponds to ranking them robustly according to the \emph{per capita income gap} poverty index. Using the same significance level, the tests did not reject this null hypothesis for the majority of points in $\Xi-\Xi^{1,j}$ and $j=2,3,4$. Let $\Xi^{1,j}_2\subset \Xi-\Xi^{1,j}$ be the set where this rejection of the null occurs for the waive pairs $1$ and $j$, for $j=2,3,3,4$. They are given by $\Xi^{1,2}_2=\Xi^{1,2}_2=\left\{(0.2,0,0), (0.3,0,0)\right\}$ and $\Xi^{1,4}_2=\left\{(\gamma_A,0,0):\gamma_A=0.4,0.5,0.6,0.7\right\}$. The situation is similar to the case of first-order restricted dominance above: the decline in poverty over time, now with respect to per capita income gap, is not robust to deviations from the MCAR assumption with respect to the Kolmogorov-Smirnov distances between $F_K\left(\cdot| Z^A=Z^B=1\right)$ and $F_K\left(\cdot| Z^A=Z^B=0\right)$ for $K=A,B$, and between $F_B\left(\cdot| Z^A=Z^B=1\right)$ and $F_B\left(\cdot| Z^A=1,Z^B=0\right)$.

\subsubsection{A U-Shape Restriction on Nonresponse Propensities}

\par This section's sensitivity analysis refines the worst-case bounds on the contrasts $\{D^1_{N_B}(x)-D^1_{N_A}(x),x\in [\underline{t},\overline{t}]\}$ using the descriptive empirical evidence on the U-shaped form of the unit and wave nonresponse propensities with respect to income measures. This evidence on nonresponse are from exploratory logistic regressions whose estimation output is reported in Table~A3.1 in~\cite{watsonfry2002} and Table~10.2 in~\cite{Watson-Wooden.ch10} for unit and wave nonresponse, respectively. Furthermore, they have found no statistically significant relationship with the income measures, which can be considered as weak evidence for this U-shaped form. It should be noted, however, that the validity of their tests rely on the correct specification of their model, which is likely misspecified. Thus, to take account of their findings and the likely misspecification of their model, we implement the U-shaped restriction on $\text{Prob}\left(Z^A=Z^B=0| Y^A=x\right)$ and $\text{Prob}\left(Z^A=1,Z^B=0| Y^B=x\right)$ through the structure described in Example~\ref{Example - NBD of MCAR}.

\par For each $x\in\mathcal{X}_K$ and $K\in\{A,B\}$, an application of Bayes' Theorem to the conditional probability $P\left(Z^A=Z^B=0| Y_K\leq x\right)$ reveals its form as $\frac{1}{F_K(x)}\int_{-\infty}^{x}P\left(Z^A=Z^B=0| Y_K=v\right)\,dF_K(v)$,
with a similar expression for $\text{Prob}\left(Z^A=1,Z^B=0| Y^B\leq x\right)$. Therefore, we can encode this shape information on the nonresponse propensities using the bounds
\begin{align}
 \delta_{00}\,L^{K}_{00}(x) & \leq \text{Prob}\left(Z^A=Z^B=0| Y^K\leq x\right)\leq U^{K}_{00}(x) \,\delta_{00},\quad K=A,B,\text{and}\label{eq - Prop Score 1} \\
 \delta_{10}\,L^B_{10}(x) & \leq \text{Prob}\left(Z^A=1,Z^B=0| Y^B\leq x\right)\leq U^B_{10}(x)\,\delta_{10},\label{eq - Prop Score 2}
\end{align}
where $L^{A}_{00}$, $U^{A}_{00}$, $L^{B}_{00}$, $U^{B}_{00}$, $L^{B}_{10}$, and $U^B_{10}$ are CDFs on the common support $[-150000,10^6]$, satisfying $L^{K}_{00}(\cdot)\leq U^{K}_{00}(\cdot)$ for $K=A,B,$ and $L^{B}_{10}(\cdot)\leq U^{B}_{10}(\cdot)$, and have U-shaped densities. Proposition~\ref{prop - NBD of MCAR ID Set} describes the bounds on the contrast implied by this assumption for $s\in\mathbb{Z}_+$. For $s=1$, they are given by
\begin{align}
\underline{D}_{N_A}^{1}(x) & =  F_A\left(x| Z^A=Z^B=1\right)\,\frac{\delta_{11}}{1-L^{A}_{00}(x)\delta_{00}}\nonumber\\
 & \qquad+F_A\left(x| Z^A=1,Z^B=0\right)\,\frac{\delta_{10}}{1-L^{A}_{00}(x)\delta_{00}}\quad\text{and}\label{eq - Emp Sec LB D_NA}\\
\overline{D}_{N_B}^{s}(x) & = F_B\left(x| Z^A=Z^B=1\right)\,\frac{\delta_{11}}{1-U^{B}_{00}(x)\delta_{00}-U^{B}_{10}(x)\delta_{10}},\label{eq - Emp Sec UB D_NB}
\end{align}
for each $x\in[\underline{t},\overline{t}]$. Within our estimating function approach, they arise under the following specification of $\varphi_{N_A,N_B}$: for each $x\in[\underline{t},\overline{t}]$, $\varphi_1(x)=\varphi_2(x)=-\frac{\delta_{11}+\delta_{10}}{1-L^{A}_{00}(x)\delta_{00}}$, $\varphi_3(x)=0$, and $\varphi_4(x)=-\frac{\delta_{11}+\delta_{10}}{1-U^{B}_{00}(x)\delta_{00}-U^{B}_{10}(x)\delta_{10}}$.

\begin{sidewaysfigure}
\centering
\subfloat[Waves 1 and 2]{\includegraphics[width=7cm, height=7cm]{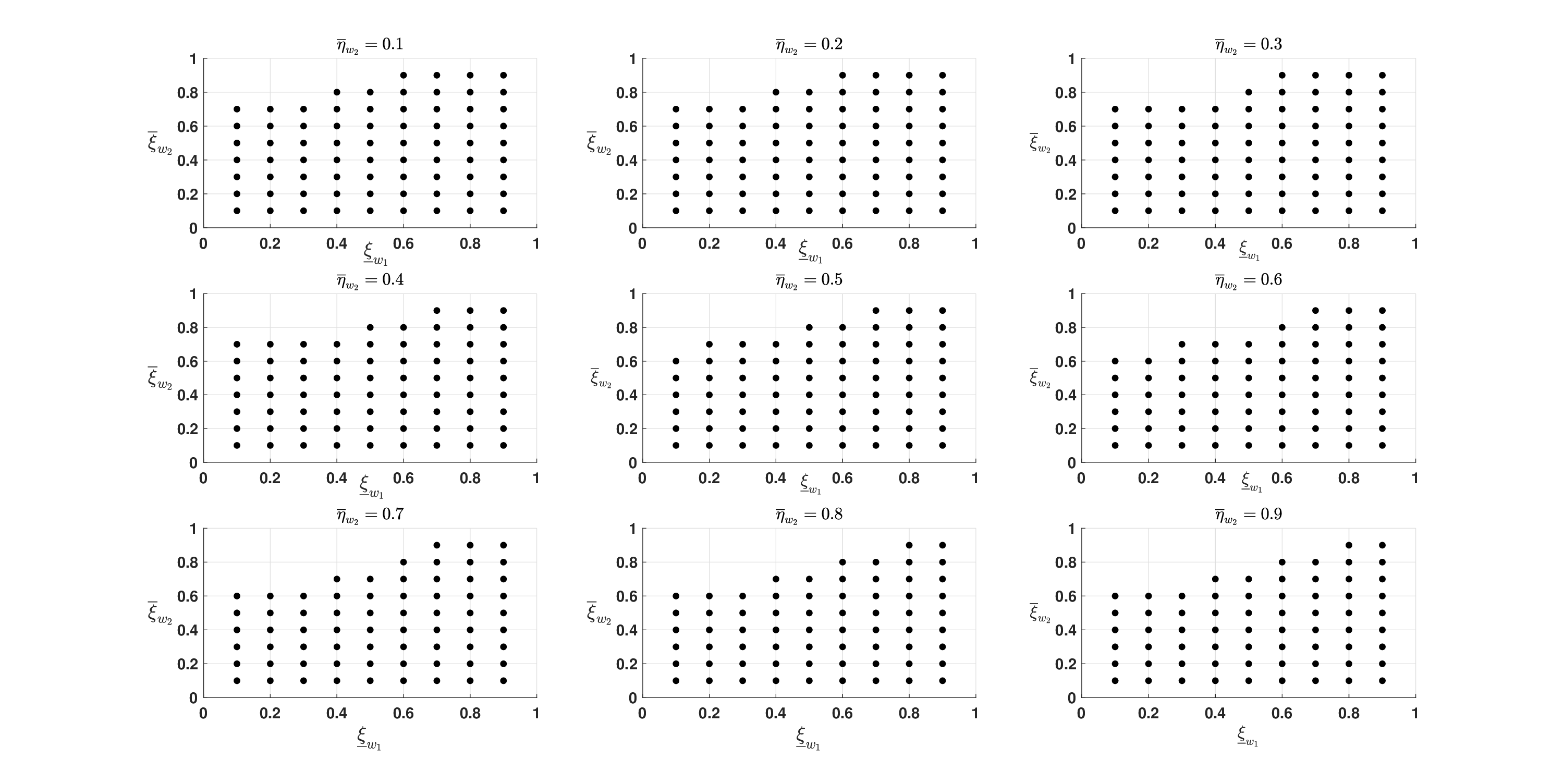}\label{Figure: Sensitivity Waves 1 and 2}}\quad
\subfloat[Waves 1 and 3]{\includegraphics[width=7cm, height=7cm]{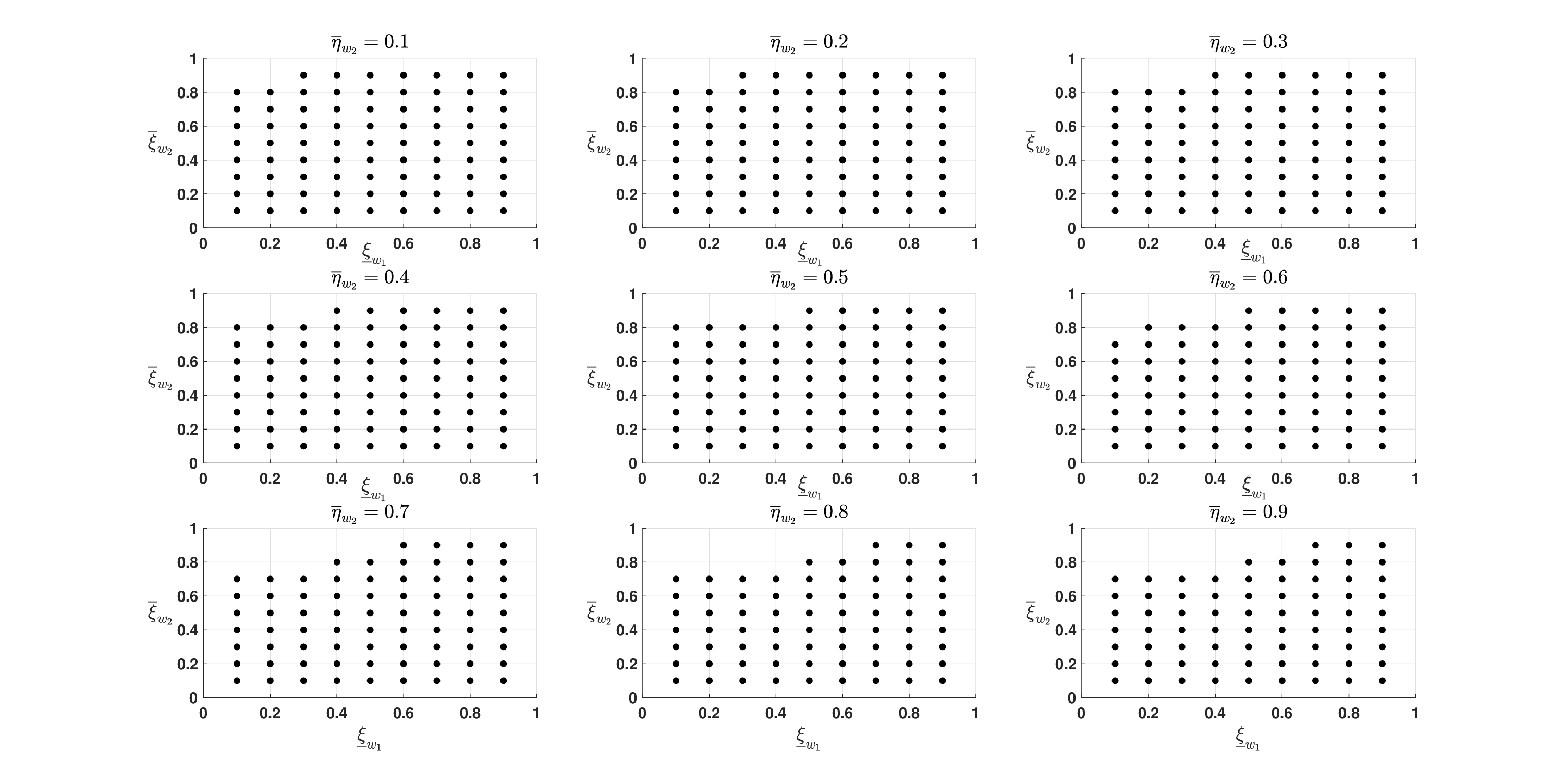}\label{Figure: Sensitivity Waves 1 and 3}}\quad
\subfloat[Waves 1 and 4]{\includegraphics[width=7cm, height=7cm]{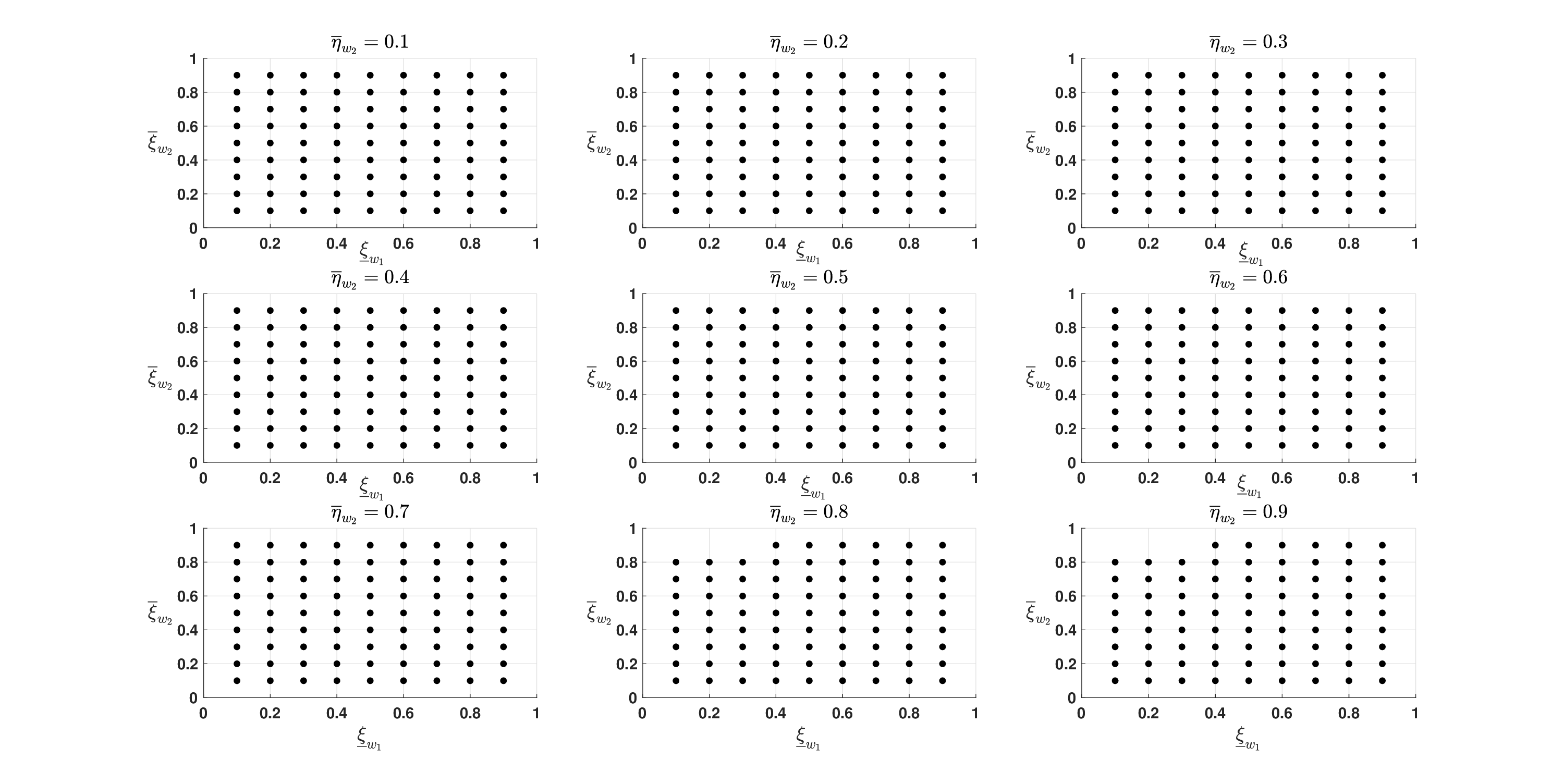}\label{Figure: Sensitivity Waves 1 and 4}}
\caption{Scatter plots of $(\underline{\xi}_{w_1},\overline{\xi}_{w_2},\overline{\eta}_{w_2})$ corresponding to ``Reject the null'' in (\ref{eq - Test Problem Empirical s1})}\label{Figure: Sensitivity all}
\end{sidewaysfigure}

\par Given the form of the bounds on the dominance contrast, we only need to specify $L^{A}_{00}$, $U^{B}_{00}$ and $U^{B}_{10}$ to implement them. We specify them as elements of the family of Generalized Arcsine parametric family. This family has U-shaped PDFs given by
\begin{align*}
g(x;\xi)=\pi^{-1}\,\sin\left(\pi\,\xi\right)\,(x-\underline{x})^{-\xi}\,(\overline{x}-x)^{\xi-1},\quad x\in[\underline{x},\overline{x}],
\end{align*}
where $\xi\in(0,1)$ is a shape parameter, and $[\underline{x},\overline{x}]=[-150000,10^6]$ is the support. This parametric family is ordered with respect to the parameter $\xi$ as follows: $\xi_1\leq \xi_2\implies G(x;\xi_1)\leq  G(x;\xi_2)$, where $G(x,\xi_i)=\int_{-150000}^{x} g(r;\xi_i)\,dr$ and $\xi_i\in(0,1)$ for $i=1,2$. The sensitivity analysis studies the sensitivity of the empirical outcome with respect to configurations of $L^{A}_{00}$, $U^{B}_{00}$ and $U^{B}_{10}$ in this parametric family. Notably, the uniform distribution $U[-150000,10^6]$ is not an member of this parametric family. Hence, the inequalities (\ref{eq - Prop Score 1}) and (\ref{eq - Prop Score 2}) do not define a neighbourhood of the MCAR nonresponse propensities under this parametric shape restriction.

\par Next, we report the results of the sensitivity analysis of the  event ``Reject $H_0$'' in the test problem~(\ref{eq - Test Problem Empirical s1}) for each wave pair, with respect to the choice of $L^{A}_{00}$, $U^{B}_{00}$ and $U^{B}_{10}$ within the Generalized Arcsine parametric family, where $D_{N_A}^{1}(\cdot)$ and $D_{N_B}^{1}(\cdot)$ are described in~(\ref{eq - Emp Sec LB D_NA}) and~(\ref{eq - Emp Sec UB D_NB}), respectively. Let $L^{A}_{00}(\cdot)=G(\cdot,\underline{\xi}_{w_1})$, $U^{B}_{00}(\cdot)=G(\cdot,\overline{\xi}_{w_2})$ and $U^{B}_{10}(\cdot)=G(\cdot,\overline{\eta}_{w_2})$. The testing procedure was implemented for all values of the triple $(\underline{\xi}_{w_1},\overline{\xi}_{w_2},\overline{\eta}_{w_2})$ in the 3-dimensional grid $\Xi=\{0.1,0,2,\ldots,0.9\}^3$.

\par Figure~\ref{Figure: Sensitivity all} reports the results of these tests using scatter plots, and all tests were conducted at the 5\% significance level. The scatter of black dots in each panel of these figures correspond to the subset of $\Xi$ given by
$\Xi_{1}=\left\{(\underline{\xi}_{w_1},\overline{\xi}_{w_2},\overline{\eta}_{w_2})\in\Xi:\, \text{``Reject $H_0$ in~(\ref{eq - Test Problem Empirical s1})''}\right\}$. The subset $\Xi_0=\Xi-\Xi_{1}$ of $\Xi$ is where non-rejection of $H_0$ has occurred. Let $\Xi^{1,2}_{1}$, $\Xi^{1,3}_{1}$, $\Xi^{1,4}_{1}$ be the set $\Xi_{1}$ but corresponding to the tests for pairwise comparison of waves 1 and 2, 1 and 3, and 1 and 4, respectively. Furthermore, define $\Xi^{1,j}_{0}=\Xi-\Xi^{1,j}_{1}$ for $j=2,3,4$. From these figures observe that
$\Xi^{1,2}_{1}\subset\Xi^{1,3}_{1}\subset\Xi^{1,4}_{1}$ and $\Xi^{1,4}_{0}\subset\Xi^{1,3}_{0}\subset\Xi^{1,2}_{0}$, hold. These subset relationships also hold for each subfigure of Figures~\ref{Figure: Sensitivity Waves 1 and 2} -~\ref{Figure: Sensitivity Waves 1 and 4}. This result is expected since the wave nonresponse rates, given by $\hat{\delta}_{10}$ in Table~\ref{Table Nonresponse}, double when moving from comparisons between waves 1 and 2 to that of waves 1 and 4. This doubling has lowered the worst-case lower bound of population $B$ in the comparisons, enabling more configurations of the contrasts within the Generalized Arcsine parametric family to satisfy $H_1$ in the sample, and hence, a chance at rejecting $H_0$.

\par Despite the prevalence of nonresponse, overall, the rejection event is a little sensitive to values of $(\underline{\xi}_{w_1},\overline{\xi}_{w_2},\overline{\eta}_{w_2})\in \Xi$, as $\Xi^{1,j}_{1}$ consists of most elements of $\Xi$ and $\Xi^{1,j}_{0}$ consists only of elements of $\Xi$ where the test statistic equalled zero, for $j=2,3,4$. In fact, the rejection event's sensitivity declines with comparisons of wave 1 with later waves, as $\Xi^{1,4}_{1}$ is almost the entire grid $\Xi$. This empirical finding provides credible evidence at the 5\% significance level that poverty among Australian households has declined between years 2001 to 2002, 2001 and 2003, and 2001 and 2004, according to the headcount ratio over their corresponding set of poverty lines given in Table~\ref{Table Nonresponse}.

\section{Conclusion}\label{Section - Conclusion}
We have proposed a comprehensive design-based framework for executing tests of restricted stochastic dominance with paired data from survey panels that accounts for the identification problem created by nonresponse. The methodology employs an estimating function procedure with nuisance functionals that can encode a broad spectrum of assumptions on nonreponse. Hence, practitioners can use our framework to perform a sensitivity analysis of testing conclusions based on the assumptions they are willing to entertain. We have illustrated the scope of our procedure using data from the HILDA survey in studying the sensitivity of their documented decrease in poverty between 2001 and 2004 in Australia using two types of assumptions on nonresponse. The first assumption embodies the divergence between missing data and observed outcomes through the Kolmogorov-Smirnov distance, while the second assumption constrains the shape of unit and wave nonresponse with incomes. We have found this decrease in poverty is (i) sensitive to departures from ignorability with the Kolmogorov-Smirnov neighborhood assumption, and (ii) relatively robust within a class of nonresponse propensities whose boundary is modeled semiparmetrically using CDFs of the Generalized Arcsine family of distributions.

\section{Acknowledgement}
Rami Tabri expresses gratitude to Brendan K. Beare, Elie T. Tamer, Isaiah Andrews, Aureo de Paula, Mervyn J. Silvapulle, and Christopher Walker for their valuable feedback. Special thanks to the Economics Department at Harvard University and the HILDA team at the Melbourne Institute: Applied Economic and Social Research, University of Melbourne, for their hospitality during his visit. We also thank Sarah C. Dahmann for her assistance with the data preparation for the empirical illustration
\bibliographystyle{chicago}
\bibliography{mybib}

\begin{thebibliography}{}

\bibitem[\protect\citeauthoryear{Abadie}{Abadie}{2002}]{Abadie2002}
Abadie, A. (2002).
\newblock Bootstrap tests for distributional treatment effects in instrumental
  variable models.
\newblock {\em Journal of the American Statistical Association\/}~{\em
  97\/}(457), 284--292.

\bibitem[\protect\citeauthoryear{Abadie, Athey, Imbens, and Wooldridge}{Abadie
  et~al.}{2020}]{Abadie-Athey-Imbens-Wooldridge}
Abadie, A., S.~Athey, G.~W. Imbens, and J.~M. Wooldridge (2020).
\newblock Sampling-based versus design-based uncertainty in regression
  analysis.
\newblock {\em Econometrica\/}~{\em 88\/}(1), 265--296.

\bibitem[\protect\citeauthoryear{Aitchison}{Aitchison}{1962}]{Aitchison-1962}
Aitchison, J. (1962).
\newblock Large-sample restricted parametric tests.
\newblock {\em Journal of the Royal Statistical Society: Series B
  (Methodological)\/}~{\em 24\/}(1), 234--250.

\bibitem[\protect\citeauthoryear{Andrews and Shi}{Andrews and
  Shi}{2017}]{ANDREWS2017275}
Andrews, D.~W. and X.~Shi (2017).
\newblock Inference based on many conditional moment inequalities.
\newblock {\em Journal of Econometrics\/}~{\em 196\/}(2), 275--287.

\bibitem[\protect\citeauthoryear{Andrews and Soares}{Andrews and
  Soares}{2010}]{Andrews-Soares2010}
Andrews, D.~W. and G.~Soares (2010).
\newblock Inference for {P}arameters {D}efined by {M}oment {I}nequalities using
  {G}eneralized {M}oment {S}election.
\newblock {\em Econometrica\/}~{\em 78\/}(1), 119--157.

\bibitem[\protect\citeauthoryear{Aryal and Gabrielli}{Aryal and
  Gabrielli}{2013}]{Aryal-Gabrielli}
Aryal, G. and F.~M. Gabrielli (2013).
\newblock Testing for collusion in asymmetric first-price auctions.
\newblock {\em International Journal of Industrial Organization\/}~{\em 31},
  26--35.

\bibitem[\protect\citeauthoryear{Atkinson}{Atkinson}{1970}]{atkinson1970}
Atkinson, A.~B. (1970).
\newblock On the {M}easurement of {I}nequality.
\newblock {\em Journal of Economic Theory\/}~{\em 2}, 244--263.

\bibitem[\protect\citeauthoryear{Atkinson}{Atkinson}{1987}]{atkinson1987}
Atkinson, A.~B. (1987).
\newblock On the measurement of poverty.
\newblock {\em Econometrica\/}~{\em 55\/}(4), 749--764.

\bibitem[\protect\citeauthoryear{Barrett and Donald}{Barrett and
  Donald}{2003}]{barrettdonald2003}
Barrett, G.~F. and S.~G. Donald (2003).
\newblock Consistent tests for stochastic dominance.
\newblock {\em Econometrica\/}~{\em 71\/}(1), 71--104.

\bibitem[\protect\citeauthoryear{Berger}{Berger}{2020}]{berger2020empirical}
Berger, Y.~G. (2020).
\newblock An empirical likelihood approach under cluster sampling with missing
  observations.
\newblock {\em Annals of the Institute of Statistical Mathematics\/}~{\em 72},
  91--121.

\bibitem[\protect\citeauthoryear{Bhattacharya}{Bhattacharya}{2005}]{BHATTACHARYA2005145}
Bhattacharya, D. (2005).
\newblock Asymptotic inference from multi-stage samples.
\newblock {\em Journal of Econometrics\/}~{\em 126\/}(1), 145--171.

\bibitem[\protect\citeauthoryear{Bhattacharya}{Bhattacharya}{2007}]{BHATTACHARYA2007674}
Bhattacharya, D. (2007).
\newblock Inference on inequality from household survey data.
\newblock {\em Journal of Econometrics\/}~{\em 137\/}(2), 674--707.

\bibitem[\protect\citeauthoryear{Blundell, Gosling, Ichimura, and
  Meghir}{Blundell et~al.}{2007}]{blundell2007}
Blundell, R., A.~Gosling, H.~Ichimura, and C.~Meghir (2007).
\newblock Changes in the distribution of male and female wages accounting for
  employment composition using bounds.
\newblock {\em Econometrica\/}~{\em 75\/}(2), 323--363.

\bibitem[\protect\citeauthoryear{Bourguignon}{Bourguignon}{2018}]{Bourguignon-2018}
Bourguignon, F. (2018).
\newblock Simple adjustments of observed distributions for missing income and
  missing people.
\newblock {\em Journal of Economic Inequality\/}~{\em 16}, 171--188.

\bibitem[\protect\citeauthoryear{Chen and Sitter}{Chen and
  Sitter}{1999}]{chensitter1999}
Chen, J. and R.~R. Sitter (1999).
\newblock A pseudo empirical likelihood approach to the effective use of
  auxiliary information in complex surveys.
\newblock {\em Statistica Sinica\/}~{\em 9}, 385--406.

\bibitem[\protect\citeauthoryear{Chen and Duclos}{Chen and
  Duclos}{2011}]{Chen-Duclos}
Chen, W.-H. and J.-Y. Duclos (2011).
\newblock Testing for poverty dominance: an application to canada.
\newblock {\em The Canadian Journal of Economics / Revue canadienne
  d'Economique\/}~{\em 44\/}(3), 781--803.

\bibitem[\protect\citeauthoryear{\c{S}eker and Jenkins}{\c{S}eker and
  Jenkins}{2015}]{seker2015poverty}
\c{S}eker, S.~D. and S.~P. Jenkins (2015).
\newblock Poverty trends in turkey.
\newblock {\em Journal of Economic Inequality\/}~{\em 13\/}(3), 401--424.

\bibitem[\protect\citeauthoryear{Davidson}{Davidson}{2008}]{davidsonSD2008}
Davidson, R. (2008).
\newblock Stochastic dominance.
\newblock In M.~Vernengo, E.~P. Caldentey, and B.~J.~R. Jr (Eds.), {\em The New
  Palgrave Dictionary of Economics}. Palgrave Macmillan.

\bibitem[\protect\citeauthoryear{Davidson and Duclos}{Davidson and
  Duclos}{2013}]{davidsonduclos2013}
Davidson, R. and J.-Y. Duclos (2013).
\newblock Testing for restricted stochastic dominance.
\newblock {\em Econometric Reviews\/}~{\em 32\/}(1), 84--125.

\bibitem[\protect\citeauthoryear{Deaton}{Deaton}{1997}]{Deaton}
Deaton, A. (1997).
\newblock {\em The Analysis of Household Surveys: A Microeconometric Approach
  to Development Policy}.
\newblock International Bank for Reconstruction and Development/The World Bank.

\bibitem[\protect\citeauthoryear{Deville and Sarndal}{Deville and
  Sarndal}{1992}]{Deville_Sarndal-1992}
Deville, J.-C. and C.-E. Sarndal (1992).
\newblock Calibration estimators in survey sampling.
\newblock {\em Journal of the American Statistical Association\/}~{\em
  87\/}(418), 376--382.

\bibitem[\protect\citeauthoryear{Donald and Hsu}{Donald and
  Hsu}{2016}]{Donald-Hsu}
Donald, S.~G. and Y.-C. Hsu (2016).
\newblock Improving the power of tests of stochastic dominance.
\newblock {\em Econometric Reviews\/}~{\em 35\/}(4), 553--585.

\bibitem[\protect\citeauthoryear{Fakih, Makdissi, Marrouch, Tabri, and
  Yazbeck}{Fakih et~al.}{2022}]{Tabri2021}
Fakih, A., P.~Makdissi, W.~Marrouch, R.~V. Tabri, and M.~Yazbeck (2022).
\newblock A stochastic dominance test under survey nonresponse with an
  application to comparing trust levels in lebanese public institutions.
\newblock {\em Journal of Econometrics\/}~{\em 228\/}(2), 342--358.

\bibitem[\protect\citeauthoryear{Fong}{Fong}{2010}]{FONG20101237}
Fong, W.~M. (2010).
\newblock A stochastic dominance analysis of yen carry trades.
\newblock {\em Journal of Banking \& Finance\/}~{\em 34\/}(6), 1237--1246.

\bibitem[\protect\citeauthoryear{Foster and Shorrocks}{Foster and
  Shorrocks}{1988}]{fostershorrocks1988b}
Foster, J.~E. and A.~F. Shorrocks (1988).
\newblock Poverty orderings.
\newblock {\em Econometrica\/}~{\em 56\/}(1), 173--177.

\bibitem[\protect\citeauthoryear{Fuller}{Fuller}{2009}]{fuller2009}
Fuller, W.~A. (2009).
\newblock {\em Sampling Statistics}.
\newblock Wiley.

\bibitem[\protect\citeauthoryear{Godambe and Thompson}{Godambe and
  Thompson}{2009}]{GODAMBE-Thompson-book-chapter}
Godambe, V. and M.~E. Thompson (2009).
\newblock Chapter 26 - estimating functions and survey sampling.
\newblock In C.~Rao (Ed.), {\em Handbook of Statistics}, Volume~29 of {\em
  Handbook of Statistics}, pp.\  83--101. Elsevier.

\bibitem[\protect\citeauthoryear{H\'ajek}{H\'ajek}{1971}]{hajek1971}
H\'ajek, J. (1971).
\newblock Comment on \emph{An Essay on the Logical Foundations of Survey
  Sampling, Part One}.
\newblock In A.~DasGupta (Ed.), {\em Selected Works of Debabrata Basu}.
  Springer.

\bibitem[\protect\citeauthoryear{Harris and Mapp}{Harris and
  Mapp}{1986}]{Harris-Mapp}
Harris, T.~R. and H.~P. Mapp (1986).
\newblock A stochastic dominance comparison of water-conserving irrigation
  strategies.
\newblock {\em American Journal of Agricultural Economics\/}~{\em 68\/}(2),
  298--305.

\bibitem[\protect\citeauthoryear{Hayes}{Hayes}{2008}]{hayes2008}
Hayes, C. (2008).
\newblock Hilda standard errors: A users guide.
\newblock HILDA Project Technical Paper Series 2/08, Melbourne Institute of
  Applied Economic \& Social Research, The University of Melbourne.

\bibitem[\protect\citeauthoryear{Horowitz and Manski}{Horowitz and
  Manski}{2006}]{HOROWITZ2006445}
Horowitz, J.~L. and C.~F. Manski (2006).
\newblock Identification and estimation of statistical functionals using
  incomplete data.
\newblock {\em Journal of Econometrics\/}~{\em 132\/}(2), 445--459.

\bibitem[\protect\citeauthoryear{Imbens and Lancaster}{Imbens and
  Lancaster}{1994}]{Imbens-Lancaster-1994}
Imbens, G.~W. and T.~Lancaster (1994).
\newblock Combining micro and macro data in microeconometric models.
\newblock {\em The Review of Economic Studies\/}~{\em 61\/}(4), 655--680.

\bibitem[\protect\citeauthoryear{Kaur, Prakasa~Rao, and Singh}{Kaur
  et~al.}{1994}]{KPS}
Kaur, A., B.~Prakasa~Rao, and H.~Singh (1994).
\newblock Testing for {S}econd {O}rder {S}tochastic {D}omiance of {T}wo
  {D}istributions.
\newblock {\em Econometric Theory\/}~{\em 10}, 849--866.

\bibitem[\protect\citeauthoryear{Kline and Santos}{Kline and
  Santos}{2013}]{klinesantos2013}
Kline, P. and A.~Santos (2013).
\newblock Sensitivity to missing data assumptions: Theory and an evaluation of
  the u.s. wage structure.
\newblock {\em Quantitative Economics\/}~{\em 4\/}(2), 231--267.

\bibitem[\protect\citeauthoryear{Linton, Maasoumi, and Whang}{Linton
  et~al.}{2005}]{Linton-Maasoumi-Whang}
Linton, O., E.~Maasoumi, and Y.-J. Whang (2005).
\newblock Consistent testing for stochastic dominance under general sampling
  schemes.
\newblock {\em The Review of Economic Studies\/}~{\em 72\/}(3), 735--765.

\bibitem[\protect\citeauthoryear{Linton, Song, and Whang}{Linton
  et~al.}{2010}]{Linton2010}
Linton, O., K.~Song, and Y.-J. Whang (2010).
\newblock An {I}mproved {B}ootstrap {T}est for {S}tochastic {D}ominance.
\newblock {\em Journal Of Econometrics\/}~{\em 154}, 186--202.

\bibitem[\protect\citeauthoryear{Lok and Tabri}{Lok and
  Tabri}{2021}]{LOK-Tabri2021}
Lok, T.~M. and R.~V. Tabri (2021).
\newblock An improved bootstrap test for restricted stochastic dominance.
\newblock {\em Journal of Econometrics\/}.

\bibitem[\protect\citeauthoryear{Lustig}{Lustig}{2017}]{Lustig-2017}
Lustig, N. (2017).
\newblock The mising rich in household surveys: causes and correction
  approaches.
\newblock Technical Report~75, Tulane University.

\bibitem[\protect\citeauthoryear{Manski}{Manski}{2016}]{manski2016}
Manski, C.~F. (2016).
\newblock Credible interval estimates for official statistics with survey
  nonresponse.
\newblock {\em Journal of Econometrics\/}~{\em 191\/}(2), 293--301.

\bibitem[\protect\citeauthoryear{McFadden}{McFadden}{1989}]{mcfadden1989}
McFadden, D. (1989).
\newblock Testing for stochastic dominance.
\newblock In T.~B. Fomby and T.~K. Seo (Eds.), {\em Studies in the Economics of
  Uncertainty}, pp.\  113--134. Springer.

\bibitem[\protect\citeauthoryear{Minviel and Benoit}{Minviel and
  Benoit}{2022}]{minviel_benoit_2022}
Minviel, J.~J. and M.~Benoit (2022).
\newblock Economies of diversification and stochastic dominance analysis in
  french mixed sheep farms.
\newblock {\em Agricultural and Resource Economics Review\/}~{\em 51\/}(1),
  156–177.

\bibitem[\protect\citeauthoryear{Neyman}{Neyman}{1934}]{neyman1934twoaspects}
Neyman, J. (1934).
\newblock On the two different aspects of the representative method: The method
  of stratified sampling and the method of purposive selection.
\newblock {\em Journal of the Royal Statistical Society\/}~{\em 97\/}(4),
  558--625.

\bibitem[\protect\citeauthoryear{Parente and Smith}{Parente and
  Smith}{2017}]{PARENTE20171}
Parente, P.~M. and R.~J. Smith (2017).
\newblock Tests of additional conditional moment restrictions.
\newblock {\em Journal of Econometrics\/}~{\em 200\/}(1), 1--16.

\bibitem[\protect\citeauthoryear{Qin, Zhang, and Leung}{Qin
  et~al.}{2009}]{Qin-Zhang-Leung2009}
Qin, J., B.~Zhang, and D.~H.~Y. Leung (2009).
\newblock Empirical likelihood in missing data problems.
\newblock {\em Journal of the American Statistical Association\/}~{\em
  104\/}(488), 1492--1503.

\bibitem[\protect\citeauthoryear{Raubertas, Lee, and Nordheim}{Raubertas
  et~al.}{1986}]{Raubertas1986}
Raubertas, R.~F., C.-I.~C. Lee, and E.~V. Nordheim (1986).
\newblock Hypothesis tests for normal means constrained by linear inequalities.
\newblock {\em Communications in Statistics - Theory and Methods\/}~{\em
  15\/}(9), 2809--2833.

\bibitem[\protect\citeauthoryear{Rubin-Bleuer and Kratina}{Rubin-Bleuer and
  Kratina}{2005}]{bleuerkratina2005}
Rubin-Bleuer, S. and I.~S. Kratina (2005).
\newblock On the two-phase framework for joint model and design-based
  inference.
\newblock {\em The Annals of Statistics\/}~{\em 33\/}(6), 2789--2810.

\bibitem[\protect\citeauthoryear{Sarndal and Lundstrom}{Sarndal and
  Lundstrom}{2005}]{Sarndal-Lundstrom}
Sarndal, C.~E. and S.~Lundstrom (2005).
\newblock {\em Estimation in Surveys with Nonresponse}.
\newblock Wiley \& Sons.

\bibitem[\protect\citeauthoryear{Taylor}{Taylor}{2010}]{taylor2010}
Taylor, M.~F. (2010).
\newblock {\em British Household Panel Survey User Manual Volume A:
  Introduction, Technical Report and Appendices}.
\newblock University of Essex.

\bibitem[\protect\citeauthoryear{Wang}{Wang}{2012}]{wang2012}
Wang, J.~C. (2012).
\newblock Sample distribution function based goodness-of-fit test for complex
  surveys.
\newblock {\em Computational Statistics \& Data Analysis\/}~{\em 56\/}(3),
  664--679.

\bibitem[\protect\citeauthoryear{Wang, Peng, and Kim}{Wang
  et~al.}{2022}]{Wang-Peng-Kim-2022}
Wang, Z., L.~Peng, and J.~K. Kim (2022, 04).
\newblock {Bootstrap Inference for the Finite Population Mean under Complex
  Sampling Designs}.
\newblock {\em Journal of the Royal Statistical Society Series B: Statistical
  Methodology\/}~{\em 84\/}(4), 1150--1174.

\bibitem[\protect\citeauthoryear{Watson and Fry}{Watson and
  Fry}{2002}]{watsonfry2002}
Watson, N. and T.~R. Fry (2002).
\newblock The household, income and labour dynamics in australia (hilda)
  survey: Wave 1 weighting.
\newblock HILDA Project Technical Paper Series 03/02, Melbourne Institute of
  Applied Economic \& Social Research, The University of Melbourne.

\bibitem[\protect\citeauthoryear{Watson and Wooden}{Watson and
  Wooden}{2002}]{watson2002}
Watson, N. and M.~Wooden (2002).
\newblock The household, income and labour dynamics in australia (hilda)
  survey: Wave 1 survey methodology.
\newblock HILDA Project Technical Paper Series 01/02, Melbourne Institute of
  Applied Economic \& Social Research, The University of Melbourne.

\bibitem[\protect\citeauthoryear{Watson and Wooden}{Watson and
  Wooden}{2004}]{watson2004}
Watson, N. and M.~Wooden (2004).
\newblock Assessing the quality of the hilda survey wave 2 data.
\newblock HILDA Project Technical Paper Series 5/04, Melbourne Institute of
  Applied Economic \& Social Research, The University of Melbourne.

\bibitem[\protect\citeauthoryear{Watson and Wooden}{Watson and
  Wooden}{2009}]{Watson-Wooden.ch10}
Watson, N. and M.~Wooden (2009).
\newblock {\em Identifying Factors Affecting Longitudinal Survey Response},
  Chapter~10, pp.\  157--181.
\newblock John Wiley \& Sons, Ltd.

\bibitem[\protect\citeauthoryear{Whang}{Whang}{2019}]{whang2019}
Whang, Y.-J. (2019).
\newblock {\em Econometric Analysis of Stochastic Dominance: Concepts, Methods,
  Tools and Applications}.
\newblock Cambridge University Press.

\bibitem[\protect\citeauthoryear{Wilkins}{Wilkins}{2001}]{Wilkins2001}
Wilkins, R. (2001).
\newblock Poverty lines: Australia september quarter 2001.
\newblock Online Newsletter.
\newblock Melbourne Institute of Applied Economic and Social Research, The
  University of Melbourne.

\bibitem[\protect\citeauthoryear{Wilkins}{Wilkins}{2002}]{Wilkins2002}
Wilkins, R. (2002).
\newblock Poverty lines: Australia march quarter 2002.
\newblock Online Newsletter.
\newblock Melbourne Institute of Applied Economic and Social Research, The
  University of Melbourne.

\bibitem[\protect\citeauthoryear{Wilkins}{Wilkins}{2003}]{Wilkins2003}
Wilkins, R. (2003).
\newblock Poverty lines: Australia december quarter 2003.
\newblock Online Newsletter.
\newblock Melbourne Institute of Applied Economic and Social Research, The
  University of Melbourne.

\bibitem[\protect\citeauthoryear{Wilkins}{Wilkins}{2004}]{Wilkins2004}
Wilkins, R. (2004).
\newblock Poverty lines: Australia september quarter 2004.
\newblock Online Newsletter.
\newblock Melbourne Institute of Applied Economic and Social Research, The
  University of Melbourne.

\bibitem[\protect\citeauthoryear{Wilkins, Vera-Toscano, Botha, Wooden, and
  Trinh}{Wilkins et~al.}{2022}]{HILDA_Stat_Report_2022}
Wilkins, R., E.~Vera-Toscano, F.~Botha, M.~Wooden, and T.-A. Trinh (2022).
\newblock The 17th annual statistical report of the hilda survey.
\newblock Hilda statistical report series, Melbourne Institute of Applied
  Economic \& Social Research, The University of Melbourne.

\bibitem[\protect\citeauthoryear{Wu and Lu}{Wu and Lu}{2016}]{Wu-Lu-2016}
Wu, C. and W.~W. Lu (2016).
\newblock Calibration weighting methods for complex surveys.
\newblock {\em International Statistical Review / Revue Internationale de
  Statistique\/}~{\em 84\/}(1), 79--98.

\bibitem[\protect\citeauthoryear{Wu and Rao}{Wu and Rao}{2006}]{wurao2006}
Wu, C. and J.~N.~K. Rao (2006).
\newblock Pseudo-empirical likelihood ratio confidence intervals for complex
  surveys.
\newblock {\em The Canadian Journal of Statistics / La Revue Canadienne de
  Statistique\/}~{\em 34\/}(3), 359--375.

\bibitem[\protect\citeauthoryear{Zhao, Haziza, and Wu}{Zhao
  et~al.}{2020}]{zhaowu2020}
Zhao, P., D.~Haziza, and C.~Wu (2020).
\newblock Survey weighted estimating equation inference with nuisance
  functionals.
\newblock {\em Journal of Econometrics\/}~{\em 216\/}(2), 516--536.

\bibitem[\protect\citeauthoryear{Zheng}{Zheng}{2002}]{Zheng}
Zheng, B. (2002).
\newblock Testing lorenz curves with non-simple random samples.
\newblock {\em Econometrica\/}~{\em 70\/}(3), 1235--1243.

\end{thebibliography}
\appendix
\setcounter{page}{1}
\section{Appendix}
This appendix provides the supplementary material to the paper. It is organized as follows.
\begin{itemize}
    \item Appendix \ref{Appendix B} presents the proofs of Theorems~\ref{Rami Theorem 1},~\ref{Rami Theorem 2}, and \ref{power theorem}.
    \item Appendix \ref{Appendix C} presents the technical Lemmas used in the proofs of Theorems \ref{Rami Theorem 2} and \ref{power theorem}.
    \item Appendix \ref{Appendix D} presents the technical result steps for deriving the bounds in the Examples outlined in Section \ref{Subsection Examples}.
    \item Appendix \ref{Appendix E} present the technical details concerning Section \ref{Section - Discussion}.
\end{itemize}
\section{Proofs of Theorems~\ref{Rami Theorem 1} - \ref{power theorem}}\label{Appendix B}

\subsection{Proof of Theorem~\ref{Rami Theorem 1}}\label{Proof of Rami Theorem 1}  
\begin{proof}
The proof follows steps identical to those in the proof of Theorem 1 in \cite{Tabri2021}, but with adjustments for the test statistic. We first show \eqref{eq: asymptotic size}$\implies$~\eqref{eq: asymptotic size sequences}. The proof proceeds by the direct method. Suppose that \eqref{eq: asymptotic size} holds, and let $ \left\{\Pi_{N_A,N_B}\right\}_{N_A,N_B=1}^{+\infty}\in\mathbb{W}_0.$ Then
\begin{align}\label{proof lemma Uniformity 0}
\mathbbm{E}\left[1[LR>c(\alpha)]\mid\Pi_{N_A,N_B}\right]\leq\sup_{\Pi\in\mathcal{M}^0_{N_A,N_B}}E\left[1[LR>c(\alpha)]\mid\Pi\right] \;\forall N_A,N_B,
\end{align}
holds. Taking the limit superiors on both sides of the inequality~(\ref{proof lemma Uniformity 0}) implies the inequality~\eqref{eq: asymptotic size sequences}. As the sequence $\left\{\Pi_{N_A,N_B}\right\}_{N_A,N_B=1}^{+\infty}\in\mathbb{W}_0$ was arbitrary, this inequality holds for all such sequences.

\par Now we shall prove the reverse direction:\eqref{eq: asymptotic size sequences}$\implies$~\eqref{eq: asymptotic size}. The proof proceeds by contraposition. Suppose that~\eqref{eq: asymptotic size} does not hold, i.e.,
\begin{align}\label{proof lemma Uniformity 1}
\limsup_{N_A,N_B\rightarrow+\infty}\sup_{\Pi\in\mathcal{M}^0_{N_A,N_B}}\mathbbm{E}\left[1[LR>c(\alpha)]\mid\Pi\right] & > \alpha.
\end{align}
Then we have to construct a sequence $\left\{\Pi_{N_A,N_B}\right\}_{N_A,N_B=1}^{+\infty}\in\mathbb{W}_0$ such that
\begin{align*}
\limsup_{N_A,N_B\rightarrow+\infty}\mathbbm{E}\left[1[LR^{(A,B)}>c(\alpha)]\mid\Pi_{N_A,N_B}\right] >\alpha
\end{align*}
to prove the result. To that end, the condition~(\ref{proof lemma Uniformity 1}) implies the largest subsequential limit of the sequence $\left\{\sup_{\Pi\in\mathcal{M}^0_{N_A,N_B}}\mathbbm{E}\left[1[LR>c(\alpha)]\mid\Pi\right]\right\}_{N_A,N_B=1}^{+\infty}$ exceeds $\alpha.$ Thus, there is a sequence $\{N_{A,m},N_{B,m}\}_{m=1}^{+\infty}$ such that
the limit of $\left\{\sup_{\Pi\in\mathcal{M}^0_{N_{A,m},N_{B,m}}}\mathbbm{E}\left[1[LR>c(\alpha)]\mid\Pi\right]\right\}_{m=1}^{+\infty}$ exceeds $\alpha,;$ e.g., the limit is equal to $\alpha+\nu$ where $\nu>0.$ Now let $\epsilon>0$ be such that $\nu>\epsilon>0.$ For each $m$ there exists $\Pi^{\prime}_{N_{A,m},N_{B,m}}\in\mathcal{M}^0_{N_{A,m},N_{B,m}}$ such that
\begin{align}\label{proof lemma Uniformity 2}
\mathbbm{E}\left[1[LR>c(\alpha)]\mid\Pi^{\prime}_{N_{A,m},N_{B,m}}\right]>\sup_{\Pi\in\mathcal{M}^0_{N_{A,m},N_{B,m}}}E\left[1[LR>c(\alpha)]\mid\Pi\right]-\epsilon.
\end{align}
Now taking limit superior of both sides of~(\ref{proof lemma Uniformity 2}) with respect to $m,$ yields
\begin{align*}
\limsup_{m\rightarrow+\infty}\mathbbm{E}\left[1[LR>c(\alpha)]\mid\Pi^{\prime}_{N_{A,m},N_{B,m}}\right]& \geq\limsup_{m\rightarrow+\infty}\sup_{\Pi\in\mathcal{M}^0_{N_{A,m},N_{B,m}}}\mathbbm{E}\left[1[LR>c(\alpha)]\mid\Pi\right]-\epsilon \\
& > \alpha+\nu-\epsilon>\alpha.
\end{align*}
Thus, we have constructed a sequence of populations $\left\{\Pi^{\prime}_{N_{A,m},N_{B,m}}\right\}_{m=1}^{+\infty}\in\mathbb{W}_0$ with the desired property. This concludes the proof.
\end{proof}

\subsection{Proof of Theorem \ref{Rami Theorem 2}}
\label{Proof of Rami Theorem 2}

\begin{proof}
We proceed by the direct method. Fix an $\alpha \in (0,1)$.  This proof demonstrates that the testing procedure is uniformly asymptotically valid by showing that the test has level $\alpha$ for all possible subsequences of finite populations based on sequences in $\mathbb{W}_0$. The subsequences where the asymptotic size \eqref{eq: asymptotic size sequences} will be largest is where we have dominance within sample and we are on the boundary of the model of the null hypothesis $H_0^1$. Let $\Upsilon$ denote the event of dominance in sample, that is: $\widehat{\overline{D}}_A^{s}(x) - \widehat{\underline{D}}_B^{s}(x)=\hat{\theta}(x;\hat{\varphi}(x)) < 0 \quad \forall x \in [\underline{t},\overline{t}].$
Let $\{ \Pi_{{N_A}_m , {N_B}_m } \}_{m = 1}^\infty$ be a subsequence of $\{ \Pi_{N_A , N_B} \}_{N_A , N_B = 1}^\infty \in \mathbb{W}_0$. We focus on subsequences that give the event $\Upsilon$ the highest probability (as otherwise $LR = 0$). Fix an $x \in [\underline{t}, \overline{t}]$ arbitrarily, and let $\hat{\tau}_m(x)=\dfrac{\hat{\theta}(x;\hat{\varphi}(x)) }{\sqrt{Var\left(\hat{\theta}(x;\hat{\varphi}(x))\mid \Pi_{{N_A}_m , {N_B}_m } \right)}} \mid \Pi_{{N_A}_m , {N_B}_m }$ $\tau_m(x)=\dfrac{\theta_{{N_{A}}_m,{N_{B}}_m}\left(x;\varphi_{{N_{A}}_m,{N_{B}}_m}(x)\right)}{\sqrt{Var\left(\hat{\theta}(x;\hat{\varphi}(x))\mid \Pi_{{N_A}_m , {N_B}_m } \right)}}\mid \Pi_{{N_A}_m , {N_B}_m }$. Observe that  $\hat{\tau}_m(x)=\hat{\tau}_m(x)-\tau_m(x)+\tau_m(x)$, so that under Conditions 5 and 7 of Assumption \ref{Assumption 1}, as $m \rightarrow \infty$, $\hat{\tau}_m(\cdot)$ must converges weakly to a Gaussian process with mean
\begin{align*}
C(x)=\lim_{m\rightarrow+\infty}\tau_m(x)=\lim_{m\rightarrow+\infty}\dfrac{\theta_{{N_{A}}_m,{N_{B}}_m}\left(x;\varphi_{{N_{A}}_m,{N_{B}}_m}(x)\right)}{\sqrt{Var\left(\hat{\theta}(x;\hat{\varphi}(x))\mid \Pi_{{N_A}_m , {N_B}_m } \right)}}\mid \Pi_{{N_A}_m , {N_B}_m }.
\end{align*}
Therefore, ${\displaystyle\lim_{m \rightarrow \infty} \mathbbm{E}\left[ \mathbbm{1}[\Upsilon] \mid \Pi_{{N_A}_m , {N_B}_m}\right]= \lim_{m \rightarrow \infty} \text{Prob} \left(  \max\limits_{x \in [\underline{t}, \overline{t}]} \hat{\theta}(x;\hat{\varphi}(x))< 0  \mid \Pi_{{N_A}_m , {N_B}_m} \right)}$, which equals $\lim_{m \rightarrow \infty} \text{Prob} \left(  \max\limits_{x \in [\underline{t}, \overline{t}]}\hat{\tau}_m(x) < 0  \mid \Pi_{{N_A}_m , {N_B}_m} \right)$. Now using the weak convergence of $\hat{\tau}_m(\cdot)$, this probability simplifies to
\begin{align*}
 \lim_{m \rightarrow \infty} \mathbbm{E}\left[ \mathbbm{1}[\Upsilon] \mid \Pi_{{N_A}_m , {N_B}_m}\right]=\text{Prob} \left(  \max\limits_{x \in [\underline{t}, \overline{t}]} \left( \mathbb{G} (x) + C(x) \right) < 0  \right). \stepcounter{equation}\tag{\theequation}\label{size dominance probability}
\end{align*}

\par This probability \eqref{size dominance probability} will be highest for subsequences of finite populations within the model of the null hypothesis $H_0^1$ where we have a unique $x^* \in [\underline{t}, \overline{t}]$ such that  $C(x^*) = 0$ and $C(x) = -\infty \ \forall x \in [\underline{t}, \overline{t}]\setminus \{x^*\}$. This corresponds to finite populations in the boundary of the model of the null hypothesis $H_0^1$. Thus, consider two types of subsequences of finite populations. Firstly, subsequences that drift on the boundary: $\Pi_{{N_A}_m , {N_B}_m } \in \partial\mathcal{M}^0_{{N_A}_m , {N_B}_m }$  $\forall m \in \mathbb{N}$. Secondly, subsequences of the finite populations that drift to the boundary of the null model from its interior. Starting with the first case, as we are on the boundary for each $m\in \mathbb{N}$, we have at least one $x \in [\underline{t},\overline{t}]$ such that $\overline{D}_{{N_A}_m}^s (x) = \underline{D}_{{N_B}_m}^s (x)$. For a given $m \in \mathbb{N}$ let $x_m=\min \left\{ x \in [\underline{t} ,\overline{t}]  \; : \; \theta_{{N_A}_m,{N_B}_m}\left(x;\varphi_{{N_A}_m,{N_B}_m}(x)\right)=0 \right\}$. Then using the definition of the LR test statistic \eqref{LR test statistic}: for each  $m\in \mathbb{N}$
\begin{equation}
    \mathbbm{E}\left( \mathbbm{1}[LR > c(\alpha)] \, \mid \, \Pi_{{N_A}_m , {N_B}_m } \right) \leq \mathbbm{E}\left( \mathbbm{1}\left[\frac{2(L_{UR} - L_R (x_m ))}{\widehat{\text{Deff}}(x_m)} > c(\alpha)\right] \, \mid \, \Pi_{{N_A}_m , {N_B}_m } \right)
    \label{LR inequality}
\end{equation}
holds. Now by Lemma \ref{Rami Lemma 1} and taking the limit superior on both sides of \eqref{LR inequality}, it follows that \\ $\limsup_{m \rightarrow \infty } \mathbbm{E}\left( \mathbbm{1}[LR > c(\alpha)] \, \mid \, \Pi_{{N_A}_m , {N_B}_m } \right)\leq \alpha$ since
\begin{align*}
\limsup_{m \rightarrow \infty } \mathbbm{E}\left( \mathbbm{1}\left[\frac{2(L_{UR} - L_R (x_m ))}{\widehat{\text{Deff}}(x_m)} > c(\alpha)\right] \, \mid \, \Pi_{{N_A}_m , {N_B}_m } \right)= \alpha,
\end{align*}
and the limiting distribution of $\{ 2(L_{UR} - L_R (x_m ))/\widehat{\text{Deff}}(x_m) \mid \Pi_{{N_A}_m , {N_B}_m } \}_{m = 1}^\infty$ is $\chi_1^2$.

\par Now focusing on the second case, we have that the subsequence of finite populations drifting to the boundary as $m \rightarrow \infty$ and let $x_e \in [\underline{t}, \overline{t}]$ be such that $C(x_e) = 0$ and $C(x) = -\infty $ for all $x \in [\underline{t} , \overline{t}] \setminus \{ x_e \}$. We utilise the inequality \eqref{LR inequality} and replace $x_m$ with $x_e$. Then taking the limit superior over both sides and applying Lemma \ref{Rami Lemma 2}, we must have that $\limsup_{m \rightarrow \infty } \mathbbm{E}\left( \mathbbm{1}[LR > c(\alpha)] \, \mid \, \Pi_{{N_A}_m , {N_B}_m } \right) \leq\alpha$ holds, since
\begin{align*}
\limsup_{m \rightarrow \infty }  \mathbbm{E}\left( \mathbbm{1}\left[\frac{2(L_{UR} - L_R (x_e ))}{\widehat{\text{Deff}}(x_e)} > c(\alpha)\right] \, \mid \, \Pi_{{N_A}_m , {N_B}_m } \right) = \alpha.
\end{align*}
Note that Lemma \ref{Rami Lemma 2} establishes that limiting distribution of $\{ 2(L_{UR} - L_R (x_e))/\widehat{\text{Deff}}(x_e) \, \mid \, \Pi_{{N_A}_m , {N_B}_m } \}_{m = 1}^\infty$ as $\chi_1^2$. This concludes the proof.
\end{proof}


\subsection{Proof of Theorem \ref{power theorem}}
\label{Proof of power theorem}
\begin{proof}
\par There are two possible cases for the sequences of finite populations which we examine under the alternative. Firstly, we consider sequences of finite populations that are uniformly bounded away from the boundary of the null hypothesis. Secondly, we consider sequences of finite populations that converges to $\partial \mathcal{M}_{N_A,N_B}^1$, the boundary of the model of the null hypothesis as $N_A,N_B \rightarrow \infty$.

\par Proceeding under the first case via the direct method, fix an $\varepsilon > 0$. Consider a sequence of finite populations $\{ \Pi_{N_A, N_B} \}_{N_A, N_B=1}^\infty \in \mathbb{W}_1 (\varepsilon)$. For any $\{ \Pi_{N_A, N_B} \}_{N_A, N_B=1}^\infty \in \mathbb{W}_1 (\varepsilon)$ we have that $C(x) = -\infty$ for all $x \in [\underline{t},\overline{t}]$. Let $\Upsilon$ denote the event of dominance within sample: i.e., the event $\max_{x\in[\underline{t},\overline{t}]}\hat{\theta}(x;\hat{\varphi}(x))<0$. Using that $C(x) = - \infty$ for all $x \in [\underline{t},\overline{t}]$ and arguments similar to those in the proof of Theorem~\ref{Rami Theorem 2}, it follows that  $\lim_{N_A, N_B \rightarrow \infty}   \mathbbm{E}[\mathbbm{1}[\Upsilon] \, \mid \, \Pi_{N_A, N_B}] = 1$. Consequently, the event $\Upsilon$ occurs with probability tending to unity as $N_A,N_B\rightarrow+\infty$, and hence, $LR=\min_{ x \in [\underline{t} ,\overline{t}]} \frac{2(L_{UR} - L_R (x))}{\widehat{\text{Deff}}(x)}$ with probability tending to unity as $N_A,N_B\rightarrow+\infty$. Now for each $x \in [\underline{t},\overline{t}]$, applying Lemma \ref{power theorem Lemma 1} to the sequence $\{ 2(L_{UR} - L_R (x))/\widehat{\text{Deff}}(x) \mid  \Pi_{N_A, N_B} \}_{N_A ,N_B = 1}^\infty$, we have that $2(L_{UR} - L_R (x))/\widehat{\text{Deff}}(x) \pconv \infty \ \text{as} \ N_A ,N_B \rightarrow \infty$. Accordingly, the form of $LR$ implies
   $LR\mid \Pi_{N_A,N_B} \stackrel{P}{\longrightarrow}+\infty$. Since the critical value $c(\alpha)$ is fixed and finite, it follows that $\lim_{N_A,N_B\rightarrow+\infty} \mathbbm{E}\left[\mathbbm{1}LR>c(\alpha)]\mid\Pi_{N_A,N_B}\right]=1$. This concludes the proof for this case.

\par Now focusing on the second case, we again proceed via the direct method. Let $\varepsilon = 0$; therefore, the sequence of finite populations $\{ \Pi_{N_A, N_B} \}_{N_A, N_B=1}^\infty \in \mathbb{W}_1 (0)$. Consequently, there will be at least one $x \in [\underline{t},\overline{t}]$ such that $C(x) = 0$.  Let $X_0 = \{ x \in [\underline{t}, \overline{t}] : C(x) = 0\}$. By contrast with the previous case, the limiting probability of the event $\Upsilon$ is $\text{Prob}\left( \max\limits_{x \in X_0} \left( \mathbb{G}(x) + C(x) \right) < 0  \right)=\text{Prob}\left( \max\limits_{x \in X_0}  \mathbb{G}(x) < 0  \right)$,
by arguments similar to those in the proof of Theorem~\ref{Rami Theorem 2}. As this limiting probability is not necessarily equal to unity, we investigate the asymptotic power of the test on the event $\Upsilon$, since for each $N_A,N_B\in\mathbb{N}$ the equalities
$\mathbbm{E}\left( \mathbbm{1}[LR > c(\alpha)] \mid \Pi_{N_A, N_B} \right)= \mathbbm{E}\left( \mathbbm{1}[LR > c(\alpha),\Upsilon] \mid \Pi_{N_A, N_B} \right)$, hold, because $\Upsilon^c\implies LR=0$. Hence, $\lim_{N_A, N_B \rightarrow \infty} \mathbbm{E}\left( \mathbbm{1}[LR > c(\alpha)] \mid \Pi_{N_A, N_B} \right)  = \lim_{N_A, N_B \rightarrow \infty} \mathbbm{E}\left( \mathbbm{1}[LR > c(\alpha),\Upsilon] \mid \Pi_{N_A, N_B} \right)$, holds. Now Lemma \ref{power theorem Lemma 1} establishes that $2(L_{UR} - L_R (x))/\widehat{\text{Deff}}(x) \pconv \infty \ \text{as} \ N_A ,N_B \rightarrow \infty$, holds,  for each $x \in [\underline{t}, \overline{t}] \setminus X_0$. Furthermore, Lemma \ref{power theorem Lemma 2} establishes that $2(L_{UR} - L_R (x))/\widehat{\text{Deff}}(x) \dconv \chi_1^2\ \text{as} \ N_A ,N_B \rightarrow \infty$, holds, for each $x\in X_0$. Therefore, on the event $\Upsilon$ and asympotically, the rejection event is characterized as $ LR > c(\alpha) \iff \min_{x \in X_0} \frac{2(L_{UR} - L_R (x))}{\widehat{\text{Deff}}(x)} > c(\alpha)$. Hence, $\lim_{N_A, N_B \rightarrow \infty} \mathbbm{E}\left( \mathbbm{1}[LR > c(\alpha),\Upsilon] \mid \Pi_{N_A, N_B} \right)$ equals the limiting probability \\
$\lim_{N_A, N_B \rightarrow \infty} \mathbbm{E}\left( \mathbbm{1}\left[\min_{x \in X_0} \frac{2(L_{UR} - L_R (x))}{\widehat{\text{Deff}}(x)}> c(\alpha),\Upsilon\right] \mid \Pi_{N_A, N_B} \right)$.
By Lemma \ref{power theorem Lemma 2}, this resulting limit equals $\text{Prob}\left( \mathbb{G}^2 (x) > c (\alpha) \, \forall x \in X_0,\max\limits_{x \in X_0}  \mathbb{G}(x) < 0  \right)$
as we show the statistics $\left\{\frac{2(L_{UR} - L_R (x))}{\widehat{\text{Deff}}(x)},x\in X_0\right\}$ converge in distribution to $\left\{\mathbb{G}^2(x),x\in X_0\right\}$ pointwise in $x\in X_0$. This concludes the proof.
\end{proof}
\section{Technical Lemmas}
\label{Appendix C}

\subsection{Solution to PELF \eqref{eq: PELF}}
\label{Solution to PELF}

Here we develop the solution to the PELF, described by \eqref{eq: PELF}. For a given population $\Pi_{N_A, N_B}$ suppose we have access to a panel survey sample $V \in \mathcal{U}_{A}$. Firstly, we determine the solution to the unconstrained problem $L_{UR}$. Then for each $x\in [\underline{t}, \overline{t}]$, we determine the solution to the constrained optimisation problem denoted by $L_R (x)$. The unconstrained optimisation problem is described by:

\begin{equation}
    \max_{\Vec{p} \in (0,1]^n} \sum_{i\in U} W^\prime_i \log(p_i) \quad \text{subject to} \quad \sum_{i\in U} W^\prime_i p_i = 1,
\end{equation}
where $U=\{i\in V: Z^A_i=1\}$, and $\{W^\prime_i:i\in U\}$ satisfies~(\ref{eq - Design weights-responders}) so that $\sum_{i\in U} W^\prime_i=n$, holds.
The corresponding Lagrangian is given by:
\begin{equation*}
    \mathcal{L} = \sum_{i\in U} W^\prime_i \log(p_i) - \lambda \left( \sum_{i\in U} W^\prime_i p_i - 1 \right)
\end{equation*}

The First Order Conditions are, taking them with respect to i for generality:
\begin{align}
    \dfrac{\partial \mathcal{L}}{\partial p_i} &=  \dfrac{W^\prime_i}{p_i} - \lambda W^\prime_i = 0, \quad \forall i\in U    \label{eq: 3a} \\
    \dfrac{\partial \mathcal{L}}{\partial \lambda} &= \sum_{i\in U} W_i p_i - 1 = 0, \ \text{and} \label{eq: 3b}
\end{align}
Multiplying \eqref{eq: 3a} by $p_i$ and then summing over $i=1,2,\hdots,n$
\begin{align*}
    &W^\prime_i - \lambda W^\prime_i p_i = 0 \\
    &\sum_{i\in U} W^\prime_i = \lambda \sum_{i\in U} W^\prime_i p_i \\
    &\text{using \eqref{eq: 3b} and \eqref{eq - Design weights-responders}} \\
    \implies & n = \lambda \stepcounter{equation}\tag{\theequation}\label{eq: 3d}
\end{align*}
Then substituting \eqref{eq: 3d} into \eqref{eq: 3a} and solving for $p_i$, we obtain:
\begin{align*}
    &W^\prime_i - n W^\prime_i p_i = 0 \\
    & 1 - np_i = 0 \\
    \implies &p_i = \dfrac{1}{n} \quad \forall i\in U
    \stepcounter{equation}\tag{\theequation}\label{eq: 3e}
\end{align*}

Therefore, using \eqref{eq: 3e} we have that the unrestricted log-likelihood is given by:
\begin{equation}
    L_{UR} = \sum_{i\in U} W^\prime_i \log \left(\frac{1}{n} \right)
    \label{eq: 3f}
\end{equation}

The constrained maximisation problem of the PELF corresponds to the following optimisation problem, where the last restriction imposes the moment equality for $x\in[\underline{t},\overline{t}]$ corresponding to the boundary of the null hypothesis $H_0^1$:
\begin{equation}
    \begin{aligned}
    \max_{\Vec{p} \in (0,1]^n} & \sum_{i\in U} W^\prime_i \log(p_i) \\
    \text{subject to} \ \quad & \sum_{i\in U} W^\prime_i p_i = 1 \\
    & \sum_{i\in U} W^\prime_i p_i H_i (x;\hat{\varphi}(x)) = 0
    \end{aligned}
    \label{Constrained Problem}
\end{equation}

The corresponding Lagrangian is given by:
\begin{equation*}
    \mathcal{L} = \sum_{i\in U} W^\prime_i \log(p_i) - \lambda \left( \sum_{i\in U} W^\prime_i p_i - 1 \right) - \kappa \left( \sum_{i\in U} W^\prime_i p_i H_i (x;\hat{\varphi}(x)) \right)
\end{equation*}

The First Order Conditions are
\begin{align}
    \dfrac{\partial \mathcal{L}}{\partial p_i} &=  \dfrac{W^\prime_i}{p_i} - \lambda W^\prime_i - \kappa W^\prime_i H_i (x;\hat{\varphi}(x)) = 0, \quad \forall i\in U
    \label{eq: 4a} \\
    \dfrac{\partial \mathcal{L}}{\partial \lambda} &= \sum_{i\in U} W^\prime_i p_i - 1 = 0, \label{eq: 4b} \\
    \dfrac{\partial \mathcal{L}}{\partial \kappa} &= \sum_{i\in U} W^\prime_i p_i H_i (x;\hat{\varphi}(x)) = 0.
    \label{eq: 4d}
\end{align}

Multiplying \eqref{eq: 4a} by $p_i$ then summing across $i\in U$
\begin{align*}
    & W^\prime_i - \lambda W^\prime_ip_i - \kappa W^\prime_i p_i H_i (x;\hat{\varphi}(x)) = 0 \\
    & \sum_{i\in U} W^\prime_i - \lambda \sum_{i\in U} W^\prime_ip_i - \kappa \sum_{i\in U} W^\prime_i p_i H_i (x;\hat{\varphi}(x)) = 0 \\
    &\text{using \eqref{eq: 4b}, \eqref{eq: 4d} and that $\sum_{i\in U} W^\prime_i=n$} \\
    & n - \lambda  = 0 \\
    \implies &\lambda = n.\stepcounter{equation}\tag{\theequation}\label{eq: 4e}
\end{align*}

We then substitute \eqref{eq: 4e} into \eqref{eq: 4a} and solve for $p_i$
\begin{align*}
    & \dfrac{W^\prime_i}{p_i}  - n W^\prime_i - \kappa W^\prime_i H_i (x;\hat{\varphi}(x)) = 0 \\
    & \dfrac{1}{p_i} - n - \kappa H_i (x;\hat{\varphi}(x)) = 0 \\
    & \dfrac{1}{p_i} = n + \kappa H_i (x;\hat{\varphi}(x)) \\
    \implies & p_i = \dfrac{1}{n + \kappa H_i (x;\hat{\varphi}(x))} \quad \forall i\in U. \stepcounter{equation}\tag{\theequation}\label{eq: 4f}
\end{align*}

Interpreting this result, $\kappa$ is the cost of imposing the null hypothesis $H_0^1$.  The more readily the data fits the constraint, the easier it is to impose the constraint, and so $\kappa$ will converge to zero, resulting in the unconstrained value for the $p_i$.

\par The following lemma is helpful for developing the technical result sin the subsequent sections.
\begin{lemma}\label{H_i bounded}
Suppose that Assumption~\ref{Assumption - Compactness of X_K} holds. Then $H_i (x;\hat{\varphi}(x))$, defined in~\eqref{eq: moment function}, is bounded on $[\underline{t},\overline{t}]$ for each $i\in U$.
\end{lemma}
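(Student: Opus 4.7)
The plan is to decompose $H_i(x;\hat{\varphi}(x))$ into its four additive terms as written in~\eqref{eq: moment function} and to bound each of them uniformly in $x\in[\underline{t},\overline{t}]$. Every term factors as a product of a polynomial in $x-Y_i^K$, an indicator function, and one component of $\hat{\varphi}$, so it suffices to provide a uniform bound for each factor separately and then combine them via the triangle inequality.

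The indicator factors lie in $\{0,1\}$ and are therefore bounded by one. For the polynomial factor, I would invoke Assumption~\ref{Assumption - Compactness of X_K}: because $\mathcal{X}_A$ and $\mathcal{X}_B$ are compact, there exist finite constants $c_K\leq d_K$ with $\mathcal{X}_K\subset[c_K,d_K]$ for $K=A,B$, and combining this with the compactness of $[\underline{t},\overline{t}]$ yields $|x-Y_i^K|\leq \max\{|\overline{t}-c_K|,|\underline{t}-d_K|\}$ for every $x\in[\underline{t},\overline{t}]$ and $Y_i^K\in\mathcal{X}_K$. Hence $|(x-Y_i^K)^{s-1}|/(s-1)!\leq M_s$ for a finite constant $M_s$ depending only on $s$ and on the two supports. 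For the nuisance factors, I would appeal to the discussion preceding Assumption~\ref{Assumption 1}, which established that $\varphi_j\in L^\infty([\underline{t},\overline{t}])$ for each $j=1,2,3,4$ thanks to sharpness of the identified set together with finiteness of the worst-case bounds under Assumption~\ref{Assumption - Compactness of X_K}; inspection of the explicit forms in Section~\ref{Subsection Examples} shows that the plug-in $\hat{\varphi}_j$ inherits the same structure, being either a bounded function of sample proportions in $[0,1]$ or a bounded polynomial in $x$, so that $\|\hat{\varphi}_j\|_{L^\infty([\underline{t},\overline{t}])}<\infty$.

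Combining the three bounds, the triangle inequality delivers
\begin{equation*}
\sup_{x\in[\underline{t},\overline{t}]}\left|H_i(x;\hat{\varphi}(x))\right|\leq M_s\bigl(\|\hat{\varphi}_1\|_\infty+\|\hat{\varphi}_2\|_\infty+\|\hat{\varphi}_4\|_\infty\bigr)+\|\hat{\varphi}_3\|_\infty<\infty,
\end{equation*}
which is the required uniform bound. The only mild obstacle is verifying $\|\hat{\varphi}_j\|_\infty<\infty$ from the explicit structure of the plug-in, but this is immediate on an example-by-example basis using the expressions for $\varphi_{N_A,N_B}$ listed in Examples~\ref{Example - WC Bounds}--\ref{Example - Kline and Santos}; no further probabilistic argument is needed since boundedness is a pathwise property.
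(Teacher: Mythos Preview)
Your proposal is correct and follows essentially the same approach as the paper's proof: both apply the triangle inequality to the four summands of $H_i$, bound the polynomial factor via compactness of $\mathcal{X}_K$, and bound the nuisance components using $\|\hat{\varphi}\|_{\Psi}<\infty$. The paper's version is simply more terse, writing the single inequality $|H_i(x;\hat{\varphi}(x))|\leq \|\hat{\varphi}\|_{\Psi}\bigl((\overline{Y}^A-\underline{Y}^A)^{s-1}/(s-1)!+(\overline{Y}^B-\underline{Y}^B)^{s-1}/(s-1)!+1\bigr)$ without spelling out the intermediate steps.
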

\begin{proof}
The proof proceeds by the direct method. Observe that
\begin{align*}
\left |H_i (x;\hat{\varphi}(x))\right|\leq \|\hat{\varphi}\|_{\Psi}\left(\frac{(\overline{Y}^A-\underline{Y}^A)^{s-1}}{(s-1)!}+\frac{(\overline{Y}^B-\underline{Y}^B)^{s-1}}{(s-1)!}+1\right)<+\infty\quad\forall x\in[\underline{t},\overline{t}],
\end{align*}
where the last equality holds under Assumption~\ref{Assumption - Compactness of X_K}, as it implies that $\|\hat{\varphi}\|_{\Psi}<+\infty$ and $\overline{Y}^K,\underline{Y}^K\in\mathbb{R}$ for $K=A,B$.
\end{proof}

\subsection{Lemmas for Proof of Theorem \ref{Rami Theorem 2}}
\label{Lemmas 1 and 2}

The following Lemmas derive the asymptotic distribution of $\{ 2(L_{UR} - L_R (x))/\widehat{\text{Deff}}(x) \mid \Pi_{{N_A}_m , {N_B}_m } \}_{m = 1}^\infty$ for subsequences of sequences of $\{ \Pi_{{N_A} , {N_B}} \}_{N_A , N_B}^\infty$ that drift on/to the boundary of the model of the null hypothesis $H_0^1$.

\subsubsection{Lemma \ref{Rami Lemma 1}}

\begin{lemma}
\label{Rami Lemma 1}
Let $\left\{ \mathcal{M}_{N_A , N_B}^0, \; N_A, N_B = 1,2,\hdots \right\}$, $\mathbb{W}$ and $\alpha$ be the same as in Theorem \ref{Rami Theorem 1}.  Let $\alpha \in (0,1)$.  For a sequence of finite populations $\{ \Pi_{N_A , N_B} \}_{{N_A , N_B} = 1}^\infty \in \mathbb{W}_0$, suppose there is a subsequence $\{ \Pi_{{N_A}_m , {N_B}_m } \}_{m = 1}^\infty$ such that $\Pi_{{N_A}_m , {N_B}_m } \in \partial \mathcal{M}_{{N_A}_m , {N_B}_m }^0 \; \forall m \in \mathbb{N}$.  For each $m \in \mathbb{N}$ let
\begin{align*}
x_m &  = \min \left\{ x \in [\underline{t} ,\overline{t}]  \; : \; \overline{D}_{{N_A}_m}^s (x) = \underline{D}_{{N_B}_m}^s (x) \right\}\\
  & = \min \left\{ x \in [\underline{t} ,\overline{t}]  \; : \; \theta_{{N_A}_m,{N_B}_m}\left(x;\varphi_{{N_A}_m,{N_B}_m}(x)\right)=0 \right\}.
\end{align*}
Then, $2(L_{UR} - L_R (x_m))/\widehat{\text{Deff}}(x_m) \; \mid \; \Pi_{{N_A}_m , {N_B}_m } \dconv \chi_1^2 \ \text{as} \ m \rightarrow \infty$.
\end{lemma}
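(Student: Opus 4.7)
The plan is to combine the closed-form Lagrangian solution derived in Appendix~\ref{Solution to PELF} with a second-order Taylor expansion and then apply the functional central limit theorem of Assumption~\ref{Assumption 1}. From the Appendix derivation, the maximizer of~\eqref{Constrained Problem} has the form $p_i = 1/(n+\kappa_m H_i(x_m;\hat\varphi(x_m)))$ for $i\in U$, where the scalar Lagrange multiplier $\kappa_m$ is pinned down by the binding moment constraint $\sum_{i\in U} W'_i p_i H_i(x_m;\hat\varphi(x_m))=0$. Consequently,
\begin{equation*}
L_{UR} - L_R(x_m) \;=\; \sum_{i\in U} W'_i \log\!\Bigl(1 + \tfrac{\kappa_m}{n}\, H_i(x_m;\hat\varphi(x_m))\Bigr).
\end{equation*}

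The first step is to show $\kappa_m/n = o_p(1)$: since $x_m \in \partial\mathcal{M}^0_{{N_A}_m,{N_B}_m}$ forces $\theta_{{N_A}_m,{N_B}_m}(x_m;\varphi_{{N_A}_m,{N_B}_m}(x_m))=0$, Condition~3 of Assumption~\ref{Assumption 1} gives $\hat\theta(x_m;\hat\varphi(x_m)) \pconv 0$, and Lemma~\ref{H_i bounded} supplies the uniform bound $|H_i|\leq M$ needed for a perturbation argument on the moment equation. I then Taylor expand $\log(1+u)$ to second order, with the third-order remainder controlled by $\kappa_m/n = o_p(1)$ and the $H_i$-bound, and substitute back into the binding moment equation to solve for $\kappa_m$ at leading order. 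This produces $\kappa_m/n = \hat\theta(x_m;\hat\varphi(x_m))/\bar H^{2}(x_m) + o_p(1)$, where $\bar H^{2}(x_m) = n^{-1}\sum_{i\in U}W'_i H_i^2(x_m;\hat\varphi(x_m))$, and therefore
\begin{equation*}
2\bigl(L_{UR}-L_R(x_m)\bigr) \;=\; \frac{\hat\theta^{2}(x_m;\hat\varphi(x_m))}{n^{-1}\bar H^{2}(x_m)} + o_p(1).
\end{equation*}
Dividing by $\widehat{\text{Deff}}(x_m)$ cancels the Hájek factor $n^{-1}\bar H^{2}(x_m)$ in the numerator of Deff and leaves the studentized quantity $\hat\theta^{2}(x_m;\hat\varphi(x_m))/\widehat{Var}\bigl(\hat\theta(x_m;\hat\varphi(x_m))\bigr) + o_p(1)$.

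To finish, I invoke Conditions~5 and~6 of Assumption~\ref{Assumption 1}. Uniform ratio-consistency of $\widehat{Var}$ allows replacing the denominator by $Var(\hat\theta(x_m;\hat\varphi(x_m)))$, and the functional CLT gives weak convergence in $L^\infty([\underline t,\overline t])$ of the standardized process $\hat\tau_m(\cdot) = (\hat\theta(\cdot;\hat\varphi(\cdot))-\theta_{{N_A}_m,{N_B}_m}(\cdot;\varphi_{{N_A}_m,{N_B}_m}(\cdot)))/\sqrt{Var(\hat\theta(\cdot;\hat\varphi(\cdot)))}$ to a mean-zero Gaussian process with unit marginal variances. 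Because $x_m$ is a boundary point the centering vanishes, so $\hat\tau_m(x_m) \dconv N(0,1)$ and the test statistic converges to $\chi^2_1$.

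The main obstacle is the $m$-dependence of the evaluation point $x_m$: one must pass a distributional limit through a moving sequence in $[\underline t,\overline t]$, and simultaneously control the Taylor remainder uniformly along that sequence. The uniform control is handled by the boundedness in Lemma~\ref{H_i bounded} together with Conditions~3 and~4 of Assumption~\ref{Assumption 1}. For the moving evaluation point, I would extract a convergent subsequence of $\{x_m\}$ via compactness of $[\underline t,\overline t]$, invoke Condition~7 of Assumption~\ref{Assumption 1} (which transfers all stated conditions to subsequences), and apply the continuous mapping theorem to the evaluation functional composed with the $L^\infty$-weak convergence of $\hat\tau_m$; since every such subsequential limit is $N(0,1)$, the whole sequence $\hat\tau_m(x_m)$ converges to $N(0,1)$ and its square to $\chi^{2}_{1}$.
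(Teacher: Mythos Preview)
Your proposal is correct and follows essentially the same route as the paper's proof: both start from the Lagrangian solution $p_i=1/(n+\kappa H_i)$ derived in Appendix~\ref{Solution to PELF}, expand to second order (the paper via a geometric series, you via Taylor) to obtain $2(L_{UR}-L_R(x_m))\aequal \hat\theta^2(x_m;\hat\varphi(x_m))/V(x_m)$, divide by $\widehat{\text{Deff}}(x_m)$ to isolate the studentized square, and then invoke Conditions~4--7 of Assumption~\ref{Assumption 1} together with $\theta_{{N_A}_m,{N_B}_m}(x_m;\varphi_{{N_A}_m,{N_B}_m}(x_m))=0$ to reach $\chi_1^2$. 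The one place you are more explicit than the paper is in handling the $m$-dependence of the evaluation point $x_m$: the paper simply cites Conditions~5 and~7 and the uniformity of the functional CLT, whereas you spell out the compactness/subsequence argument and appeal to the continuous mapping theorem on $L^\infty([\underline t,\overline t])$; this is a welcome clarification but not a different method.
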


\begin{proof}
The proof proceeds using the direct method. For simplicity, we drop $``\mid \; \Pi_{{N_A}_m , {N_B}_m }"$. By the compactness of the set $\left\{ x \in [\underline{t} ,\overline{t}]  \; : \; \theta_{{N_A}_m,{N_B}_m}\left(x;\varphi_{{N_A}_m,{N_B}_m}(x)\right)=0 \right\}$, $x_m$ exists. Note that $x_m$ is non-random and depends only on the sequence of finite populations. Using \eqref{eq: 4f}, notice that $p_i$ can be rewritten as:
\begin{align*}
    p_i &= \dfrac{1}{n} \left[ \dfrac{1}{1 + \dfrac{\kappa H_i (x_m;\hat{\varphi}(x_m))}{n}}\right] \\
    & = \dfrac{1}{n} \left[ \sum_{k=0}^\infty \left(-\frac{\kappa H_i (x_m;\hat{\varphi}(x_m))}{n}\right)^k \right] \\
    & = \dfrac{1}{n} \left[ 1 - \dfrac{\kappa H_i (x_m;\hat{\varphi}(x_m))}{n} + \left(\dfrac{\kappa H_i (x_m;\hat{\varphi}(x_m))}{n}\right)^2 - \hdots \right] \stepcounter{equation}\tag{\theequation}\label{eq: 4g}
\end{align*}

Then using \eqref{eq: 4g}, we substitute this into \eqref{eq: 4d}
\begin{align*}
    & \sum_{i\in U} W^\prime_i H_i (x_m;\hat{\varphi}(x_m)) \left(\dfrac{1}{n} \left[ 1 - \dfrac{\kappa H_i (x_m;\hat{\varphi}(x_m))}{n} + \left(\dfrac{\kappa H_i (x_m;\hat{\varphi}(x_m))}{n}\right)^2 - \hdots \right] \right) = 0 \\
    &\bar{H}(x_m;\hat{\varphi}(x_m)) - \kappa \left( \sum_{i\in U}\dfrac{W^\prime_i}{n^2} H^2_i (x_m;\hat{\varphi}(x_m)) \right) + \kappa^2 \left( \sum_{i\in U} \dfrac{W^\prime_i}{n^3} H^3_i (x_m;\hat{\varphi}(x_m)) \right) - \hdots = 0,
    \stepcounter{equation}\tag{\theequation}\label{eq: 4h}
\end{align*}
\noindent where $\bar{H}(x_m;\hat{\varphi}(x_m)) =\hat{\theta}(x_m;\hat{\varphi}(x_m))= \sum_{i\in U} \frac{W^\prime_i}{n}H_i (x_m;\hat{\varphi}(x_m)) $. Equation~\eqref{eq: 4h} is equivalent to a convergent alternating series in $\kappa$, so $\kappa^j \sum_{i\in U}\dfrac{W^\prime_i}{n^{j+1}} H^{j+1}_i (x_m;\hat{\varphi}(x_m)) \rightarrow 0 $ as $j \rightarrow \infty$. Therefore as $m \rightarrow \infty$, ignoring terms beyond the first order, $\kappa$ is asymptotically equivalent to:
\begin{equation}
    \kappa \aequal \left[\sum_{i\in U} \dfrac{W^\prime_i}{n^2}  H^2_i (x_m;\hat{\varphi}(x_m))\right]^{-1}\bar{H}(x_m;\hat{\varphi}(x_m)).
    \label{eq: 4i}
\end{equation}
Let $V(x_m) = \sum_{i\in U} \dfrac{W^\prime_i}{n^2}  H^2_i (x_m;\hat{\varphi}(x_m))$. Now using the result \eqref{eq: 4f} for $p_i$ in $2(L_{UR} - L_R (x_m))$:
\begin{align*}
    2(L_{UR} - L_R (x_m)) &= 2 \left(\sum_{i\in U} W^\prime_i \log \left(\frac{1}{n} \right) - \sum_{i\in U} W^\prime_i \log \left(p_i \right) \right) \\
    &= 2\sum_{i\in U} W^\prime_i \log \left(\frac{1}{n} \frac{1}{p_i} \right) \\
    &= 2 \sum_{i\in U} W^\prime_i \log \left( 1 + \frac{\kappa H_i (x_m;\hat{\varphi}(x_m))}{n}\right).
    \stepcounter{equation}\tag{\theequation}\label{Lemma 1 log expression}
\end{align*}

Since Lemma~\ref{H_i bounded} establishes that $H_i (x_m;\hat{\varphi}(x_m))$ is uniformly bounded, and $\kappa$ is tending to zero in probability and in design, we then use the standard expansion: $\log(1+u) \aequal u - \dfrac{u^2}{2} \quad \text{for} \ |u| < 1$, with $u=\frac{\kappa H_i (x_m;\hat{\varphi}(x_m))}{n}$.

\begin{align*}
    2 \sum_{i\in U} W^\prime_i \log \left( 1 + \frac{\kappa H_i (x_m;\hat{\varphi}(x_m))}{n} \right) &\aequal 2 \sum_{i\in U} W^\prime_i \left[\frac{\kappa H_i (x_m;\hat{\varphi}(x_m))}{n} - \frac{1}{2}\left(\frac{\kappa H_i (x_m;\hat{\varphi}(x_m))}{n} \right)^2 \right] \\
    &= 2 \kappa \bar{H}(x_m;\hat{\varphi}(x_m)) - \kappa^2 V (x_m) \\
    &\text{then using \eqref{eq: 4i}} \\
    &\aequal \dfrac{2 \bar{H}^2(x_m;\hat{\varphi}(x_m))}{V (x_m)} - \dfrac{\bar{H}^2(x_m;\hat{\varphi}(x_m))}{V (x_m)} \\
    &= \dfrac{\bar{H}^2(x_m;\hat{\varphi}(x_m))}{V (x_m)};
    \stepcounter{equation}\tag{\theequation}\label{eq: 4j}
\end{align*}
therefore,
\begin{equation*}
2(L_{UR} - L_R (x_m)) \aequal \left( \dfrac{\hat{\theta}(x_m;\hat{\varphi}(x_m))}{\sqrt{Var\left(\hat{\theta}(x_m;\hat{\varphi}(x_m))\right)}}\right)^2 \frac{Var\left(\hat{\theta}(x_m;\hat{\varphi}(x_m))\right)}{\sum_{i\in U} \dfrac{W^\prime_i}{n^2} H^2_i (x_m;\hat{\varphi}(x_m))}.
\end{equation*}
Under the restriction of the boundary of the null hypothesis $H_0^1$:
$$\sum_{i\in U} \dfrac{W^\prime_i}{n^2} \left(H_i (x_m;\hat{\varphi}(x_m)) -\bar{H}(x_m;\hat{\varphi}(x_m))\right)^2 = \sum_{i\in U} \dfrac{W^\prime_i}{n^2} H^2_i (x_m;\hat{\varphi}(x_m)) + o_P (1),$$
since $\bar{H}(x_m;\hat{\varphi}(x_m) \pconv 0$ as ${N_A}_m,{N_B}_m \rightarrow\infty$. Hence,
\begin{align*}
2(L_{UR} - L_R (x_m))/\widehat{\text{Deff}}(x_m) &  \aequal \left( \dfrac{\hat{\theta}(x_m;\hat{\varphi}(x_m))}{\sqrt{Var\left(\hat{\theta}(x_m;\hat{\varphi}(x_m))\right)}}\right)^2 \frac{\text{Deff}(x_m)}{\widehat{\text{Deff}}(x_m)},\\
& = \left( \dfrac{\hat{\theta}(x_m;\hat{\varphi}(x_m))}{\sqrt{Var\left(\hat{\theta}(x_m;\hat{\varphi}(x_m))\right)}}\right)^2 \frac{Var\left(\hat{\theta}(x_m;\hat{\varphi}(x_m))\right)}{\widehat{Var}\left(\hat{\theta}(x_m;\hat{\varphi}(x_m))\right)},
\end{align*}
and now using Conditions 4,5,6 and 7 of Assumption \ref{Assumption 1}, and that $$\overline{D}_{{N_A}_m}^s (x_m) = \underline{D}_{{N_B}_m}^s (x_m)\iff\theta_{{N_A}_m,{N_B}_m}\left(x_m;\varphi_{{N_A}_m,{N_B}_m}(x_m)\right)=0,$$ we establish that along this subsequence:
\begin{equation*}
  2(L_{UR} - L_R (x_m))/\widehat{\text{Deff}}(x_m) \dconv \chi_1^2 \ \text{as} \ m \rightarrow \infty.
\end{equation*}
\end{proof}

\subsubsection{Lemma \ref{Rami Lemma 2}}

\begin{lemma}
\label{Rami Lemma 2}
Let $\left\{ \mathcal{M}_{N_A , N_B}^0, \; N_A, N_B = 1,2,\hdots \right\}$, $\mathbb{W}_0$ and $\alpha$ be the same as in Theorem \ref{Rami Theorem 1}. Let $\alpha \in (0,1)$. For a sequence of finite populations $\{ \Pi_{N_A , N_B} \}_{{N_A , N_B} = 1}^\infty \in \mathbb{W}$, suppose there is a subsequence $\{ \Pi_{{N_A}_m , {N_B}_m } \}_{m = 1}^\infty$ such that: $\Pi_{{N_A}_m , {N_B}_m } \in int \left( \mathcal{M}_{{N_A}_m , {N_B}_m }^0 \right) \; \forall m \in \mathbb{N}$ and $\exists! \ x_e \in [\underline{t}, \overline{t}]$ such that $C(x_e) = 0$ and $C(x) = - \infty \ \forall x \in [\underline{t}, \overline{t}] \setminus \{ x_e \}$,
where
\begin{align*}
C(x) & =\lim_{m\rightarrow+\infty}\dfrac{  \overline{D}_{{N_A}_m}^s (x) - \underline{D}_{{N_B}_m}^s (x) }{\sqrt{Var\left(\widehat{\overline{D}}_A^s (x) - \widehat{\underline{D}}_B^s (x) \mid \Pi_{{N_A}_m , {N_B}_m }\right)}} \mid \Pi_{{N_A}_m , {N_B}_m } \\
 & = \lim_{m\rightarrow+\infty}\dfrac{ \theta_{{N_A}_m,{N_B}_m}\left(x;\varphi_{{N_A}_m,{N_B}_m}(x)\right) }{\sqrt{Var\left(\hat{\theta}(x;\hat{\varphi}(x)) \mid \Pi_{{N_A}_m , {N_B}_m }\right)}} \mid \Pi_{{N_A}_m , {N_B}_m }
\end{align*}
Then $2(L_{UR} - L_R (x_e))/\widehat{\text{Deff}}(x_e) \; \mid \; \Pi_{{N_A}_m , {N_B}_m } \dconv \chi_1^2 \ \text{as} \ m \rightarrow \infty$.
\end{lemma}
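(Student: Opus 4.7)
The plan is to mirror the proof of Lemma~\ref{Rami Lemma 1} almost verbatim, with the only essential modification being that the exact boundary condition $\theta_{{N_A}_m,{N_B}_m}(x_m;\varphi_{{N_A}_m,{N_B}_m}(x_m))=0$ is replaced by the drifting condition $C(x_e)=0$. Write $\theta_{m,e} := \theta_{{N_A}_m,{N_B}_m}(x_e;\varphi_{{N_A}_m,{N_B}_m}(x_e))$ and $v_{m,e} := Var(\hat{\theta}(x_e;\hat{\varphi}(x_e))\mid \Pi_{{N_A}_m,{N_B}_m})$. First I would establish $\bar{H}(x_e;\hat{\varphi}(x_e)) = \hat{\theta}(x_e;\hat{\varphi}(x_e)) \pconv 0$ along the subsequence: by hypothesis $\theta_{m,e}/\sqrt{v_{m,e}}\to 0$, while Condition~1 of Assumption~\ref{Assumption 1} (combined with standard design-variance behaviour) forces $v_{m,e}\to 0$, so $\theta_{m,e}\to 0$, and Condition~3 of Assumption~\ref{Assumption 1} then yields $\hat{\theta}(x_e;\hat{\varphi}(x_e))\pconv 0$. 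Together with Lemma~\ref{H_i bounded} this guarantees that the Lagrange multiplier $\kappa$ solving the constrained problem~\eqref{Constrained Problem} satisfies $\kappa\pconv 0$ and that $\kappa H_i(x_e;\hat{\varphi}(x_e))/n$ is uniformly $o_P(1)$, so the alternating-series manipulation in~\eqref{eq: 4h}--\eqref{eq: 4i} and the logarithmic Taylor expansion in~\eqref{Lemma 1 log expression} remain valid at $x_e$.

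Executing those expansions verbatim as in the proof of Lemma~\ref{Rami Lemma 1} produces
\begin{equation*}
2(L_{UR}-L_R(x_e))\;\aequal\; \frac{\bar{H}^2(x_e;\hat{\varphi}(x_e))}{V(x_e)},\qquad V(x_e)=\sum_{i\in U}\frac{W^\prime_i}{n^2}H_i^2(x_e;\hat{\varphi}(x_e)),
\end{equation*}
where the identity $\sum_{i\in U}(W^\prime_i/n^2)(H_i-\bar{H})^2=V(x_e)+o_P(1)$ again uses only $\bar{H}\pconv 0$. Dividing by $\widehat{\text{Deff}}(x_e)=V(x_e)^{-1}\widehat{Var}(\hat{\theta}(x_e;\hat{\varphi}(x_e)))$ collapses the statistic to
\begin{equation*}
\frac{2(L_{UR}-L_R(x_e))}{\widehat{\text{Deff}}(x_e)}\;\aequal\;\left(\frac{\hat{\theta}(x_e;\hat{\varphi}(x_e))}{\sqrt{v_{m,e}}}\right)^{2}\cdot\frac{v_{m,e}}{\widehat{Var}(\hat{\theta}(x_e;\hat{\varphi}(x_e)))}.
\end{equation*}

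The final step extracts the distributional limit. Decompose
\begin{equation*}
\frac{\hat{\theta}(x_e;\hat{\varphi}(x_e))}{\sqrt{v_{m,e}}}=\frac{\hat{\theta}(x_e;\hat{\varphi}(x_e))-\theta_{m,e}}{\sqrt{v_{m,e}}}+\frac{\theta_{m,e}}{\sqrt{v_{m,e}}}.
\end{equation*}
By Conditions~5 and~7 of Assumption~\ref{Assumption 1}, the first summand converges in distribution to the marginal $\mathbb{G}(x_e)$, which is standard normal because standardization makes each marginal variance of the pre-limit process equal to one and this carries through weak convergence in $L^\infty([\underline{t},\overline{t}])$; by hypothesis, the second summand converges to $C(x_e)=0$. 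Slutsky's theorem gives $\hat{\theta}(x_e;\hat{\varphi}(x_e))/\sqrt{v_{m,e}} \dconv N(0,1)$, Condition~6 of Assumption~\ref{Assumption 1} sends the variance-ratio factor to $1$ in probability, and the continuous mapping theorem delivers the $\chi_1^2$ limit. The principal obstacle is simply the rate-matching bookkeeping in the first paragraph, i.e.\ showing $\theta_{m,e}=o(\sqrt{v_{m,e}})$ well enough that the Taylor machinery of Lemma~\ref{Rami Lemma 1} is transferable; everything else is a direct transplant of that argument.
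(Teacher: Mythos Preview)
Your proposal is correct and follows essentially the same route as the paper's own proof: transplant the Lagrange-multiplier and Taylor-expansion machinery of Lemma~\ref{Rami Lemma 1} to the fixed point $x_e$, decompose the standardized estimator as a centered piece plus $\theta_{m,e}/\sqrt{v_{m,e}}$, and invoke Conditions~5--7 of Assumption~\ref{Assumption 1} together with $C(x_e)=0$ to obtain the $\chi_1^2$ limit. You are in fact slightly more explicit than the paper in justifying $\bar H(x_e;\hat\varphi(x_e))\pconv 0$ (the paper simply asserts it), though your appeal to ``standard design-variance behaviour'' for $v_{m,e}\to 0$ is informal in the same way the paper's treatment is.
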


\begin{proof}
Following identical steps as in Lemma \ref{Rami Lemma 1}, we have:
\begin{align*}
    &2(L_{UR} - L_R (x_e)) \aequal \left( \dfrac{\hat{\theta}(x_e;\hat{\varphi}(x_e))}{\sqrt{Var\left(\hat{\theta}(x_e;\hat{\varphi}(x_e))\right)}}\right)^2 \frac{Var\left(\hat{\theta}(x_e;\hat{\varphi}(x_e))\right)}{\sum_{i\in U} \dfrac{W^\prime_i}{n^2} H^2_i (x_e;\hat{\varphi}(x_e))}\\
    =& \left( \dfrac{\hat{\theta}(x_e;\hat{\varphi}(x_e))-\theta_{{N_A}_m,{N_B}_m}\left(x_e;\varphi_{{N_A}_m,{N_B}_m}(x_e)\right)}{\sqrt{Var\left(\hat{\theta}(x_e;\hat{\varphi}(x_e))\right)}}+\frac{\theta_{{N_A}_m,{N_B}_m}\left(x_e;\varphi_{{N_A}_m,{N_B}_m}(x_e)\right)}{\sqrt{Var\left(\hat{\theta}(x_e;\hat{\varphi}(x_e))\right)}}\right)^2 \\ & \quad \times\frac{Var\left(\hat{\theta}(x_e;\hat{\varphi}(x_e))\right)}{\sum_{i\in U} \dfrac{W^\prime_i}{n^2} H^2_i (x_e;\hat{\varphi}(x_e))}.
\end{align*}
Under the restriction of the boundary of the null hypothesis $H_0^1$:
$$\sum_{i\in U} \dfrac{W^\prime_i}{n^2} \left(H_i (x_e;\hat{\varphi}(x_e)) -\bar{H} (x_e;\hat{\varphi}(x_e))\right)^2 = \sum_{i\in U} \dfrac{W^\prime_i}{n^2} H^2_i (x_e;\hat{\varphi}(x_e) + o_P (1),$$
since $\bar{H} (x_e;\hat{\varphi}(x_e))=\hat{\theta}(x_e;\hat{\varphi}(x_e)) \pconv 0$ as ${N_A}_m,{N_B}_m \rightarrow\infty$. By definition of $x_e$, the second term inside the brackets does not affect the limiting behaviour of the test statistic. Therefore, $2(L_{UR} - L_R (x_e))/\widehat{\text{Deff}}(x_e)$ is asymptotically equivalent to
\begin{align*}
\left( \dfrac{\hat{\theta}(x_e;\hat{\varphi}(x_e))-\theta_{{N_A}_m,{N_B}_m}\left(x_e;\varphi_{{N_A}_m,{N_B}_m}(x_e)\right)}{\sqrt{Var\left(\hat{\theta}(x_e;\hat{\varphi}(x_e))\right)}}+\frac{\theta_{{N_A}_m,{N_B}_m}\left(x_e;\varphi_{{N_A}_m,{N_B}_m}(x_e)\right)}{\sqrt{Var\left(\hat{\theta}(x_e;\hat{\varphi}(x_e))\right)}}\right)^2
  \,\frac{Var\left(\hat{\theta}(x_e;\hat{\varphi}(x_e))\right)}{\widehat{Var}\left(\hat{\theta}(x_e;\hat{\varphi}(x_e))\right)},
    \end{align*}
and now applying Conditions 3 - 7 of Assumption \ref{Assumption 1} to the components of this expression yields
\begin{align*}
  \left( \dfrac{\hat{\theta}(x_e;\hat{\varphi}(x_e))}{\sqrt{Var\left(\hat{\theta}(x_e;\hat{\varphi}(x_e))\right)}}\right)^2 \; \mid \; \Pi_{{N_A}_m , {N_B}_m } \dconv \chi_1^2, & \\
     \frac{Var\left(\hat{\theta}(x_e;\hat{\varphi}(x_e))\right)}{\widehat{Var}\left(\hat{\theta}(x_e;\hat{\varphi}(x_e))\right)} \; \mid \; \Pi_{{N_A}_m , {N_B}_m } \pconv 1. &
\end{align*}
Therefore, along $\{ \Pi_{{N_A}_m , {N_B}_m } \}_{m = 1}^\infty$ it follows that
\begin{equation*}
    2(L_{UR} - L_R (x_e))/\widehat{\text{Deff}}(x_e) \; \mid \; \Pi_{{N_A}_m , {N_B}_m } \dconv \chi_1^2 \ \text{as} \ m \rightarrow \infty.
\end{equation*}

\end{proof}

\subsection{Lemmas for Proof of Theorem \ref{power theorem}}
\label{Lemmas 3 and 4}

The following two Lemmas describe the asymptotic behaviour of the test statistic $\{ 2(L_{UR} - L_R (x))/\widehat{\text{Deff}}(x) \, \mid \, \Pi_{{N_A} , {N_B} } \}_{N_A ,N_B = 1}^\infty$ under the alternative hypothesis. The first Lemma focuses on sequences of finite populations that are uniformly bounded away from the boundary, corresponding to $C(x) = -\infty$ for all $x \in [\underline{t}, \overline{t}]$, where $C(\cdot)$ is given by~(\ref{C function}). The second Lemma focuses on sequences of finite populations that converge towards the boundary of the model of the null hypothesis $H_0^1$ where there exists one or more $x \in [\underline{t}, \overline{t}]$ such that $C(x)=0$.

\par The following result is helpful in establishing these intermediate technical results.

\begin{lemma}\label{Lemma Deff limit}
Given finite population $\Pi_{N_A,N_B}$, under Assumption~\ref{Assumption 2}, for each $x\in[\underline{t},\overline{t}]$ the design-effect satisfies
\begin{align}
\text{Deff}(x)\pconv d^{-1}\frac{\lim\limits_{N_A, N_B \rightarrow \infty } \mathbbm{E}( n \mid \Pi_{N_A, N_B}) Var\left(\hat{\theta}(x;\hat{\varphi}(x)) \mid \Pi_{N_A, N_B} \right)}{\lim\limits_{N_A,N_B\rightarrow+\infty}S_{N_A,N_B}},
\end{align}
and this limit is finite and positive.
\end{lemma}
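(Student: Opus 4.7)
The plan is to apply Slutsky's theorem to $\text{Deff}(x)$ after rescaling by $\mathbbm{E}(n\mid\Pi_{N_A,N_B})$ so that the ingredients in Assumption~\ref{Assumption 2} supply direct limits for each factor. Multiplying numerator and denominator of the defining expression~(\ref{eq - deff}) by $\mathbbm{E}(n\mid\Pi_{N_A,N_B})$ yields
\begin{align*}
\text{Deff}(x) = \frac{\mathbbm{E}(n\mid\Pi_{N_A,N_B})\,Var\!\left(\hat{\theta}(x;\hat{\varphi}(x))\mid\Pi_{N_A,N_B}\right)}{\bigl(\mathbbm{E}(n\mid\Pi_{N_A,N_B})/n\bigr)\,\sum_{i\in U}(W^\prime_i/n)\,H^2_i(x;\hat{\varphi}(x))}.
\end{align*}
Condition~2(ii) of Assumption~\ref{Assumption 2} hands us the convergence of the numerator to a finite positive constant, and Condition~1 gives that the prefactor $\mathbbm{E}(n\mid\Pi_{N_A,N_B})/n$ in the denominator converges in probability to $d\in\mathbb{R}_{++}$.

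The substantive step is to prove
\begin{align*}
\sum_{i\in U}(W^\prime_i/n)\,H^2_i(x;\hat{\varphi}(x))\,\pconv\,\lim_{N_A,N_B\to\infty}S_{N_A,N_B}(x),
\end{align*}
which I would decompose into two parts. The first part removes the estimation error in $\hat{\varphi}$: Lemma~\ref{H_i bounded} shows that $H_i(x;\varphi)$ is uniformly bounded on $\|\cdot\|_{\Psi}$-bounded neighborhoods, and since $H_i$ is linear in $\varphi$ it is Lipschitz in $\varphi$ with constant uniform in $i$ and $x$; combined with Condition~2 of Assumption~\ref{Assumption 1} and the normalization $\sum_{i\in U}W_i^\prime/n=1$, this yields
\begin{align*}
\sum_{i\in U}(W^\prime_i/n)\bigl[H^2_i(x;\hat{\varphi}(x))-H^2_i(x;\varphi_{N_A,N_B}(x))\bigr]\,\pconv\,0.
\end{align*}
The second part is a design-based law of large numbers showing that $\sum_{i\in U}(W^\prime_i/n)H^2_i(x;\varphi_{N_A,N_B}(x))$ is design-consistent for $S_{N_A,N_B}(x)$. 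This is precisely the Hájek-estimator property noted immediately after~(\ref{eq - S_NA_NB}), and it applies here irrespective of whether $\theta_{N_A,N_B}(x;\varphi_{N_A,N_B}(x))$ equals zero, because both $H_i$ and $S_{N_A,N_B}(x)$ are built from $h_s$ evaluated at $\theta=0$. Invoking Condition~2(iii) of Assumption~\ref{Assumption 2} to identify the limit of $S_{N_A,N_B}(x)$ as a strictly positive finite constant then closes the step.

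Combining the three limits via Slutsky's theorem yields
\begin{align*}
\text{Deff}(x)\,\pconv\,\frac{\lim_{N_A,N_B\to\infty}\mathbbm{E}(n\mid\Pi_{N_A,N_B})\,Var\!\left(\hat{\theta}(x;\hat{\varphi}(x))\mid\Pi_{N_A,N_B}\right)}{d\,\lim_{N_A,N_B\to\infty}S_{N_A,N_B}(x)},
\end{align*}
which is finite and strictly positive because all three limits on the right are positive and finite by the corresponding parts of Assumption~\ref{Assumption 2}. The main obstacle is the design-based LLN underlying the second part of the substantive step: while Hájek-type consistency for bounded moment functions is classical under simple random sampling, here it must be established within the complex-survey framework \emph{and} coupled with uniform-in-$\varphi$ control of the plug-in nuisance, so the argument leans crucially on both the uniform boundedness from Lemma~\ref{H_i bounded} and the $\|\cdot\|_{\Psi}$-consistency of $\hat{\varphi}$ in Condition~2 of Assumption~\ref{Assumption 1}, rather than on any merely pointwise-in-$x$ statements.
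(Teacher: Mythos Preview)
Your proposal is correct and follows essentially the same route as the paper: both factor $\text{Deff}(x)$ by inserting $\mathbbm{E}(n\mid\Pi_{N_A,N_B})$, invoke Condition~1 of Assumption~\ref{Assumption 2} for the $n/\mathbbm{E}(n)$ ratio, Condition~2(ii) for the scaled design-variance, and design-consistency of the H\'ajek-type average for the denominator, then assemble via Slutsky. Your two-step treatment of the denominator (first peeling off the plug-in error in $\hat{\varphi}$ via Lemma~\ref{H_i bounded} and linearity, then appealing to a design-based LLN) is more explicit than the paper, which simply asserts that $\sum_{i\in U}(W^\prime_i/n)H^2_i(x;\hat{\varphi}(x))$ is consistent for $S_{N_A,N_B}(x)$ without unpacking the role of $\hat{\varphi}$; your observation that the H\'ajek property does not require $\theta_{N_A,N_B}(x)=0$ is also a useful clarification the paper glosses over.
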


\begin{proof}
The proof proceed by the direct method. Using the representation of the design-effect, $\text{Deff}(x)$ in~\eqref{eq - deff}, note that
\begin{align}
\text{Deff}(x) & =\frac{nVar\left(\hat{\theta}(x;\hat{\varphi}(x)) \, \mid \, \Pi_{N_A ,N_B} \right)}{\sum_{i\in U}(W^\prime_i/n)H^2_i (x;\hat{\varphi}(x))}\\
 & =\frac{n}{E[n\mid \Pi_{N_A ,N_B}]}\frac{E[n\mid \Pi_{N_A ,N_B}] Var\left(\hat{\theta}(x;\hat{\varphi}(x)) \, \mid \, \Pi_{N_A ,N_B} \right)}{\sum_{i\in U}(W_i/n)H^2_i (x;\hat{\varphi}(x))}.
\end{align}
Now Condition 1 of Assumption~\ref{Assumption 2} implies that $\frac{n}{E[n\mid \Pi_{N_A ,N_B}]}\pconv d^{-1}\in\mathbb{R}_{++}$. Furthermore, the denominator in this expression is a consistent estimator of $S_{N_A,N_B}(x)$ in~(\ref{eq - S_NA_NB}). Combining these two points with the implications of Conditions 2(i) and 2(ii) of Assumption~\ref{Assumption 2} yield
\begin{align}
\text{Deff}(x)\pconv d^{-1}\frac{\lim\limits_{N_A, N_B \rightarrow \infty } \mathbbm{E}( n \mid \Pi_{N_A, N_B}) Var\left(\hat{\theta}(x;\hat{\varphi}(x)) \mid \Pi_{N_A, N_B} \right)}{\lim\limits_{N_A,N_B\rightarrow+\infty}S_{N_A,N_B}}.
\end{align}
Since $\lim\limits_{N_A,N_B\rightarrow+\infty}S_{N_A,N_B}$ and $\lim\limits_{N_A, N_B \rightarrow \infty } \mathbbm{E}( n \mid \Pi_{N_A, N_B}) Var\left(\hat{\theta}(x;\hat{\varphi}(x)) \mid \Pi_{N_A, N_B} \right)$ are both positive and finite by Assumption~\ref{Assumption 2}, the probability limit of $\text{Deff}(x)$ as $N_A,N_B\rightarrow+\infty$ is finite and positive.
\end{proof}

\subsubsection{Lemma \ref{power theorem Lemma 1}}

\begin{lemma}
Let $\mathbb{W}_1 (\varepsilon)$ be as in Definition~\ref{sequences of alternative hypothesis} and $C(x)$ be given by~(\ref{C function}). Fix an $\varepsilon > 0$ and $x \in [\underline{t},\overline{t}]$. For a sequence of finite populations $\{ \Pi_{N_A , N_B} \}_{N_A ,N_B = 1}^\infty \in \mathbb{W}_1 (\varepsilon)$, then
$$2(L_{UR} - L_R (x))/\widehat{\text{Deff}}(x) \pconv \infty \ \text{as} \ N_A , N_B \rightarrow \infty.$$
\label{power theorem Lemma 1}
\end{lemma}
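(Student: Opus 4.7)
The plan is to show that the pseudo-empirical log-likelihood ratio $2(L_{UR} - L_R(x))$ diverges at rate $n$ while the design-effect estimator $\widehat{\text{Deff}}(x)$ remains bounded. This is fundamentally different from the setting of Lemma~\ref{Rami Lemma 1}, because under a sequence in $\mathbb{W}_1(\varepsilon)$ the sample contrast $\hat\theta(x;\hat\varphi(x))$ converges in probability to a value below $-\varepsilon$ by Assumption~\ref{Assumption 1}(3); hence the Lagrange multiplier $\mu^* = \kappa/n$ that solves the moment constraint in (\ref{eq: PELF}) no longer vanishes in probability, and the small-$\kappa$ Taylor expansion used in the proof of Lemma~\ref{Rami Lemma 1} breaks down. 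A uniform lower bound on the likelihood ratio that does not rely on $\mu^* \to 0$ is therefore needed.

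The key step is a concavity argument applied to $f(\mu) := \sum_{i\in U} (W'_i/n) \log(1 + \mu H_i(x;\hat\varphi(x)))$. This map is concave with $f(0) = 0$, $f'(0) = \hat\theta(x;\hat\varphi(x))$, and attains its unique maximum at $\mu^*$; moreover $2(L_{UR} - L_R(x)) = 2n f(\mu^*)$ from the explicit form of $p_i^{opt}$ derived in Appendix~\ref{Solution to PELF}. Using the uniform bound $|H_i(x;\hat\varphi(x))| \leq M$ from Lemma~\ref{H_i bounded}, the candidate $\mu_0 = \hat\theta(x;\hat\varphi(x))/(2M^2)$ satisfies $|\mu_0 H_i| \leq 1/2$, which makes it feasible and allows the second-order inequality $\log(1+u) \geq u - u^2$ (valid on $|u|\leq 1/2$) to give
\begin{equation*}
f(\mu^*) \;\geq\; f(\mu_0) \;\geq\; \mu_0 \hat\theta(x;\hat\varphi(x)) - \mu_0^2\, \overline{H^2} \;\geq\; \frac{\hat\theta^2(x;\hat\varphi(x))}{4M^2}, \qquad \overline{H^2} := \sum_{i\in U}(W'_i/n)\,H_i^2,
\end{equation*}
and hence $2(L_{UR} - L_R(x)) \geq n\,\hat\theta^2(x;\hat\varphi(x))/(2M^2)$.

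To finish, combine $\hat\theta^2(x;\hat\varphi(x)) \geq \varepsilon^2/2$ eventually in probability with $n \pconv \infty$ (from Assumption~\ref{Assumption 1}(1) and Assumption~\ref{Assumption 2}(1)) to conclude that the numerator diverges. The denominator is bounded: by Cauchy--Schwarz, $\overline{H^2} \geq \hat\theta^2(x;\hat\varphi(x)) \geq \varepsilon^2/2$ with probability tending to one, so
\begin{equation*}
\widehat{\text{Deff}}(x) \;=\; \frac{n\,\widehat{Var}\bigl(\hat\theta(x;\hat\varphi(x))\bigr)}{\overline{H^2}} \;\leq\; \frac{2\,n\,\widehat{Var}\bigl(\hat\theta(x;\hat\varphi(x))\bigr)}{\varepsilon^2},
\end{equation*}
which is $O_P(1)$ by Assumptions~\ref{Assumption 1}(6) and \ref{Assumption 2}(2). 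Dividing yields $2(L_{UR} - L_R(x))/\widehat{\text{Deff}}(x) \pconv \infty$.

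The main obstacle is precisely this breakdown of the Taylor expansion of Lemma~\ref{Rami Lemma 1}: since $\mu^*$ is $O_P(1)$ rather than $o_P(1)$ under the alternative, the perturbative linearization of $\log(1+\mu H_i)$ is unavailable, and one must supply a genuinely uniform-in-$\mu$ lower bound for the EL ratio, as I propose via the explicit feasible choice $\mu_0$. A secondary issue is the feasibility of (\ref{eq: PELF}) when $H_i(x;\hat\varphi(x))$ fails to take both signs in the achieved sample; in that case the problem is infeasible and $L_R(x) = -\infty$ by convention, which only reinforces the divergence conclusion.
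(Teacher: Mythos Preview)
Your argument is correct, and it reaches the conclusion by a genuinely different route from the paper. Both proofs exploit the same duality lower bound $2(L_{UR}-L_R(x))\geq 2n f(\mu)$ for any feasible $\mu$, but they pick different test values. The paper takes $\mu=\xi/n$ with $\xi=-1/\sqrt{Var(\hat\theta(x;\hat\varphi(x))\mid\Pi_{N_A,N_B})}$, verifies $\xi=o_P(n)$ so that the second-order Taylor expansion $\log(1+u)\aequal u-u^2/2$ is still legitimate, and arrives at a lower bound of the form $-2\hat\theta/\sqrt{Var}-1/\text{Deff}(x)$; divergence then follows because the standardized contrast has limit $C(x)=-\infty$ under $\mathbb{W}_1(\varepsilon)$, and $\widehat{\text{Deff}}(x)$ is handled via Lemma~\ref{Lemma Deff limit}. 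You instead choose $\mu_0=\hat\theta/(2M^2)$, which is $O_P(1)$ but satisfies $|\mu_0 H_i|\leq 1/2$ thanks to the uniform bound $M$ from Lemma~\ref{H_i bounded}, and apply the non-asymptotic inequality $\log(1+u)\geq u-u^2$ to obtain the explicit bound $2(L_{UR}-L_R(x))\geq n\hat\theta^2/(2M^2)$. Your route is more elementary in that it sidesteps any asymptotic expansion and delivers a clean $O(n)$ rate directly; the paper's route has the advantage that it makes the role of $C(x)$ explicit, which dovetails with the local-power analysis in Lemma~\ref{power theorem Lemma 2} where the same standardized quantity reappears with $C(x)=0$. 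Both arguments ultimately use the same assumptions to control $\widehat{\text{Deff}}(x)$, and your Cauchy--Schwarz lower bound on $\overline{H^2}$ is a clean substitute for invoking Assumption~\ref{Assumption 2}(2)(iii) on the population variance.
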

\begin{proof}
We proceed via the direct method.  Fix an $x \in [\underline{t},\overline{t}]$ arbitrarily.  Following identical steps in Lemma \ref{Rami Lemma 1} to obtain \eqref{Lemma 1 log expression}:
\begin{equation*}
    2(L_{UR} - L_R (x)) = 2 \sum_{i\in U} W_i \log \left( 1 + \frac{\kappa H_i (x;\hat{\varphi}(x))}{n} \right)
\end{equation*}
where $\kappa$ is given by \eqref{eq: 4i}.  By the duality of the of the optimisation problem \eqref{Constrained Problem}, we have that:
\begin{equation}
    2(L_{UR} - L_R (x)) = \max_{\xi \in \mathbb{R}} \left[ 2 \sum_{i\in U} W^\prime_i \log \left( \frac{1}{n p_i ( \xi)} \right) \right]
\end{equation}
Where $p_i (\xi)$ is given by:
\begin{equation*}
    p_i (\xi) = \dfrac{1}{n + \xi H_i (x;\hat{\varphi}(x))}
\end{equation*}
Thus, by applying duality, we have that for any arbitrary $\xi \in \mathbb{R}$:
\begin{equation}
    2(L_{UR} - L_R (x)) \geq  2 \sum_{i\in U} W^\prime_i \log \left( 1 + \frac{\xi H_i (x;\hat{\varphi}(x))}{n} \right)
    \label{lower bound}
\end{equation}
We now choose $\xi$ dependent on the sequence of finite populations such that $\xi = o_{P} (n)$ as $N_A, N_B \rightarrow \infty$, that is:
\begin{equation*}
    \frac{\xi}{n} \mid \, \Pi_{N_A, N_B} \pconv 0 \quad \text{as} \ N_A ,N_B \rightarrow \infty
\end{equation*}
This property on $\xi$ is necessary to ensure that the expansion $\log(1+u) \aequal u - \dfrac{u^2}{2} \ \text{for} \ |u| < 1$, where $u = \dfrac{\xi H_i (x;\hat{\varphi}(x_m))}{n}$, is valid. Note that by Lemma~ \ref{H_i bounded}, $H_i (x;\hat{\varphi}(x))$ is uniformly bounded on $[\underline{t},\overline{t}]$. Setting $\xi$ as:
\begin{equation}
    \xi = \frac{-1}{\sqrt{Var\left(\hat{\theta}(x;\hat{\varphi}(x))\, \mid \, \Pi_{N_A ,N_B} \right) }}
    \label{xi substitution}
\end{equation}
Then using the second order logarithm expansion on \eqref{lower bound} and substituting \eqref{xi substitution} for $\xi$ we get:
\begin{align*}
     2(L_{UR} - L_R (x)) &\geq  2 \sum_{i\in U} W^\prime_i \log \left( 1 + \frac{\xi H_i (x;\hat{\varphi}(x))}{n} \right) \\
     & \aequal 2 \sum_{i\in U} W^\prime_i \left[\frac{\xi H_i (x;\hat{\varphi}(x))}{n} - \frac{1}{2} \left( \frac{\xi H_i (x;\hat{\varphi}(x))}{n} \right)^2 \right] \\
     & = -2 \frac{\hat{\theta}(x;\hat{\varphi}(x))}{\sqrt{Var\left(\hat{\theta}(x;\hat{\varphi}(x)) \, \mid \, \Pi_{N_A ,N_B} \right) }} - \frac{1}{\text{Deff}(x)} \\
     & = -2 \frac{\hat{\theta}(x;\hat{\varphi}(x)) - \theta_{N_A,N_B}(x;\varphi_{N_A,N_B}(x))}{\sqrt{Var\left(\hat{\theta}(x;\hat{\varphi}(x))\, \mid \, \Pi_{N_A ,N_B} \right) }} \\
     &\quad -2\frac{\theta_{N_A,N_B}(x;\varphi_{N_A,N_B}(x))}{\sqrt{Var\left(\hat{\theta}(x;\hat{\varphi}(x)) \, \mid \, \Pi_{N_A ,N_B} \right)}} - \frac{1}{\text{Deff}(x)}
     \stepcounter{equation}\tag{\theequation}\label{Rami B.6}
\end{align*}
Hence, asymptotically,
\begin{align*}
\frac{2(L_{UR} - L_R (x))}{\widehat{\text{Deff}}(x)} &\geq \frac{-2}{\widehat{\text{Deff}}(x)}\left[\frac{\hat{\theta}(x;\hat{\varphi}(x)) - \theta_{N_A,N_B}(x;\varphi_{N_A,N_B}(x))}{\sqrt{Var\left(\hat{\theta}(x;\hat{\varphi}(x))\, \mid \, \Pi_{N_A ,N_B} \right) }}\right] \\
     &\quad  -\frac{2}{\widehat{\text{Deff}}(x)}\left[\frac{\theta_{N_A,N_B}(x;\varphi_{N_A,N_B}(x))}{\sqrt{Var\left(\hat{\theta}(x;\hat{\varphi}(x)) \, \mid \, \Pi_{N_A ,N_B} \right)}}\right] - \frac{1}{\widehat{\text{Deff}}(x)\,\text{Deff}(x)}\stepcounter{equation}\tag{\theequation}\label{Rami B.7}
\end{align*}
Conditions 4 and 5 of Assumption \ref{Assumption 1} and  Lemma~\ref{Lemma Deff limit} imply the first term on the right side in \eqref{Rami B.7} converges in distribution to $$N\left(0,\frac{4}{\lim_{N_A,N_B\rightarrow+\infty}\text{Deff}(x)^2}\right).$$
The second term on the right side in \eqref{Rami B.7} converges to $-2C(x)/\lim_{N_A,N_B\rightarrow+\infty}\text{Deff}(x)$. Since the denominator is finite and positive, this limit is determined by the value of $C(x)$. As $C(x)=-\infty$, the limit of the second term diverges to $\infty$. Finally, the limit of third term on the right side in \eqref{Rami B.7} is $1/\lim_{N_A,N_B\rightarrow+\infty}\text{Deff}(x)^2\in\mathbb{R}_{++}$ by Condition 5 of Assumption~\ref{Assumption 1} and Lemma~\ref{Lemma Deff limit}.

\par Thus, as a result we have that $2(L_{UR} - L_R (x))/\widehat{\text{Deff}}(x) \pconv \infty$ as $N_A, N_B \rightarrow \infty$. We now demonstrate this choice of $\xi$ satisfies $\xi = o_P (n)$.  Note that:
\begin{align*}
    \left|\frac{\xi}{n}\right| &= \frac{1}{n \sqrt{Var\left(\hat{\theta}(x;\hat{\varphi}(x))\, \mid \, \Pi_{N_A ,N_B} \right) }} \\
    & = \frac{\sqrt{\dfrac{\mathbbm{E}( n \mid \Pi_{N_A, N_B})}{n}} \sqrt{\dfrac{1}{n}}}{\sqrt{\mathbbm{E}( n \mid \Pi_{N_A, N_B}) Var\left(\hat{\theta}(x;\hat{\varphi}(x)) \, \mid \, \Pi_{N_A ,N_B} \right)}}
\end{align*}
Then using Conditions 1 and 2 of Assumption \ref{Assumption 2}, we find the limiting behaviour is determined by $\frac{1}{\sqrt{n}}$. Therefore, since $n \pconv \infty$ as $N_A, N_B \rightarrow \infty$ (i.e., implied by Condition 1 of Assumption~\ref{Assumption 1}), we have that $\xi=o_P (n)$.
\end{proof}

\subsubsection{Lemma \ref{power theorem Lemma 2}}

\begin{lemma}
Let $\mathbb{W}_1 (\varepsilon)$ be defined the same as in Definition \ref{sequences of alternative hypothesis} and $C(x)$ is given by~(\ref{C function}). For a sequence of finite populations $\{ \Pi_{N_A} , {N_B} \}_{N_A ,N_B = 1}^\infty \in \mathbb{W}_1 (0)$ define the set $X_0 = \{ x \in [\underline{t}, \overline{t}] : C(x) = 0\}$.  If $X_0 \neq \emptyset$ then for $x \in X_0$:
$$2(L_{UR} - L_R (x ))/\widehat{\text{Deff}}(x) \mid \Pi_{N_A, N_B} \dconv \chi_1^2 \ \text{as} \ N_A , N_B \rightarrow \infty.$$
\label{power theorem Lemma 2}
\end{lemma}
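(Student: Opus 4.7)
My plan is to mirror the proof of Lemma~\ref{Rami Lemma 2}, since the two statements share the defining feature $C(x)=0$ at the point of interest; the only conceptual difference is that here $\{\Pi_{N_A,N_B}\}\in \mathbb{W}_1(0)$ rather than residing on the boundary of the null model. Fix $x\in X_0$. I would first verify that the Taylor-expansion apparatus used to derive~(\ref{eq: 4j}) in Lemma~\ref{Rami Lemma 1} carries over unchanged, then combine the resulting asymptotic form with Conditions~5--6 of Assumption~\ref{Assumption 1} and Lemma~\ref{Lemma Deff limit} to obtain the $\chi_1^2$ limit.

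To validate the expansion, the key input is $\bar H(x;\hat\varphi(x))=\hat\theta(x;\hat\varphi(x))\pconv 0$, which I would obtain as follows. By Condition~5 of Assumption~\ref{Assumption 1}, $\hat\theta(x;\hat\varphi(x))-\theta_{N_A,N_B}(x;\varphi_{N_A,N_B}(x))=O_P\!\left(\sqrt{Var(\hat\theta(x;\hat\varphi(x)))}\right)$, and since Condition~2(i) of Assumption~\ref{Assumption 2} gives $Var(\hat\theta(x;\hat\varphi(x)))\to 0$, this difference is $o_P(1)$. Meanwhile $x\in X_0$ implies $\theta_{N_A,N_B}(x;\varphi_{N_A,N_B}(x))=C(x)\sqrt{Var(\hat\theta(x;\hat\varphi(x)))}+o\!\left(\sqrt{Var(\hat\theta(x;\hat\varphi(x)))}\right)=o(1)$. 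Hence $\hat\theta(x;\hat\varphi(x))\pconv 0$, so the Lagrange multiplier $\kappa$ given by~(\ref{eq: 4i}) is $o_P(n)$ (using Condition~2(iii) of Assumption~\ref{Assumption 2} to bound $V(x)$ away from zero), $|\kappa H_i(x;\hat\varphi(x))/n|<1$ holds with probability tending to one by Lemma~\ref{H_i bounded}, and $\sum_{i\in U}(W'_i/n^2)(H_i-\bar H)^2=\sum_{i\in U}(W'_i/n^2)H_i^2+o_P(1)$.

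With those ingredients, the same algebra as in Lemmas~\ref{Rami Lemma 1}--\ref{Rami Lemma 2} yields
\begin{align*}
\frac{2(L_{UR}-L_R(x))}{\widehat{\text{Deff}}(x)}&\aequal \left(\frac{\hat\theta(x;\hat\varphi(x))-\theta_{N_A,N_B}(x;\varphi_{N_A,N_B}(x))}{\sqrt{Var(\hat\theta(x;\hat\varphi(x)))}}+\frac{\theta_{N_A,N_B}(x;\varphi_{N_A,N_B}(x))}{\sqrt{Var(\hat\theta(x;\hat\varphi(x)))}}\right)^{\!2}\\
&\qquad\times\frac{Var(\hat\theta(x;\hat\varphi(x)))}{\widehat{Var}(\hat\theta(x;\hat\varphi(x)))}.
\end{align*}
The second summand inside the square converges to $C(x)=0$ by definition of $X_0$; the first summand converges in distribution to $\mathbb{G}(x)\sim N(0,1)$ by Condition~5 of Assumption~\ref{Assumption 1}; and the variance ratio tends to $1$ in probability by Condition~6. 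Slutsky's theorem then delivers $2(L_{UR}-L_R(x))/\widehat{\text{Deff}}(x)\mid \Pi_{N_A,N_B}\dconv \mathbb{G}(x)^2\dist\chi_1^2$.

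The main obstacle is a subtle but important rate bookkeeping: ensuring the linearization is sufficiently accurate after the design-effect scaling. Specifically, the claim $\kappa=o_P(n)$ used to justify dropping the higher-order terms in $\log(1+\kappa H_i/n)$ requires both $\bar H\pconv 0$ and $V(x)$ bounded away from zero with probability tending to one; this is where Condition~2(iii) of Assumption~\ref{Assumption 2} and Lemma~\ref{Lemma Deff limit} must be combined carefully, because otherwise the residual from the $u-u^2/2$ expansion could be non-negligible when divided by the (vanishing-order) $\widehat{\text{Deff}}(x)^{-1}$ normalization. Once that is handled, the remainder of the argument is essentially the same as under the null. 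This pointwise $\chi_1^2$ limit, combined with the divergence result of Lemma~\ref{power theorem Lemma 1} at indices in $[\underline t,\overline t]\setminus X_0$, feeds directly into the characterization of the limiting rejection probability in Theorem~\ref{power theorem}.
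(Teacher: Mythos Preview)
Your proposal is correct and follows essentially the same route as the paper's proof: both invoke the Taylor expansion of Lemma~\ref{Rami Lemma 1} to obtain the asymptotic equivalence $2(L_{UR}-L_R(x))/\widehat{\text{Deff}}(x)\aequal(\cdot)^2\cdot Var/\widehat{Var}$, then use $C(x)=0$ and Conditions~4--6 of Assumption~\ref{Assumption 1} to conclude $\dconv\mathbb{G}^2(x)\sim\chi_1^2$. You supply more detail than the paper on why $\bar H\pconv 0$ and $\kappa=o_P(n)$, which the paper simply subsumes under ``following identical steps as in Lemma~\ref{Rami Lemma 1}.'' One small clarification: your stated worry about a ``vanishing-order $\widehat{\text{Deff}}(x)^{-1}$ normalization'' is unfounded, since by Lemma~\ref{Lemma Deff limit} and Condition~6 the quantity $\widehat{\text{Deff}}(x)$ converges in probability to a strictly positive constant, so dividing by it does not inflate the Taylor remainder.
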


\begin{proof}
The asymptotic power of the test will be lowest for sequences of finite populations that converge to the boundary of the model of the alternative hypothesis $H_1^1$.  This will be where for every finite $N_A, N_B$ the finite population is in $\mathcal{M}_{N_A ,N_B}^1$ but converges to the model of the null hypothesis $H_0^1$ in the limit. That is, for one or more $x \in [\underline{t}, \overline{t}]$ we will have $C(x) = 0$.  Let $X_0$ be the set $\{x \in [\underline{t}, \overline{t}] : C(x) = 0 \}$.  Fix an $x\in X_0$.  Following identical steps as in Lemma \ref{Rami Lemma 1}, we obtain:
\begin{align*}
    2(L_{UR} - L_R (x)) \aequal& \left( \frac{\hat{\theta}(x;\hat{\varphi}(x))}{\sqrt{Var\left(\hat{\theta}(x;\hat{\varphi}(x)) \right)}} \right)^2 \frac{Var\left(\hat{\theta}(x;\hat{\varphi}(x)) \right)}{\sum_{i\in U} W^\prime_i \dfrac{1}{n^2} H^2_i (x;\hat{\varphi}(x))} \\
    =& \left( \frac{\hat{\theta}(x;\hat{\varphi}(x)) - \theta_{N_A,N_B}(x;\varphi_{N_A,N_B}(x)) }{\sqrt{Var\left(\hat{\theta}(x;\hat{\varphi}(x)) \right)}} + \frac{\theta_{N_A,N_B}(x;\varphi_{N_A,N_B}(x)) }{\sqrt{Var\left(\hat{\theta}(x;\hat{\varphi}(x)) \right)}} \right)^2 \,\text{Deff}(x)
\end{align*}
Using Conditions 4 and 5 of Assumption \ref{Assumption 1} and that $C(x) = 0$ for $x \in X_0$,
\begin{equation*}
    2(L_{UR} - L_R (x))/\widehat{\text{Deff}}(x)  \dconv \mathbb{G}^2(x) \ \text{as} \ N_A ,N_B \rightarrow \infty.
\end{equation*}
For each $x$, the random variable $\mathbb{G} (x)$ is distributed as standard normal; therefore, $\mathbb{G}^2(x)\sim \chi_1^2$.
\end{proof}

\section{Technical Results for the Examples in Section~\ref{Subsection Examples}}\label{Appendix D}
This section presents the identified sets of $D^s_{N_K}(x)$ for each $x\in[\underline{t},\overline{t}]$ and $K=A,B$, associated with the Examples in Section~\ref{Subsection Examples}. These sets are characterized in terms of sharp bounds $$\underline{D}_{N_K}^{s}(x)\leq D^s_{N_K}(x)\leq \overline{D}_{N_A}^{s}(x)\quad x\in[\underline{t},\overline{t}],$$ and the results of this section describe the bounds under the informational setups of the examples.
\subsection{Example~\ref{Example - WC Bounds}: Worst-Case Bounds}
\begin{proposition}\label{prop - WC ID Set}
For each $x\in[\underline{t},\overline{t}]$ and $s\in\mathbb{N}$, the worst-case bounds are
\begin{align*}
\underline{D}_{N_A}^{s}(x) & = \mathbbm{E}_{F_A\left(x| Z^A=Z^B=1\right)}\left[\frac{(x-Y^A)^{s-1}}{(s-1)!}\,\mathbbm{1}\left[Y^A\leq x\right]\right]\,\delta_{11}\\
& \qquad+\mathbbm{E}_{F_A\left(x| Z^A=1,Z^B=0\right)}\left[\frac{(x-Y^A)^{s-1}}{(s-1)!}\,\mathbbm{1}\left[Y^A\leq x\right]\right]\,\delta_{10},\\
\overline{D}_{N_A}^{s}(x) & = \underline{D}_{N_A}^{s}(x)+\frac{(x-\underline{Y}^A)^{s-1}}{(s-1)!}\,\delta_{00},\\
\underline{D}_{N_B}^{s}(x) & =  \mathbbm{E}_{F_B\left(x| Z^A=Z^B=1\right)}\left[\frac{(x-Y^B)^{s-1}}{(s-1)!}\,\mathbbm{1}\left[Y^B\leq x\right]\right]\,\delta_{11},\\
\overline{D}_{N_B}^{s}(x) & = \underline{D}_{N_B}^{s}(x)+\frac{(x-\underline{Y}^B)^{s-1}}{(s-1)!}\,(\delta_{00}+\delta_{10}),
\end{align*}
where $\underline{Y}^K=\inf\mathcal{X}_K$, for $K=A,B$.
\end{proposition}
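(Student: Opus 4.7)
The plan is to start from the Law of Total Probability decomposition~(\ref{eq - LTP Expansion}), isolate the identifiable and non-identifiable terms under the observational scheme~(\ref{eq - Sample pairs}), and then bound the non-identifiable conditional expectations using Assumption~\ref{Assumption - Compactness of X_K}. Sharpness is then established by constructing, for each value in the interval $[\underline{D}_{N_K}^s(x),\overline{D}_{N_K}^s(x)]$, a configuration of the unobserved outcomes in $\mathcal{X}_K$ that is compatible with the data and attains that value.

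First I would specialize~(\ref{eq - LTP Expansion}) for $K=A$. Since $\delta_{01}=0$, only three conditional expectations appear, and the two associated with $\{Z^A=1\}$ are directly identifiable from the achieved sample; only the term $\mathbbm{E}_{F_A(\cdot|Z^A=Z^B=0)}\bigl[\tfrac{(x-Y^A)^{s-1}}{(s-1)!}\mathbbm{1}[Y^A\leq x]\bigr]$ is non-identified, and it is multiplied by $\delta_{00}$. For $K=B$, the same decomposition leaves two non-identified terms, weighted by $\delta_{00}$ and $\delta_{10}$ respectively. Next I would bound these non-identified conditional expectations. For any $y\in\mathcal{X}_K$ with $y\leq x\leq \overline{t}$, the map $y\mapsto \tfrac{(x-y)^{s-1}}{(s-1)!}$ is nonnegative and maximized at $y=\underline{Y}^K=\inf\mathcal{X}_K$, while the indicator $\mathbbm{1}[y\leq x]$ can be $0$. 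Consequently, every such conditional expectation lies in $\bigl[0,\tfrac{(x-\underline{Y}^K)^{s-1}}{(s-1)!}\bigr]$, which under Assumption~\ref{Assumption - Compactness of X_K} is a bounded interval. Substituting the zero values gives the lower bounds $\underline{D}_{N_K}^s(x)$ and the maximal values give the upper bounds $\overline{D}_{N_K}^s(x)$ as stated.

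The main obstacle, as expected, is \emph{sharpness}: showing the resulting interval is exactly the identified set and not merely an outer bound. I would argue this by exhibiting two extreme populations of the non-respondents that are observationally equivalent to the true data. For the upper bound on $D_{N_A}^s(x)$, assign $Y^A_i=\underline{Y}^A$ to every unit-nonresponder $i$ (i.e., every $i$ with $Z^A_i=Z^B_i=0$); for the lower bound, assign $Y^A_i=\overline{Y}^A=\sup\mathcal{X}_A$ with $\overline{Y}^A>\overline{t}$, so the indicator $\mathbbm{1}[Y^A_i\leq x]$ vanishes for all $x\in[\underline{t},\overline{t}]$. Analogously for $K=B$, I would fix the missing $Y^B_i$ values at $\underline{Y}^B$ (resp.\ $\overline{Y}^B$) across both the wave and unit nonrespondents to attain the upper (resp.\ lower) bound. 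Intermediate points in $[\underline{D}_{N_K}^s(x),\overline{D}_{N_K}^s(x)]$ are achieved by splitting the missing units between the two extremes, and the intermediate value theorem applied to this continuous interpolation---valid because $\mathcal{X}_K$ is compact and connected intervals suffice---delivers every value. This closes the proof once one confirms that each constructed configuration reproduces the observed data (it trivially does, since only $Y^K$ values with $Z^K=0$ are altered).
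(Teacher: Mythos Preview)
Your proposal is correct and follows essentially the same approach as the paper: both start from the Law of Total Probability expansion~(\ref{eq - LTP Expansion}), identify which conditional expectations are pinned down by the data, and bound the remaining ones by $0$ and $\tfrac{(x-\underline{Y}^K)^{s-1}}{(s-1)!}$ before substituting. Your treatment is in fact more complete, since you explicitly argue sharpness by constructing attaining configurations of the unobserved outcomes, whereas the paper's proof simply substitutes the extreme values without verifying attainability; one minor caveat is that your intermediate-value argument presumes $\mathcal{X}_K$ is an interval, which Assumption~\ref{Assumption - Compactness of X_K} does not strictly require, though the paper treats it as such throughout.
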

\begin{proof}
The proof proceeds by the direct method. Let $K=A$, and fix $s\in\mathbb{N}$ and $x\in[\underline{t},\overline{t}]$. The law of total probability expansion of $D_{N_A}^s (x)$, given by~(\ref{eq - LTP Expansion}), is
\begin{align*}
\mathbbm{E}_{F_A\left(\cdot| Z^A=Z^B=1\right)}& \left[\frac{(x-Y^A)^{s-1}}{(s-1)!}\,\mathbbm{1}\left[Y^A\leq x\right]\right]\,\delta_{11}+\mathbbm{E}_{F_A\left(\cdot| Z^A=1,Z^B=0\right)}\left[\frac{(x-Y^A)^{s-1}}{(s-1)!}\,\mathbbm{1}\left[Y^A\leq x\right]\right]\,\delta_{10}\\
 & \qquad+\mathbbm{E}_{F_A\left(\cdot| Z^A=Z^B=0\right)}\left[\frac{(x-Y^A)^{s-1}}{(s-1)!}\,\mathbbm{1}\left[Y^A\leq x\right]\right]\,\delta_{00}.
\end{align*}
Note that $\mathbbm{E}_{F_A\left(\cdot| Z^A=Z^B=0\right)}\left[\frac{(x-Y^A)^{s-1}}{(s-1)!}\,\mathbbm{1}\left[Y^A\leq x\right]\right]$ is not identified, but can be bounded from below and above by 0 and
$\frac{(x-\underline{Y}^A)^{s-1}}{(s-1)!}$, respectively. These bounds on that conditional expectation imply the desired bounds on $D_{N_A}^s (x)$ by direct substitution. Now since $s\in\mathbb{Z}_+$ and $x\in[\underline{t},\overline{t}]$ were arbitrary, the above bounds hold for each $s\in\mathbb{N}$ and $x\in[\underline{t},\overline{t}]$.

\par Let $K=B$, and fix $s\in\mathbb{N}$ and $x\in[\underline{t},\overline{t}]$. The law of total probability expansion of $D_{N_B}^s (x)$, given by~(\ref{eq - LTP Expansion}), is
\begin{align*}
\mathbbm{E}_{F_B\left(\cdot| Z^A=Z^B=1\right)}& \left[\frac{(x-Y^B)^{s-1}}{(s-1)!}\,\mathbbm{1}\left[Y^B\leq x\right]\right]\,\delta_{11}+\mathbbm{E}_{F_B\left(\cdot| Z^A=1,Z^B=0\right)}\left[\frac{(x-Y^B)^{s-1}}{(s-1)!}\,\mathbbm{1}\left[Y^B\leq x\right]\right]\,\delta_{10}\\
 & \qquad+\mathbbm{E}_{F_B\left(\cdot| Z^A=Z^B=0\right)}\left[\frac{(x-Y^B)^{s-1}}{(s-1)!}\,\mathbbm{1}\left[Y^B\leq x\right]\right]\,\delta_{00}.
\end{align*}
Note that $\mathbbm{E}_{F_B\left(\cdot| Z^A=1,Z^B=0\right)}\left[\frac{(x-Y^B)^{s-1}}{(s-1)!}\,\mathbbm{1}\left[Y^B\leq x\right]\right]$ and $\mathbbm{E}_{F_B\left(\cdot| Z^A=Z^B=0\right)}\left[\frac{(x-Y^B)^{s-1}}{(s-1)!}\,\mathbbm{1}\left[Y^B\leq x\right]\right]$ not identified, but are bounded from below by 0 and from above by $\frac{(x-\underline{Y}^B)^{s-1}}{(s-1)!}$. These bounds on that conditional expectation imply the desired bounds on $D_{N_B}^s (x)$ by direct substitution. Now since $s\in\mathbb{N}$ and $x\in[\underline{t},\overline{t}]$ were arbitrary, the above bounds hold for each $s\in\mathbb{N}$ and $x\in[\underline{t},\overline{t}]$.
\end{proof}
\subsection{Example~\ref{Example - MCAR and WC Bounds}}
Recall that the assumption of MCAR for \emph{unit} nonresponse means
\begin{align}\label{eq - Appendix MCAR Example}
F_K\left(x| Z^A=Z^B=1\right)=F_K\left(x| Z^A=Z^B=0\right)\quad\forall x\in\mathcal{X}_K
\end{align}
holds, for $K=A,B$, and that this condition does not imply any restrictions on wave nonrepsonse. We have the following result.
\begin{proposition}\label{prop - MCAR UNR WC WNR ID Set}
Suppose that~(\ref{eq - Appendix MCAR Example}) holds. For each $x\in[\underline{t},\overline{t}]$ and $s\in\mathbb{N}$, the corresponding bounds are
\begin{align*}
\underline{D}_{N_A}^{s}(x) & = \mathbbm{E}_{F_A\left(x| Z^A=Z^B=1\right)}\left[\frac{(x-Y^A)^{s-1}}{(s-1)!}\,\mathbbm{1}\left[Y^A\leq x\right]\right]\,(\delta_{11}+\delta_{00})\\
& \qquad +\mathbbm{E}_{F_A\left(x| Z^A=1,Z^B=0\right)}\left[\frac{(x-Y^A)^{s-1}}{(s-1)!}\,\mathbbm{1}\left[Y^A\leq x\right]\right]\,\delta_{10}\\
\overline{D}_{N_A}^{s}(x) & = \underline{D}_{N_A}^{s}(x),\\
\underline{D}_{N_B}^{s}(x) & =  \mathbbm{E}_{F_B\left(x| Z^A=Z^B=1\right)}\left[\frac{(x-Y^B)^{s-1}}{(s-1)!}\,\mathbbm{1}\left[Y^B\leq x\right]\right]\,(\delta_{11}+\delta_{00}),\\
\overline{D}_{N_B}^{s}(x) & = \underline{D}_{N_B}^{s}(x)+\frac{(x-\underline{Y}^B)^{s-1}}{(s-1)!}\,\delta_{10},
\end{align*}
where $\underline{Y}^K=\inf\mathcal{X}_K$, for $K=A,B$.
\end{proposition}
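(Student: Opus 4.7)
The plan is to mirror the direct-method argument used for Proposition~\ref{prop - WC ID Set}, but substitute the conditional-on-unit-nonrespondent pieces of the law-of-total-probability expansion~(\ref{eq - LTP Expansion}) using the MCAR condition~(\ref{eq - Appendix MCAR Example}). Fix $s\in\mathbb{N}$ and $x\in[\underline{t},\overline{t}]$. Start from the three-term decomposition of $D^s_{N_K}(x)$ (the $\{Z^A=0,Z^B=1\}$ term vanishes since $\delta_{01}=0$). For $K=A$, the only piece not directly estimable from observables is $\mathbbm{E}_{F_A(\cdot|Z^A=Z^B=0)}[(x-Y^A)^{s-1}/(s-1)!\,\mathbbm{1}[Y^A\leq x]]$; the MCAR equality lets me replace $F_A(\cdot|Z^A=Z^B=0)$ by $F_A(\cdot|Z^A=Z^B=1)$ on $\mathcal{X}_A$, which when combined with the $\delta_{11}$ term gives the $(\delta_{11}+\delta_{00})$ coefficient stated in the proposition. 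Since nothing is left to vary, the resulting value equals both $\underline{D}^s_{N_A}(x)$ and $\overline{D}^s_{N_A}(x)$, delivering point identification.

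For $K=B$, perform the analogous substitution for the $\{Z^A=Z^B=0\}$ term, which again collapses to $(\delta_{11}+\delta_{00})$ multiplied by the expectation under $F_B(\cdot|Z^A=Z^B=1)$. The wave-nonresponse piece $\mathbbm{E}_{F_B(\cdot|Z^A=1,Z^B=0)}[(x-Y^B)^{s-1}/(s-1)!\,\mathbbm{1}[Y^B\leq x]]$ is \emph{not} constrained by~(\ref{eq - Appendix MCAR Example}), so I bound it by the worst-case inequalities $0\leq\mathbbm{E}_{F_B(\cdot|Z^A=1,Z^B=0)}[(x-Y^B)^{s-1}/(s-1)!\,\mathbbm{1}[Y^B\leq x]]\leq (x-\underline{Y}^B)^{s-1}/(s-1)!$, which are finite by Assumption~\ref{Assumption - Compactness of X_K}. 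Multiplying by $\delta_{10}$ and adding the identified piece yields the displayed expressions for $\underline{D}^s_{N_B}(x)$ and $\overline{D}^s_{N_B}(x)$.

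The remaining obligation is sharpness of the $B$-bounds. The key observation is that the MCAR restriction~(\ref{eq - Appendix MCAR Example}) constrains only the unit-nonrespondent subpopulation distributions and places no condition on $F_B(\cdot|Z^A=1,Z^B=0)$. Hence I can attain the upper bound by choosing $F_B(\cdot|Z^A=1,Z^B=0)$ to be a point mass at $\underline{Y}^B$ and the lower bound by placing the mass at $\overline{Y}^B$; both choices are CDFs supported on $\mathcal{X}_B$ and are consistent with~(\ref{eq - Appendix MCAR Example}) and with all observable data, so no further observational-equivalence obstruction arises. The main technical obstacle, as in Proposition~\ref{prop - WC ID Set}, is the bookkeeping of which conditional expectations are identified, partially identified, or unrestricted under the maintained assumption, but this is transparent once one separates the unit-nonresponse block (pinned by MCAR) from the wave-nonresponse block (freely variable).
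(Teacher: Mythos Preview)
Your proposal is correct and follows essentially the same direct-method argument as the paper: expand $D^s_{N_K}(x)$ via~(\ref{eq - LTP Expansion}), use the MCAR equality~(\ref{eq - Appendix MCAR Example}) to collapse the $\{Z^A=Z^B=0\}$ term into the $\{Z^A=Z^B=1\}$ term for both $K=A,B$, and then apply the worst-case bounds $[0,(x-\underline{Y}^B)^{s-1}/(s-1)!]$ to the remaining unidentified wave-nonresponse piece for $K=B$. Your explicit sharpness argument (attaining the bounds via point masses at $\underline{Y}^B$ and $\overline{Y}^B$) goes slightly beyond what the paper's proof spells out, but is consistent with the paper's framing that these are sharp bounds.
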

\begin{proof}
The proof proceeds by the direct method. Let $K=A$, and fix $s\in\mathbb{N}$ and $x\in[\underline{t},\overline{t}]$. The law of total probability expansion of $D_{N_A}^s (x)$, given by~(\ref{eq - LTP Expansion}), is
\begin{align*}
\mathbbm{E}_{F_A\left(\cdot| Z^A=Z^B=1\right)}& \left[\frac{(x-Y^A)^{s-1}}{(s-1)!}\,\mathbbm{1}\left[Y^A\leq x\right]\right]\,\delta_{11}+\mathbbm{E}_{F_A\left(\cdot| Z^A=1,Z^B=0\right)}\left[\frac{(x-Y^A)^{s-1}}{(s-1)!}\,\mathbbm{1}\left[Y^A\leq x\right]\right]\,\delta_{10}\\
 & \qquad+\mathbbm{E}_{F_A\left(\cdot| Z^A=Z^B=0\right)}\left[\frac{(x-Y^A)^{s-1}}{(s-1)!}\,\mathbbm{1}\left[Y^A\leq x\right]\right]\,\delta_{00}.
\end{align*}
Condition~(\ref{eq - Appendix MCAR Example}) implies $$\mathbbm{E}_{F_A\left(\cdot| Z^A=Z^B=1\right)}\left[\frac{(x-Y^A)^{s-1}}{(s-1)!}\,\mathbbm{1}\left[Y^A\leq x\right]\right]=\mathbbm{E}_{F_A\left(\cdot| Z^A=Z^B=0\right)}\left[\frac{(x-Y^A)^{s-1}}{(s-1)!}\,\mathbbm{1}\left[Y^A\leq x\right]\right],$$
holds, for each $x\in\mathcal{X}_A$. Hence, $D_{N_A}^{s}$ is point-identified and given by
\begin{align*}
D_{N_A}^{s}(x) & = \mathbbm{E}_{F_A\left(x| Z^A=Z^B=1\right)}\left[\frac{(x-Y^A)^{s-1}}{(s-1)!}\,\mathbbm{1}\left[Y^A\leq x\right]\right]\,(\delta_{11}+\delta_{00})\\
& \qquad +\mathbbm{E}_{F_A\left(x| Z^A=1,Z^B=0\right)}\left[\frac{(x-Y^A)^{s-1}}{(s-1)!}\,\mathbbm{1}\left[Y^A\leq x\right]\right]\,\delta_{10}
\end{align*}
for each $x\in\mathcal{X}_A$, implying the desired result.

\par Let $K=B$, and fix $s\in\mathbb{N}$ and $x\in[\underline{t},\overline{t}]$. The law of total probability expansion of $D_{N_B}^s (x)$, given by~(\ref{eq - LTP Expansion}), is
\begin{align*}
\mathbbm{E}_{F_B\left(\cdot| Z^A=Z^B=1\right)}& \left[\frac{(x-Y^B)^{s-1}}{(s-1)!}\,\mathbbm{1}\left[Y^B\leq x\right]\right]\,\delta_{11}+\mathbbm{E}_{F_B\left(\cdot| Z^A=1,Z^B=0\right)}\left[\frac{(x-Y^B)^{s-1}}{(s-1)!}\,\mathbbm{1}\left[Y^B\leq x\right]\right]\,\delta_{10}\\
 & \qquad+\mathbbm{E}_{F_B\left(\cdot| Z^A=Z^B=0\right)}\left[\frac{(x-Y^B)^{s-1}}{(s-1)!}\,\mathbbm{1}\left[Y^B\leq x\right]\right]\,\delta_{00}.
\end{align*}
Condition~(\ref{eq - Appendix MCAR Example}) implies $$\mathbbm{E}_{F_B\left(\cdot| Z^A=Z^B=1\right)}\left[\frac{(x-Y^B)^{s-1}}{(s-1)!}\,\mathbbm{1}\left[Y^B\leq x\right]\right]=\mathbbm{E}_{F_B\left(\cdot| Z^A=Z^B=0\right)}\left[\frac{(x-Y^B)^{s-1}}{(s-1)!}\,\mathbbm{1}\left[Y^A\leq x\right]\right],$$
holds, for each $x\in\mathcal{X}_B$. Consequently, only $\mathbbm{E}_{F_B\left(\cdot| Z^A=1,Z^B=0\right)}\left[\frac{(x-Y^B)^{s-1}}{(s-1)!}\,\mathbbm{1}\left[Y^B\leq x\right]\right]$ in this expansion is not identified, but it is bounded from below by 0 and from above by $\frac{(x-\underline{Y}^B)^{s-1}}{(s-1)!}$. These bounds on that conditional expectation imply the desired bounds on $D_{N_B}^s (x)$ by direct substitution. Now since $s\in\mathbb{N}$ and $x\in[\underline{t},\overline{t}]$ were arbitrary, the above bounds hold for each $s\in\mathbb{N}$ and $x\in[\underline{t},\overline{t}]$.
\end{proof}

\subsection{Example~\ref{Example - NBD of MCAR}: Neighborhood of MCAR}
\par Recall that the neighborhood condition in Example~\ref{Example - NBD of MCAR} specifies the following system of inequalities for the conditional probabilities $\text{Prob}\left(Z^A=Z^B=0| Y^A\leq x\right)$, $\text{Prob}\left(Z^A=Z^B=0| Y^B\leq x\right)$, and \\ $\text{Prob}\left(Z^A=1,Z^B=0| Y^B\leq x\right)$:
\begin{align}
 \delta_{00}\,L^{K}_{00}(x) & \leq \text{Prob}\left(Z^A=Z^B=0| Y^K\leq x\right)\leq U^{K}_{00}(x) \,\delta_{00},\quad K=A,B,\quad\text{and}\label{eq - Appendix NBD MCAR Example 0}\\
 \delta_{10}\,L_{10}(x) & \leq \text{Prob}\left(Z^A=1,Z^B=0| Y^B\leq x\right)\leq U_{10}(x)\,\delta_{10},\label{eq - Appendix NBD MCAR Example 1}
\end{align}
where $L^{A}_{00}$, $U^{A}_{00}$, $L^{B}_{00}$, $U^{B}_{00}$, $L_{10}$, and $U_{10}$ are CDFs that are predesignated by the practitioner which satisfy $L^{K}_{00}(\cdot)\leq U^{K}_{00}(\cdot)$, for $K=A,B$, and $L_{10}\leq U_{10}$.

\par Let
\begin{align*}
\overline{G}_A(x) & =\frac{1-\delta_{00}}{1-U^{A}_{00}(x)\delta_{00}}\quad\text{and}\quad\underline{G}_A(x)=\frac{1-\delta_{00}}{1-L^{A}_{00}(x)\delta_{00}}\quad\forall x\in\mathcal{X}_A,\\
\overline{G}_B(x) & =\frac{\delta_{11}}{1-U^{B}_{00}(x)\delta_{00}-U^{B}_{10}(x)\delta_{10}}\quad\text{and}\quad\underline{G}_B(x)=\frac{\delta_{11}}{1-L^{B}_{00}(x)\delta_{00}-L_{10}(x)\delta_{10}}\quad\forall x\in\mathcal{X}_B,
\end{align*}
which are CDFs. Hence, we can define the dominance functions recursively: for $G\in\{\overline{G}_A,\underline{G}_A,\overline{G}_B,\underline{G}_B\}$, these functions are $D_{G}^1 (x)= G(x)$ and  $D^{s}_{G}(x) =\int_{-\infty}^{x}D^{s-1}_{G}(u)\, \text{d}u$ for $s=2,3,4,\ldots$. Finally, for we also introduce the following recursively defined functions defined on $\mathbb{R}^2$: $R_0(y,x)=\mathbbm{1}[y\leq x]$, $R_j(y,x)=\int_{-\infty}^x R_{j-1}(y,u)\,du$ for $j=1,2,\ldots$.

\begin{proposition}\label{prop - NBD of MCAR ID Set}
Suppose that $\text{Prob}\left(Z^A=Z^B=0| Y^A\leq x\right)$, $\text{Prob}\left(Z^A=Z^B=0| Y^B\leq x\right)$, and \\$\text{Prob}\left(Z^A=1,Z^B=0| Y^B\leq x\right)$ satisfy the inequalities~(\ref{eq - Appendix NBD MCAR Example 0}) and~(\ref{eq - Appendix NBD MCAR Example 1}). For each $x\in[\underline{t},\overline{t}]$ and $s\in\mathbb{N}$, the corresponding bounds on the dominance functions are
\begin{align*}
\underline{D}_{N_A}^{s}(x) & =\left(\frac{\delta_{11}}{1-\delta_{00}}\,F_A\left(x| Z^A=Z^B=1\right)+\frac{\delta_{10}}{1-\delta_{00}}\,F_A\left(x| Z^A=1,Z^B=0\right)\right)D^{s}_{\underline{G}_A}(x)\\
& \qquad-\mathbbm{1}[s>1]\frac{\delta_{11}}{1-\delta_{00}}\sum_{j=0}^{s-2}\mathbbm{E}_{F_A\left(\cdot| Z^A=Z^B=1\right)}\left[D^{s}_{\underline{G}_A}(Y^A)\,R_j(Y^A,x)\right]\\
& \qquad-\mathbbm{1}[s>1]\frac{\delta_{10}}{1-\delta_{00}}\sum_{j=0}^{s-2}\mathbbm{E}_{F_A\left(\cdot| Z^A=1,Z^B=0\right)}\left[D^{s}_{\underline{G}_A}(Y^A)\,R_j(Y^A,x)\right]\\
\overline{D}_{N_A}^{s}(x) & =\left(\frac{\delta_{11}}{1-\delta_{00}}\,F_A\left(x| Z^A=Z^B=1\right)+\frac{\delta_{10}}{1-\delta_{00}}\,F_A\left(x| Z^A=1,Z^B=0\right)\right)D^{s}_{\overline{G}_A}(x)\\
& \qquad-\mathbbm{1}[s>1]\frac{\delta_{11}}{1-\delta_{00}}\sum_{j=0}^{s-2}\mathbbm{E}_{F_A\left(\cdot| Z^A=Z^B=1\right)}\left[D^{s}_{\overline{G}_A}(Y^A)\,R_j(Y^A,x)\right]\\
& \qquad-\mathbbm{1}[s>1]\frac{\delta_{10}}{1-\delta_{00}}\sum_{j=0}^{s-2}\mathbbm{E}_{F_A\left(\cdot| Z^A=1,Z^B=0\right)}\left[D^{s}_{\overline{G}_A}(Y^A)\,R_j(Y^A,x)\right]\\
\underline{D}_{N_B}^{s}(x) & =F_B\left(x| Z^A=Z^B=1\right)\,D^{s}_{\underline{G}_B}(x)-\mathbbm{1}[s>1]\sum_{j=0}^{s-2}\mathbbm{E}_{F_B\left(\cdot| Z^A=Z^B=1\right)}\left[D^{s}_{\underline{G}_B}(Y^B)\,R_j(Y^B,x)\right]\\
\overline{D}_{N_B}^{s}(x) & = F_B\left(x| Z^A=Z^B=1\right)\,D^{s}_{\overline{G}_B}(x)-\mathbbm{1}[s>1]\sum_{j=0}^{s-2}\mathbbm{E}_{F_B\left(\cdot| Z^A=Z^B=1\right)}\left[D^{s}_{\overline{G}_B}(Y^B)\,R_j(Y^B,x)\right].
\end{align*}
\end{proposition}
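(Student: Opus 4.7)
The plan unfolds in three stages: first, derive sharp bounds on $F_{N_K}$ from the informational restrictions via a Bayes'-rule identity; second, lift those bounds to $D^s_{N_K}$ for $s\geq 2$ via the kernel representation $D^s_F(x)=\int_{-\infty}^x F(y)\frac{(x-y)^{s-2}}{(s-2)!}\,dy$; and third, reduce the resulting integrals to the displayed closed form by integration by parts and a binomial expansion of the kernel.

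\emph{Stage 1 (CDF bounds).} Conditioning on response patterns and using $\delta_{01}=0$ yields $F_{N_A}(x)=(1-\delta_{00})F_A(x|Z^A=1)+\delta_{00}F_A(x|Z^A=Z^B=0)$. Since $\delta_{00}F_A(x|Z^A=Z^B=0)=\text{Prob}(Z^A=Z^B=0|Y^A\leq x)\,F_{N_A}(x)$, rearranging gives $F_{N_A}(x)=(1-\delta_{00})F_A(x|Z^A=1)/\{1-\text{Prob}(Z^A=Z^B=0|Y^A\leq x)\}$, which is monotone decreasing in the missing-data probability. The envelope~(\ref{eq - Appendix NBD MCAR Example 0}) therefore translates directly into the sharp bounds $F_A(x|Z^A=1)\underline{G}_A(x)\leq F_{N_A}(x)\leq F_A(x|Z^A=1)\overline{G}_A(x)$. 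A parallel computation for population $B$, combining (\ref{eq - Appendix NBD MCAR Example 0}) and~(\ref{eq - Appendix NBD MCAR Example 1}) to bound both missing-outcome channels, produces $F_B(x|Z^A=Z^B=1)\underline{G}_B(x)\leq F_{N_B}(x)\leq F_B(x|Z^A=Z^B=1)\overline{G}_B(x)$. Sharpness follows because any value of the missing-data probability within the specified envelopes is attained by some admissible joint distribution over the non-identified cells.

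\emph{Stages 2 and 3 (higher orders and closed form).} For $s=1$, Stage 1 already delivers the displayed formulas since $D^1_G=G$. For $s\geq 2$, integration by parts under Assumption~\ref{Assumption - Compactness of X_K} gives $D^s_F(x)=\int_{-\infty}^x F(y)\frac{(x-y)^{s-2}}{(s-2)!}\,dy$ for any compactly supported CDF $F$, and the kernel is nonnegative on $(-\infty,x]$, so Stage 1 immediately yields $\overline{D}^s_{N_A}(x)=\int_{-\infty}^x F_A(y|Z^A=1)\overline{G}_A(y)\frac{(x-y)^{s-2}}{(s-2)!}\,dy$ together with three analogous expressions. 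To extract the closed form, write $F_A(y|Z^A=1)=F_A(x|Z^A=1)-\int_{(y,x]}dF_A(\cdot|Z^A=1)$ and exchange the order of integration: the leading piece contributes $F_A(x|Z^A=1)\,D^s_{\overline{G}_A}(x)$, while the remainder becomes an expectation under $F_A(\cdot|Z^A=1)$ of an inner integral on $(-\infty,Y^A]$. Binomially expanding $(x-y)^{s-2}$ in powers of $(x-Y^A)$ and $(Y^A-y)$ turns this inner integral into a sum over $j=0,\dots,s-2$ whose factors pair $R_j(Y^A,x)$ with the appropriate iterates of $D_{\overline{G}_A}$, and the final decomposition $F_A(\cdot|Z^A=1)=\frac{\delta_{11}}{1-\delta_{00}}F_A(\cdot|Z^A=Z^B=1)+\frac{\delta_{10}}{1-\delta_{00}}F_A(\cdot|Z^A=1,Z^B=0)$ produces the two expectation sums displayed in the proposition. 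The lower bound for population $A$ follows by replacing $\overline{G}_A$ with $\underline{G}_A$ throughout, and the population-$B$ bounds are direct analogues, with no final decomposition needed there since only the identified event $\{Z^A=Z^B=1\}$ carries the bounding content.

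The principal obstacle is the combinatorial bookkeeping in Stage 3: systematically pairing each factor $R_j(Y^A,x)$ with the correct iterate of $D_{\overline{G}_A}$ for every $j\in\{0,\dots,s-2\}$ and every $s$. The cleanest execution I see is induction on $s$ using the recursion $\overline{D}^{s+1}_{N_A}(x)=\int_{-\infty}^x\overline{D}^s_{N_A}(u)\,du$, together with a single integration by parts at each induction step that migrates a differential from $F_A(\cdot|Z^A=1)$ onto the iterated kernel; the same inductive scheme carries through mutatis mutandis for $\underline{D}^s_{N_A}$, $\underline{D}^s_{N_B}$, and $\overline{D}^s_{N_B}$.
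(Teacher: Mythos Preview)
Your proposal is correct and follows essentially the same route as the paper's proof: the Stage-1 Bayes-rule identity is exactly how the paper obtains the CDF bounds, and the induction on $s$ you recommend at the end (integrate the order-$(s{-}1)$ bounds, integrate by parts the product-of-CDFs leading term, then interchange expectation and integration in the remainder) is precisely the paper's argument, carried out explicitly for $s=2,3$ and then left to induction. Your direct kernel-plus-binomial shortcut is a legitimate alternative that collapses the induction into one step; note that either route actually produces $D^{\,s-j}_{G}$ paired with $R_j$ (the paper's own $s=3$ computation confirms this), so the uniform $D^{\,s}_{G}$ in the displayed sums is a typo in the statement, not a defect in your method.
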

\begin{proof}
The proof proceeds by the direct method. We start with population $A$. For $s=1$ and $x\in\mathcal{X}_A$, the Law of Total Probability applied to $F_A(x)$ yields
\begin{align*}
F_A(x)=F_A\left(x|Z^A=Z^B=1\right)\,\delta_{11}+ F_A\left(x|Z^A=1,Z^B=0\right)\,\delta_{10}+F_A\left(x|Z^A=Z^B=0\right)\,\delta_{00}.
\end{align*}
Next, apply Bayes' Theorem to $F_A\left(x|Z^A=Z^B=0\right)$ to express it in terms of $\text{Prob}\left(Z^A=Z^B=0| Y^A\leq x\right)$, substitute this resulting expression into the
above representation of $F_A(x)$, and solving for $F_A(x)$, yields
\begin{align*}
F_A(x) & =\frac{F_A\left(x|Z^A=Z^B=1\right)}{1-\text{Prob}\left(Z^A=Z^B=0| Y^A\leq x\right)}\,\delta_{11}+ \frac{F_A\left(x|Z^A=1,Z^B=0\right)}{1-\text{Prob}\left(Z^A=Z^B=0| Y^A\leq x\right)}\,\delta_{10}.
\end{align*}
As $D_{N_A}^{1}(x)=F_A(x)$, we substitute out $\text{Prob}\left(Z^A=Z^B=0| Y^A\leq x\right)$ in the above expression using the inequalities~(\ref{eq - Appendix NBD MCAR Example 0}) to obtain
\begin{align*}
\underline{D}_{N_A}^{1}(x) & =\left(\frac{\delta_{11}}{1-\delta_{00}}\,F_A\left(x| Z^A=Z^B=1\right)+\frac{\delta_{10}}{1-\delta_{00}}\,F_A\left(x| Z^A=1,Z^B=0\right)\right)D^{1}_{\underline{G}_A}(x)\quad \text{and}\\
\overline{D}_{N_A}^{1}(x) & =\left(\frac{\delta_{11}}{1-\delta_{00}}\,F_A\left(x| Z^A=Z^B=1\right)+\frac{\delta_{10}}{1-\delta_{00}}\,F_A\left(x| Z^A=1,Z^B=0\right)\right)D^{1}_{\overline{G}_A}(x).
\end{align*}
As $x\in\mathcal{X}_A$ was arbitrary, the above derivation holds for each $x\in\mathcal{X}_A$.

\par Now we focus on $s=2$. Fix $x\in\mathcal{X}_A$. Since $\underline{D}_{N_A}^{1}(x)\leq D_{N_A}^{1}(x)\leq \overline{D}_{N_A}^{1}(x)$ and $D_{N_A}^{2}(x)=\int_{-\infty}^{x}D_{N_A}^{1}(u)\,du$, we can integrate all sides to obtain
\begin{align*}
\underline{D}_{N_A}^{2}(x)=\int_{-\infty}^{x}\underline{D}_{N_A}^{1}(u)\,du\leq D_{N_A}^{2}(x)\leq\int_{-\infty}^{x} \overline{D}_{N_A}^{1}(u)\,du=\overline{D}_{N_A}^{2}(x).
\end{align*}
Since $\underline{D}_{N_A}^{1}$ and $\overline{D}_{N_A}^{1}$ are mixtures of products of CDFs, we can use integration by parts to obtain expressions for $\underline{D}_{N_A}^{2}(x)$ and $\overline{D}_{N_A}^{2}(x)$, given by
 \begin{align*}
 \left\{F_A\left(x|Z^A=Z^B=1\right)\,\int_{-\infty}^x D^{1}_{\underline{G}_A}(u)\,du-\int_{-\infty}^x\int_{-\infty}^{u} \underline{G}_A(u^\prime)\,du^\prime dF_A\left(u|Z^A=Z^B=1\right)\right\}\frac{\delta_{11}}{1-\delta_{00}}\\
 +\left\{F_A\left(x|Z^A=1,Z^B=0\right)\,\int_{-\infty}^x D^{1}_{\underline{G}_A}(u)\,du-\int_{-\infty}^x\int_{-\infty}^{u} \underline{G}_A(u^\prime)\,du^\prime dF_A\left(u|Z^A=0,Z^B=1\right)\right\}\frac{\delta_{10}}{1-\delta_{00}},
 \end{align*}
 and
 \begin{align*}
 \left\{F_A\left(x|Z^A=Z^B=1\right)\,\int_{-\infty}^x D^{1}_{\overline{G}_A}(u)\,du-\int_{-\infty}^x\int_{-\infty}^{u} \overline{G}_A(u^\prime)\,du^\prime dF_A\left(u|Z^A=Z^B=1\right)\right\}\frac{\delta_{11}}{1-\delta_{00}}\\
 +\left\{F_A\left(x|Z^A=1,Z^B=0\right)\,\int_{-\infty}^x D^{1}_{\overline{G}_A}(u)\,du-\int_{-\infty}^x\int_{-\infty}^{u} \overline{G}_A(u^\prime)\,du^\prime dF_A\left(u|Z^A=0,Z^B=1\right)\right\}\frac{\delta_{10}}{1-\delta_{00}},
 \end{align*}
 respectively. Now recognizing that $D^{2}_{\underline{G}_A}(x)=\int_{-\infty}^x D^{1}_{\underline{G}_A}(u)\,du$, $D^{2}_{\overline{G}_A}(x)=\int_{-\infty}^x D^{1}_{\overline{G}_A}(u)\,du$,
 and
 \begin{align*}
 \int_{-\infty}^x\int_{-\infty}^{u} \underline{G}_A(u^\prime)\,du^\prime dF_A\left(u| Z^A=z_A,Z^B=z_B\right) & =\int_{-\infty}^x D^{1}_{\underline{G}_A}(u)dF_A\left(u| Z^A=z_A,Z^B=z_B\right)\\
 & = \mathbbm{E}_{F_A\left(\cdot| Z^A=z_A,Z^B=z_B\right)}\left[D^{2}_{\underline{G}_A}(Y^A)\,R_0(Y^A,x)\right],
 \end{align*}
 we are done with $s=2$, as $x\in\mathcal{X}_A$ was arbitrary.

\par Now we focus on $s=3$. Let $x\in\mathcal{X}_A$. Since $\underline{D}_{N_A}^{2}(x)\leq D_{N_A}^{2}(x)\leq \overline{D}_{N_A}^{2}(x)$ and $D_{N_A}^{3}(x)=\int_{-\infty}^{x}D_{N_A}^{2}(u)\,du$, we can integrate all sides of these inequalities to obtain
\begin{align*}
\underline{D}_{N_A}^{3}(x)=\int_{-\infty}^{x}\underline{D}_{N_A}^{2}(u)\,du\leq D_{N_A}^{3}(x)\leq\int_{-\infty}^{x} \overline{D}_{N_A}^{2}(u)\,du=\overline{D}_{N_A}^{3}(x).
\end{align*}
 Noting that
 \begin{align*}
\underline{D}_{N_A}^{3}(x) & = \int_{-\infty}^{x}\left(\frac{\delta_{11}}{1-\delta_{00}}\,F_A\left(u| Z^A=Z^B=1\right)+\frac{\delta_{10}}{1-\delta_{00}}\,F_A\left(u| Z^A=1,Z^B=0\right)\right)D^{2}_{\underline{G}_A}(u)\,du\\
 & \quad-\frac{\delta_{11}}{1-\delta_{00}}\, \int_{-\infty}^{x}\mathbbm{E}_{F_A\left(\cdot| Z^A=Z^B=1\right)}\left[D^{2}_{\underline{G}_A}(Y^A)\,R_0(Y^A,u)\right]\,du\\
 & \quad -\frac{\delta_{10}}{1-\delta_{00}}\, \int_{-\infty}^{x}\mathbbm{E}_{F_A\left(\cdot| Z^A=1,Z^B=0\right)}\left[D^{2}_{\underline{G}_A}(Y^A)\,R_0(Y^A,u)\right]\,du,\quad\text{and}\\
 \overline{D}_{N_A}^{3}(x) & =  \int_{-\infty}^{x}\left(\frac{\delta_{11}}{1-\delta_{00}}\,F_A\left(u| Z^A=Z^B=1\right)+\frac{\delta_{10}}{1-\delta_{00}}\,F_A\left(u| Z^A=1,Z^B=0\right)\right)D^{2}_{\overline{G}_A}(u)\,du\\
 & \quad-\frac{\delta_{11}}{1-\delta_{00}}\, \int_{-\infty}^{x}\mathbbm{E}_{F_A\left(\cdot| Z^A=Z^B=1\right)}\left[D^{2}_{\overline{G}_A}(Y^A)\,R_0(Y^A,u)\right]\,du\\
 & \quad -\frac{\delta_{10}}{1-\delta_{00}}\, \int_{-\infty}^{x}\mathbbm{E}_{F_A\left(\cdot| Z^A=1,Z^B=0\right)}\left[D^{2}_{\overline{G}_A}(Y^A)\,R_0(Y^A,u)\right]\,du,
 \end{align*}
 we can apply the same integration by parts technique, as in the case $s=2$, to the first terms of these expressions to simplify them, and for the remaining terms we interchange integration and expectation to simply them. Implementing these operations results in
 \begin{align*}
\underline{D}_{N_A}^{3}(x) & = \left(\frac{\delta_{11}}{1-\delta_{00}}\,F_A\left(x| Z^A=Z^B=1\right)+\frac{\delta_{10}}{1-\delta_{00}}\,F_A\left(x| Z^A=1,Z^B=0\right)\right)D^{3}_{\underline{G}_A}(x)\\
& -\frac{\delta_{11}}{1-\delta_{00}}\left(\mathbbm{E}_{F_A\left(\cdot| Z^A=Z^B=1\right)}\left[D^{3}_{\underline{G}_A}(Y^A)\,R_0(Y^A,x)\right]+\mathbbm{E}_{F_A\left(\cdot| Z^A=Z^B =1\right)}\left[D^{2}_{\underline{G}_A}(Y^A)\,R_1(Y^A,x)\right]\right)\\
 & -\frac{\delta_{10}}{1-\delta_{00}}\left(\mathbbm{E}_{F_A\left(\cdot| Z^A=1,Z^B=0\right)}\left[D^{3}_{\underline{G}_A}(Y^A)\,R_0(Y^A,x)\right]+\mathbbm{E}_{F_A\left(\cdot| Z^A=1,Z^B =0\right)}\left[D^{2}_{\underline{G}_A}(Y^A)\,R_1(Y^A,x)\right]\right),\quad\text{and}\\
 \overline{D}_{N_A}^{3}(x) & = \left(\frac{\delta_{11}}{1-\delta_{00}}\,F_A\left(u| Z^A=Z^B=1\right)+\frac{\delta_{10}}{1-\delta_{00}}\,F_A\left(u| Z^A=1,Z^B=0\right)\right)D^{3}_{\overline{G}_A}(x)\\
 &-\frac{\delta_{11}}{1-\delta_{00}}\left(\mathbbm{E}_{F_A\left(\cdot| Z^A=Z^B=1\right)}\left[D^{3}_{\overline{G}_A}(Y^A)\,R_0(Y^A,x)\right]\mathbbm{E}_{F_A\left(\cdot| Z^A=Z^B =1\right)}\left[D^{2}_{\overline{G}_A}(Y^A)\,R_1(Y^A,x)\right]\right)\\
 & -\frac{\delta_{10}}{1-\delta_{00}}\left(\mathbbm{E}_{F_A\left(\cdot| Z^A=1,Z^B=0\right)}\left[D^{2}_{\overline{G}_A}(Y^A)\,R_0(Y^A,x)\right]+\mathbbm{E}_{F_A\left(\cdot| Z^A=1,Z^B =0\right)}\left[D^{2}_{\overline{G}_A}(Y^A)\,R_1(Y^A,x)\right]\right)
 \end{align*}
 where we have used $D^{3}_{\underline{G}_A}(x)=\int_{-\infty}^x D^{2}_{\underline{G}_A}(u)\,du$, $D^{3}_{\overline{G}_A}(x)=\int_{-\infty}^x D^{2}_{\overline{G}_A}(u)\,du$,
 \begin{align*}
 \int_{-\infty}^{x}\mathbbm{E}_{F_A\left(\cdot| Z^A=z_A,Z^B=z_B\right)}\left[D^{2}_{\underline{G}_A}(Y^A)\,R_0(Y^A,u)\right]\,du & =\int_{-\infty}^x\int_{-\infty}^{u} D^{2}_{\underline{G}_A}(u^\prime)\,dF_A\left(u^\prime| Z^A=z_A,Z^B=z_B\right) du\\
 & = \mathbbm{E}_{F_A\left(\cdot| Z^A=z_A,Z^B =z_B\right)}\left[D^{2}_{\underline{G}_A}(Y^A)\,R_1(Y^A,x)\right],
 \end{align*}
and an identical expression as the preceding one but with  $\underline{G}_A$ replaced with $\overline{G}_A$.

\par Proceeding by induction on $s\in\mathbb{Z}_+$, once we have expressions for $\underline{D}_{N_A}^{s-1}$ and $\overline{D}_{N_A}^{s-1}$, we can repeat similar steps to those for the case of $s=3$, but with appropriate adjustments, to obtain the expressions for $\underline{D}_{N_A}^{s}(x)$ and $\overline{D}_{N_A}^{s}(x)$ for each $x\in\mathcal{X}_A$. We omit the details for brevity.

\par Turning to population $B$, we only provide a brief description for deriving $\underline{D}_{N_B}^{s}$ and $\overline{D}_{N_B}^{s}$, as it follows steps identical to those for population $A$, but with a minor adjustment. The adjustment is that unlike the derivations for population $A$, we now must account for the two set of inequalities~(\ref{eq - Appendix NBD MCAR Example 0}) and~(\ref{eq - Appendix NBD MCAR Example 1}), representing the restrictions on the parameter space, which now include restrictions on wave nonresponse. For $s=1$ and $x\in\mathcal{X}_B$, we apply the Law of Total Probability to $F_{B}(x)$ and use Bayes' Theorem to obtain
\begin{align*}
F_{B}(x)=\frac{F_B\left(x|Z^A=Z^B=1\right)}{1-\text{Prob}\left(Z^A=Z^B=0| Y^B\leq x\right)-\text{Prob}\left(Z^A=1,Z^B=0| Y^B\leq x\right)}\,\delta_{11}.
\end{align*}
Now we can use the inequalities~(\ref{eq - Appendix NBD MCAR Example 0}) and~(\ref{eq - Appendix NBD MCAR Example 1}) for $K=B$ to substitute out the propensities and derive the bounds $\underline{D}_{N_B}^{1}(x)$ and
$\overline{D}_{N_B}^{1}(x)$, so that $\underline{D}_{N_B}^{1}(x)\leq D_{N_B}^{1}(x)\leq \overline{D}_{N_B}^{1}(x)$. Since $x\in\mathcal{X}_B$ was arbitrary, these inequalities hold for each $x\in\mathcal{X}_B$. Now we can integrate all sides of this inequality, to obtain the bounds $\underline{D}_{N_B}^{2}(x)$ and $\overline{D}_{N_B}^{2}(x)$ for each $x\in\mathcal{X}_B$, which like with population $A$, the derivations invovle the use of Integration by Parts and interchanging expectation and integration. Proceeding by induction on $s\in\mathbb{Z}_+$, once we have expressions for $\underline{D}_{N_B}^{s-1}$ and $\overline{D}_{N_B}^{s-1}$, we can repeat similar derivations, but with appropriate adjustments, to obtain the expressions for $\underline{D}_{N_B}^{s}(x)$ and $\overline{D}_{N_B}^{s}(x)$ for each $x\in\mathcal{X}_A$. We omit the details for brevity.
\end{proof}

\subsection{Example~\ref{Example - Kline and Santos}: KS Neighborhood of MCAR}
\par Recall that Example~\ref{Example - Kline and Santos} is based on~\cite{klinesantos2013}, who put forward a construction using the maximal Kolmogorov-Smirnov distance between the distributions of missing and observed outcomes. Their approach builds a neighborhood for the CDFs $$F_A\left(\cdot| Z^A=Z^B=0\right), F_B\left(\cdot| Z^A=Z^B=0\right),\quad\text{and}\quad F_B\left(\cdot| Z^A=1,Z^B=0\right),$$ according to the maximal Kolmogorov-Smirnov distance using the conditions in~(\ref{eq - Example MCAR-KS}). Let $\gamma_A,\gamma^{00}_B,\gamma^{10}_B\in[0,1]$ be such that
\begin{align}
\sup_{x\in\mathcal{X}_A} & \left|F_A\left(x| Z^A=Z^B=0\right)-F_A\left(x| Z^A=Z^B=1\right)\right|\leq \gamma_A,\label{eq - Appendix KS Example 0}\\
\sup_{x\in\mathcal{X}_B} & \left|F_B\left(x| Z^A=Z^B=0\right)-F_B\left(x| Z^A=Z^B=1\right)\right|\leq \gamma^{00}_B,\quad\text{and}\label{eq - Appendix KS Example 1}\\
\sup_{x\in\mathcal{X}_B} & \left|F_B\left(x| Z^A=1,Z^B=0\right)-F_B\left(x| Z^A=1,Z^B=1\right)\right|\leq \gamma^{10}_B,\label{eq - Appendix KS Example 2}
\end{align}
which are predesignated by the practitioner, where MCAR means
\begin{equation}
\label{eq - Example MCAR-KS}
\begin{aligned}
F_K\left(x| Z^A=Z^B=1\right) & =F_K\left(x| Z^A=Z^B=0\right)\quad\forall x\in\mathcal{X}_K,K=A,B,\quad\text{and}\\
F_B\left(x| Z^A=Z^B=1\right) & =F_B\left(x| Z^A=1,Z^B=0\right)\quad\forall x\in\mathcal{X}_B.
\end{aligned}
\end{equation}

\begin{proposition}\label{prop - Kline Santos ID Set}
Let $\gamma_A,\gamma^{00}_B,\gamma^{10}_B\in[0,1]$, and suppose that~(\ref{eq - Appendix KS Example 0}) - (\ref{eq - Appendix KS Example 2}) hold. For each $x\in[\underline{t},\overline{t}]$ and $s\in\mathbb{N}$, the corresponding bounds on the dominance functions are
\begin{align*}
\underline{D}_{N_A}^{s}(x) & = \mathbbm{E}_{F_A\left(x| Z^A=Z^B=1\right)}\left[\frac{(x-Y^A)^{s-1}}{(s-1)!}\,\mathbbm{1}\left[Y^A\leq x\right]\right]\,(\delta_{11}+(1-\gamma_A)\delta_{00}), \\
& \qquad +\mathbbm{E}_{F_A\left(x| Z^A=1,Z^B=0\right)}\left[\frac{(x-Y^A)^{s-1}}{(s-1)!}\,\mathbbm{1}\left[Y^A\leq x\right]\right]\,\delta_{10} \\
\overline{D}_{N_A}^{s}(x) & = \underline{D}_{N_A}^{s}(x) +\frac{(x-\underline{Y}^A)^{s-1}}{(s-1)!}\,\gamma_A \delta_{00},\\
\underline{D}_{N_B}^{s}(x) & =  \mathbbm{E}_{F_B\left(x| Z^A=Z^B=1\right)}\left[\frac{(x-Y^B)^{s-1}}{(s-1)!}\,\mathbbm{1}\left[Y^B\leq x\right]\right]\,(\delta_{11}+(1-\gamma^{00}_B)\delta_{00}+(1-\gamma^{10}_B)\delta_{10}),\quad\text{and}\\
\overline{D}_{N_B}^{s}(x) & =  \underline{D}_{N_B}^{s}(x)+ \frac{(x-\underline{Y}^B)^{s-1}}{(s-1)!}\,(\gamma^{00}_B \delta_{00}+\gamma^{10}_B\delta_{10}),
\end{align*}
where $\underline{Y}^K=\inf\mathcal{X}_K$, for $K=A,B$.
\end{proposition}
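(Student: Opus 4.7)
The plan is to leverage the mixture characterization implicit in the Kolmogorov-Smirnov neighborhoods of (\ref{eq - Appendix KS Example 0})--(\ref{eq - Appendix KS Example 2}), and then reuse the Law-of-Total-Probability expansion (\ref{eq - LTP Expansion}) exactly as in the proof of Proposition~\ref{prop - WC ID Set}. By the reasoning of~\cite{klinesantos2013}, a CDF $F_0$ satisfies $\sup_x|F_0(x)-F_1(x)|\leq\gamma$ with sharpness for monotone-type functionals iff it can be written as the mixture $F_0=(1-\gamma)F_1+\gamma H$ for some CDF $H$ on the relevant support. Accordingly, I would first record the three mixture representations
\begin{align*}
F_A(\cdot|Z^A=Z^B=0) & = (1-\gamma_A)F_A(\cdot|Z^A=Z^B=1)+\gamma_A H_A,\\
F_B(\cdot|Z^A=Z^B=0) & = (1-\gamma^{00}_B)F_B(\cdot|Z^A=Z^B=1)+\gamma^{00}_B H^{00}_B,\\
F_B(\cdot|Z^A=1,Z^B=0) & = (1-\gamma^{10}_B)F_B(\cdot|Z^A=Z^B=1)+\gamma^{10}_B H^{10}_B,
\end{align*}
where $H_A$, $H^{00}_B$, $H^{10}_B$ range freely over the set of CDFs on $\mathcal{X}_A$, $\mathcal{X}_B$, $\mathcal{X}_B$ respectively.

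Next, fix $s\in\mathbb{N}$, $x\in[\underline{t},\overline{t}]$ and set $g_K(y,x)=\frac{(x-y)^{s-1}}{(s-1)!}\mathbbm{1}[y\leq x]$ for $K=A,B$. Substituting the first mixture into the expansion (\ref{eq - LTP Expansion}) for $D^{s}_{N_A}(x)$, the unidentified term $\mathbbm{E}_{F_A(\cdot|Z^A=Z^B=0)}[g_A(Y^A,x)]\,\delta_{00}$ splits into an identified component $(1-\gamma_A)\mathbbm{E}_{F_A(\cdot|Z^A=Z^B=1)}[g_A(Y^A,x)]\,\delta_{00}$ plus the free term $\gamma_A\,\delta_{00}\,\mathbbm{E}_{H_A}[g_A(Y^A,x)]$. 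Because $g_A(\cdot,x)$ is nonnegative on $\mathcal{X}_A$ and achieves its maximum $\frac{(x-\underline{Y}^A)^{s-1}}{(s-1)!}$ at $y=\underline{Y}^A$ (under Assumption~\ref{Assumption - Compactness of X_K}), the free term lies in $[0,\gamma_A\delta_{00}\frac{(x-\underline{Y}^A)^{s-1}}{(s-1)!}]$; the lower endpoint is attained by any $H_A$ supported in $(x,\overline{Y}^A]$ and the upper endpoint by the point mass $H_A=\mathbbm{1}[\underline{Y}^A\leq\cdot]$. Collecting terms produces $\underline{D}_{N_A}^{s}(x)$ and $\overline{D}_{N_A}^{s}(x)$ as stated.

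For population $B$ the same device applies, but I would substitute the two mixtures for $F_B(\cdot|Z^A=Z^B=0)$ and $F_B(\cdot|Z^A=1,Z^B=0)$ into (\ref{eq - LTP Expansion}) simultaneously, so that the identified part is $\mathbbm{E}_{F_B(\cdot|Z^A=Z^B=1)}[g_B(Y^B,x)](\delta_{11}+(1-\gamma^{00}_B)\delta_{00}+(1-\gamma^{10}_B)\delta_{10})$ and the free part equals $\gamma^{00}_B\delta_{00}\mathbbm{E}_{H^{00}_B}[g_B(Y^B,x)]+\gamma^{10}_B\delta_{10}\mathbbm{E}_{H^{10}_B}[g_B(Y^B,x)]$. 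Bounding each expectation separately over CDFs on $\mathcal{X}_B$ by the same attainability argument as in population $A$ yields the expressions $\underline{D}_{N_B}^{s}(x)$ and $\overline{D}_{N_B}^{s}(x)$ in the statement, with the upper bound $\frac{(x-\underline{Y}^B)^{s-1}}{(s-1)!}(\gamma^{00}_B\delta_{00}+\gamma^{10}_B\delta_{10})$ attained jointly by taking both $H^{00}_B$ and $H^{10}_B$ to be point masses at $\underline{Y}^B$.

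The only subtle step is the \emph{sharpness} claim: it must be checked that the extremal $H$'s constructed above correspond to genuine population distributions consistent with the observed quantities $\delta_{z^A z^B}$ and the identified CDFs. This amounts to showing that for any admissible choice of the $H$'s, one can construct a finite population in $\mathcal{M}_{N_A,N_B}$ whose conditional CDFs equal the posited mixtures and that reproduces $\delta_{11}$, $\delta_{10}$, $\delta_{00}$; this is routine since the mixtures are themselves CDFs and the $\delta$'s place no further restriction on the conditional distributions. Apart from this, the argument is a direct application of the LTP expansion, the mixture representation, and the elementary fact $0\leq g_K(y,x)\leq\frac{(x-\underline{Y}^K)^{s-1}}{(s-1)!}$ on $\mathcal{X}_K$.
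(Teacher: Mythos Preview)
Your proposal is correct and rests on the same two ingredients as the paper's proof: the mixture representation of the unidentified conditional CDFs (which the paper also invokes verbatim, writing $F_A(\cdot\mid Z^A=Z^B=0)=(1-\gamma_A)F_A(\cdot\mid Z^A=Z^B=1)+\gamma_A G_A$ etc.) and the Law-of-Total-Probability expansion~(\ref{eq - LTP Expansion}). Where you depart from the paper is in how you extract the bounds for general $s$. The paper first derives the $s=1$ CDF bounds $\underline{F}_K\leq F_K\leq\overline{F}_K$ by letting $G_K(x)$ range over $[0,1]$, then obtains the $s=2$ bounds by integrating $\int_{-\infty}^x\underline{F}_K(u)\,du$ and $\int_{-\infty}^x\overline{F}_K(u)\,du$ via integration by parts, and proceeds by induction on $s$. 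You instead bound the free expectation $\mathbbm{E}_{H}[g_K(Y^K,x)]$ directly for arbitrary $s$, using only that $g_K(\cdot,x)$ is nonnegative on $\mathcal{X}_K$ and maximized at $\underline{Y}^K$.

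Your route is shorter and avoids the recursive integration machinery entirely; it works because the same extremal CDFs (point masses at $\underline{Y}^K$ and at a point above $x$) simultaneously extremize $\mathbbm{E}_{H}[g_K(Y^K,x)]$ for every $s$, so nothing is lost by skipping the induction. The paper's approach, on the other hand, makes the connection to the $s=1$ identified set explicit and shows how the higher-order bounds inherit structure from the CDF bounds, which may be pedagogically useful elsewhere in the paper. One small remark: your attainability argument for the lower endpoint requires $x<\overline{Y}^K$ so that $(x,\overline{Y}^K]$ is nonempty; this is guaranteed by $[\underline{t},\overline{t}]\subset\mathrm{interior}(\mathcal{X}_A\cup\mathcal{X}_B)$ only when $x$ does not exceed $\overline{Y}^K$, but the paper's proof shares this implicit caveat.
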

\begin{proof}
The proof proceeds by the direct method. Throughout, we adopt the convention $0\times\pm\infty=0$. For $s=1$, the restrictions~(\ref{eq - Appendix KS Example 0}) - (\ref{eq - Appendix KS Example 2}) yields bounds via the decomposition having the form
\begin{align*}
F_A\left(x| Z^A=Z^B=0\right) & =(1-\gamma_A)F_A\left(x| Z^A=Z^B=1\right)+\gamma_A G_A(x)\\
F_B\left(x| Z^A=Z^B=0\right) & =(1-\gamma^{00}_B)F_B\left(x| Z^A=Z^B=1\right)+\gamma^{00}_B G^{00}_B(x) \\
F_B\left(x| Z^A=1,Z^B=0\right) & =(1-\gamma^{10}_B)F_B\left(x| Z^A=Z^B=1\right)+\gamma^{10}_B G^{10}_B(x),
\end{align*}
for each $x$, where $G_A, G^{00}_B,G^{10}_B$ are unknown CDFs. Focusing on population $A$, let $x\in[\underline{t},\overline{t}]$, the Law of Total Probability applied to $F_A(x)$ yields
\begin{align*}
F_A(x) & =F_A\left(x|Z^A=Z^B=1\right)\,\delta_{11}+ F_A\left(x|Z^A=1,Z^B=0\right)\,\delta_{10}+F_A\left(x|Z^A=Z^B=0\right)\,\delta_{00}\\
 & = F_A\left(x|Z^A=Z^B=1\right)\,\delta_{11}+ F_A\left(x|Z^A=1,Z^B=0\right)\,\delta_{10}\\
 & \qquad +\delta_{00}\,(1-\gamma_A)F_A\left(x| Z^A=Z^B=1\right)+\delta_{00}\,\gamma_A G_A(x)\\
 & = \left(\delta_{00}\,(1-\gamma_A)+\delta_{11}\right)F_A\left(x| Z^A=Z^B=1\right)+F_A\left(x|Z^A=1,Z^B=0\right)\,\delta_{10}+\delta_{00}\,\gamma_A G_A(x).
\end{align*}
Now since $G_A$ is a CDF, $G_A(x)\in[0,1]$, holds, implying the worst-case CDF bounds $\underline{F}_A(x)\leq F_A(x)\leq\overline{F}_A(x)$, where
\begin{align*}
\overline{F}_A(x) & =\begin{cases}
  \delta_{00}\,\gamma_A  &  x= \underline{Y}^A \\
 \left(\delta_{00}\,(1-\gamma_A)+\delta_{11}\right)F_A\left(x| Z^A=Z^B=1\right)+F_A\left(x|Z^A=1,Z^B=0\right)\,\delta_{10}+\delta_{00}\,\gamma_A & x> \underline{Y}^A\\
  0 & \text{otherwise}
\end{cases}  \\
\underline{F}_A(x) & =\begin{cases}
  \left(\delta_{00}\,(1-\gamma_A)+\delta_{11}\right)F_A\left(x| Z^A=Z^B=1\right)+F_A\left(x|Z^A=1,Z^B=0\right)\,\delta_{10}&  \underline{Y}^A\leq x<\overline{Y}^A \\
 1 &  x= \overline{Y}^A \\
 0 & \text{otherwise}.
\end{cases}
\end{align*}
These bounds are sharp. Here, we have $\overline{D}_{N_A}^{1}(x)=\overline{F}_A(x)$ and $\underline{D}_{N_A}^{1}(x)=\underline{F}_A(x)$. As $x\in\mathcal{X}_A$ was arbitrary, we done with the case $s=1$. For $s=2$, we integrate all sides of $\underline{D}_{N_A}^{1}(x)\leq F_A(x)\leq\overline{D}_{N_A}^{1}(x)$ as such
\begin{align*}
\int_{-\infty}^x\underline{D}_{N_A}^{1}(u)\,du\leq \int_{-\infty}^x F_A(u)\,du\leq\int_{-\infty}^x\overline{D}_{N_A}^{1}(u)\,du,
\end{align*}
and recognizing that $\underline{D}_{N_A}^{2}(x)=\int_{-\infty}^x\underline{D}_{N_A}^{1}(u)\,du$, $D_{N_A}^{2}(x)=\int_{-\infty}^x F_A(u)\,du$, and $\overline{D}_{N_A}^{2}(x)=\int_{-\infty}^x\overline{D}_{N_A}^{1}(u)\,du$, we must obtain the forms of $\int_{-\infty}^x\underline{D}_{N_A}^{1}(u)\,du$ and $\int_{-\infty}^x\overline{D}_{N_A}^{1}(u)\,du$  to complete the proof in this case. Towards that end, using integration by parts
\begin{align*}
\int_{-\infty}^x\underline{D}_{N_A}^{1}(u)\,du & = \underline{F}_A(x)x-(0\times-\infty)-\int_{-\infty}^x u\,d\underline{F}_A(u)\\
& = \underline{F}_A(x)x-\int_{-\infty}^x u\,d\underline{F}_A(u)\\
 & =\int_{\mathbb{R}} (x-u)\mathbbm{1}\left[u\leq x\right]\,d\underline{F}_A(u) \\
& =\left(\delta_{00}\,(1-\gamma_A)+\delta_{11}\right)\mathbbm{E}_{F_A\left(x| Z^A=Z^B=1\right)}\left[(x-Y^A)\,\mathbbm{1}\left[Y^A\leq x\right]\right]\\
& \quad +\delta_{10}\mathbbm{E}_{F_A\left(x| Z^A=1,Z^B=0\right)}\left[(x-Y^A)\,\mathbbm{1}\left[Y^A\leq x\right]\right],
\end{align*}
and
\begin{align*}
\int_{-\infty}^x\overline{D}_{N_A}^{1}(u)\,du & =  \overline{F}_A(x)x-(0\times-\infty)-\int_{-\infty}^x u\,d\overline{F}_A(u)\\
& =  \overline{F}_A(x)x-\int_{-\infty}^x u\,d\overline{F}_A(u)\\
& = \int_{\mathbb{R}} (x-u)\mathbbm{1}\left[u\leq x\right]\,d\overline{F}_A(u) \\
& =\left(\delta_{00}\,(1-\gamma_A)+\delta_{11}\right)\int_{-\infty}^x F_A\left(u| Z^A=Z^B=1\right)\,du\\
& =\left(\delta_{00}\,(1-\gamma_A)+\delta_{11}\right)\mathbbm{E}_{F_A\left(x| Z^A=Z^B=1\right)}\left[(x-Y^A)\,\mathbbm{1}\left[Y^A\leq x\right]\right]\\
& \quad +\delta_{10}\mathbbm{E}_{F_A\left(x| Z^A=1,Z^B=0\right)}\left[(x-Y^A)\,\mathbbm{1}\left[Y^A\leq x\right]\right]+(x-\underline{Y}^A)\gamma_A \delta_{00}
\end{align*}
Now for $s=3$, integrate all sides of $\underline{D}_{N_A}^{2}(x)\leq D_{N_A}^{2}(x)\leq\overline{D}_{N_A}^{2}(x)$ as such
\begin{align*}
\int_{-\infty}^x\underline{D}_{N_A}^{2}(u)\,du\leq \int_{-\infty}^x D_{N_A}^{2}(u)\,du\leq\int_{-\infty}^x\overline{D}_{N_A}^{2}(u)\,du,
\end{align*}
and recognizing that $\underline{D}_{N_A}^{3}(x)=\int_{-\infty}^x\underline{D}_{N_A}^{2}(u)\,du$, $D_{N_A}^{3}(x)=\int_{-\infty}^x D_{N_A}^{2}(u)\,du$, and $\overline{D}_{N_A}^{3}(x)=\int_{-\infty}^x\overline{D}_{N_A}^{2}(u)\,du$, we must obtain the forms of $\int_{-\infty}^x\underline{D}_{N_A}^{2}(u)\,du$ and $\int_{-\infty}^x\overline{D}_{N_A}^{2}(u)\,du$  to complete the proof in this case. We have
\begin{align*}
\int_{-\infty}^x\underline{D}_{N_A}^{2}(u)\,du & =\int_{-\infty}^x \int_{\mathbb{R}} (u^\prime-u)\mathbbm{1}\left[u\leq u^\prime\right]\,d\underline{F}_A(u)\,du^\prime \\
& = \int_{\mathbb{R}} \int_{-\infty}^x(u^\prime-u)\mathbbm{1}\left[u\leq u^\prime\right]\,du^\prime\, d\underline{F}_A(u)\\
& = \int_{\mathbb{R}} \frac{(x-u)^{2}}{2!}\,d\underline{F}_A(u)\\
& = \mathbbm{E}_{F_A\left(x| Z^A=Z^B=1\right)}\left[\frac{(x-Y^A)^{2}}{2}\,\mathbbm{1}\left[Y^A\leq x\right]\right]\,(\delta_{11}+(1-\gamma_A)\delta_{00}), \\
& \qquad +\mathbbm{E}_{F_A\left(x| Z^A=1,Z^B=0\right)}\left[\frac{(x-Y^A)^{2}}{2}\,\mathbbm{1}\left[Y^A\leq x\right]\right]\,\delta_{10}
\end{align*}
and
\begin{align*}
\int_{-\infty}^x\overline{D}_{N_A}^{2}(u)\,du & =\int_{-\infty}^x \int_{\mathbb{R}} (u^\prime-u)\mathbbm{1}\left[u\leq u^\prime\right]\,d\overline{F}_A(u)\,du^\prime \\
& = \int_{\mathbb{R}} \int_{-\infty}^x(u^\prime-u)\mathbbm{1}\left[u\leq u^\prime\right]\,du^\prime\, d\overline{F}_A(u)\\
& = \int_{\mathbb{R}} \frac{(x-u)^{2}}{2!}\,d\overline{F}_A(u)\\
& = \int_{-\infty}^x\underline{D}_{N_A}^{2}(u)\,du+ \frac{(x-\underline{Y}^A)^{2}}{2}\,\gamma_A \delta_{00}.
\end{align*}
Proceeding by induction on $s\in\mathbb{Z}_+$, once we have expressions for $\underline{D}_{N_A}^{s-1}$ and $\overline{D}_{N_A}^{s-1}$, we can repeat similar steps to those for the case of $s=3$, but with appropriate adjustments, to obtain the expressions for $\underline{D}_{N_A}^{s}(x)$ and $\overline{D}_{N_A}^{s}(x)$ for each $x\in\mathcal{X}_A$. We omit the details for brevity.

\par Turning to population $B$, let $x\in[\underline{t},\overline{t}]$, the Law of Total Probability applied to $F_B(x)$ yields
\begin{align*}
F_B(x) & =F_B\left(x|Z^A=Z^B=1\right)\,\delta_{11}+ F_B\left(x|Z^A=1,Z^B=0\right)\,\delta_{10}+F_B\left(x|Z^A=Z^B=0\right)\,\delta_{00}\\
 & = F_B\left(x|Z^A=Z^B=1\right)\,\delta_{11}+ \left((1-\gamma^{10}_B)F_B\left(x| Z^A=Z^B=1\right)+\gamma^{10}_B G^{10}_B(x)\right)\,\delta_{10}\\
 & \qquad +\delta_{00}\,(1-\gamma^{00}_B)F_B\left(x| Z^A=Z^B=1\right)+\delta_{00}\,\gamma^{00}_B G^{00}_B(x) \\
 & = \left(\delta_{00}\,(1-\gamma^{00}_B)+\delta_{10}\,(1-\gamma^{10}_B)+\delta_{11}\right)F_B\left(x| Z^A=Z^B=1\right) +\delta_{00}\,\gamma^{00}_B G^{00}_B(x)\\
 & \qquad+\delta_{10}\,\gamma^{10}_B G^{10}_B(x).
\end{align*}
Now since $G^{00}_B$ and $G^{10}_B$ are CDFs, $G^{00}_B(x),G^{00}_B(x)\in[0,1]$, holds, implying the worst-case CDF bounds $\underline{F}_B(x)\leq F_B(x)\leq\overline{F}_B(x)$, where
\begin{align*}
\overline{F}_B(x) & =\begin{cases}
  \delta_{10}\,\gamma^{10}_B+\delta_{00}\,\gamma^{00}_B   &  x= \underline{Y}^B \\
 \left(\delta_{00}\,(1-\gamma^{00}_B)+\delta_{10}\,(1-\gamma^{10}_B)+\delta_{11}\right)F_B\left(x| Z^A=Z^B=1\right)+\delta_{10}\,\gamma^{10}_B+\delta_{00}\,\gamma^{00}_B  & x> \underline{Y}^B\\
  0 & \text{otherwise}
\end{cases}  \\
\underline{F}_B(x) & =\begin{cases}
  \left(\delta_{00}\,(1-\gamma^{00}_B)+\delta_{10}\,(1-\gamma^{10}_B)+\delta_{11}\right)F_B\left(x| Z^A=Z^B=1\right)&  \underline{Y}^B\leq x<\overline{Y}^B \\
 1 &  x= \overline{Y}^B \\
 0 & \text{otherwise}.
\end{cases}
\end{align*}
To derive the bounds, we follows steps identical to those for $K=A$ above, except that we now replace $\underline{F}_A$ and $\overline{F}_A$ with $\underline{F}_B$ and $\overline{F}_B$, respectively, and calculate the integrals accordingly. We omit the details for brevity.
\end{proof}

\section{Technical Details for Section~\ref{Section - Discussion}}\label{Appendix E}
\subsection{Examples of Bounds for The Testing Problem~(\ref{eq - Testing Problem B Dom A})}\label{Subsection - Appendix D - B Dom A}
\begin{example}\label{Example - WC B dom A}
{\bf WC Bounds}. The WC bounds can be recovered using the following specification: for each $x\in[\underline{t},\overline{t}]$,
\begin{align*}
\varphi_{1,N_A,N_B}(x) & =\varphi_{4,N_A,N_B}(x)=\varphi_{2,N_A,N_B}(x)=-(\delta_{11}+\delta_{10})\quad\text{and}\\
\varphi_{3,N_A,N_B}(x) & =\frac{(x-\underline{Y}^B)^{s-1}}{(s-1)!}\,\delta_{00}+\delta_{10},
\end{align*}
%
where $\underline{Y}^B=\inf\mathcal{X}_B$. Hence, by definition, $\underline{Y}^B\leq x$ for each $x\in[\underline{t},\overline{t}]$. It defines $\theta_{N_A,N_B}(\cdot;\varphi_{N_A,N_B})=\overline{D}_{N_B}^{s}(\cdot)-\underline{D}_{N_A}^{s}(\cdot)$, where for each $x\in[\underline{t},\overline{t}]$,
\begin{align*}
\overline{D}_{N_B}^{s}(x) & = \mathbbm{E}_{F_B\left(x| Z^A=Z^B=1\right)}\left[\frac{(x-Y^B)^{s-1}}{(s-1)!}\,\mathbbm{1}\left[Y^B\leq x\right]\right]\,\delta_{11}+\frac{(x-\underline{Y}^B)^{s-1}}{(s-1)!}\,(\delta_{00}+\delta_{10})\quad\text{and}\\
\underline{D}_{N_A}^{s}(x) & =  \mathbbm{E}_{F_A\left(x| Z^A=Z^B=1\right)}\left[\frac{(x-Y^A)^{s-1}}{(s-1)!}\,\mathbbm{1}\left[Y^A\leq x\right]\right]\,\delta_{11}\\
& \qquad +\mathbbm{E}_{F_A\left(x| Z^A=1,Z^B=0\right)}\left[\frac{(x-Y^A)^{s-1}}{(s-1)!}\,\mathbbm{1}\left[Y^A\leq x\right]\right]\,\delta_{10}.
\end{align*}
\end{example}

\begin{example}
{\bf MCAR for Unit Nonresponse}. In conjunction with the WC upper bound for $F_B(\cdot|Z^A=1,Z^B=0)$, the following specification encodes this MCAR assumption: for each $x\in [\underline{t},\overline{t}]$,
\begin{align*}
\varphi_{1,N_A,N_B}(x) & =\varphi_{4,N_A,N_B}(x)=-\frac{(\delta_{11}+\delta_{00})(\delta_{11}+\delta_{10})}{\delta_{11}},\,\varphi_{2,N_A,N_B}(x) =-(\delta_{11}+\delta_{10})\quad\text{and}\\
\varphi_{3,N_A,N_B}(x)& = \frac{(x-\underline{Y}^B)^{s-1}}{(s-1)!}\,\delta_{10}.
\end{align*}

It defines $\theta_{N_A,N_B}(\cdot;\varphi_{N_A,N_B})=\overline{D}_{N_B}^{s}(\cdot)-\underline{D}_{N_A}^{s}(\cdot)$, where for each $x\in[\underline{t},\overline{t}]$,
\begin{align*}
\overline{D}_{N_B}^{s}(x) & = \mathbbm{E}_{F_B\left(x| Z^A=Z^B=1\right)}\left[\frac{(x-Y^B)^{s-1}}{(s-1)!}\,\mathbbm{1}\left[Y^B\leq x\right]\right]\,(\delta_{11}+\delta_{00})+\frac{(x-\underline{Y}^B)^{s-1}}{(s-1)!}\,\delta_{10}\quad\text{and}\\
\underline{D}_{N_A}^{s}(x) & =  \mathbbm{E}_{F_A\left(x| Z^A=Z^B=1\right)}\left[\frac{(x-Y^A)^{s-1}}{(s-1)!}\,\mathbbm{1}\left[Y^A\leq x\right]\right]\,(\delta_{11}+\delta_{00})\\
& \qquad +\mathbbm{E}_{F_A\left(x| Z^A=1,Z^B=0\right)}\left[\frac{(x-Y^A)^{s-1}}{(s-1)!}\,\mathbbm{1}\left[Y^A\leq x\right]\right]\,\delta_{10}.
\end{align*}
\end{example}

\begin{example}
{\bf Neighborhood Assumption for Unit and Wave Nonresponse}. The following specification of $\varphi_{N_A,N_B}$ delivers the desired bounds: for each $x\in[\underline{t},\overline{t}]$

\begin{align*}
\varphi_1(x) & =\varphi_2(x)=-D^{s}_{\underline{G}_A}(x),\,\varphi_4(x) =- \frac{1-\delta_{00}}{\delta_{11}}\,D^{s}_{\overline{G}_B}(x)\quad\text{and}\\
\varphi_3(x) & =-\mathbbm{1}[s>1]\sum_{j=0}^{s-2}\mathbbm{E}_{F_B\left(\cdot| Z^A=Z^B=1\right)}\left[D^{s}_{\overline{G}_B}(Y^B)\,R_j(Y^B,x)\right]\\
 & \qquad+\mathbbm{1}[s>1]\frac{\delta_{11}}{1-\delta_{00}}\sum_{j=0}^{s-2}\mathbbm{E}_{F_A\left(\cdot| Z^A=Z^B=1\right)}\left[D^{s}_{\underline{G}_A}(Y^A)\,R_j(Y^A,x)\right]\\
& \qquad+\mathbbm{1}[s>1]\frac{\delta_{10}}{1-\delta_{00}}\sum_{j=0}^{s-2}\mathbbm{E}_{F_A\left(\cdot| Z^A=1,Z^B=0\right)}\left[D^{s}_{\underline{G}_A}(Y^A)\,R_j(Y^A,x)\right].
\end{align*}
\noindent This specification defines $\theta_{N_A,N_B}(\cdot;\varphi_{N_A,N_B})=\overline{D}_{N_B}^{s}(\cdot)-\underline{D}_{N_A}^{s}(\cdot)$, where for each $x\in[\underline{t},\overline{t}]$,
\begin{align*}
\underline{D}_{N_A}^{s}(x) & =\left(\frac{\delta_{11}}{1-\delta_{00}}\,F_A\left(x| Z^A=Z^B=1\right)+\frac{\delta_{10}}{1-\delta_{00}}\,F_A\left(x| Z^A=1,Z^B=0\right)\right)D^{s}_{\underline{G}_A}(x)\\
& \qquad-\mathbbm{1}[s>1]\frac{\delta_{11}}{1-\delta_{00}}\sum_{j=0}^{s-2}\mathbbm{E}_{F_A\left(\cdot| Z^A=Z^B=1\right)}\left[D^{s}_{\underline{G}_A}(Y^A)\,R_j(Y^A,x)\right]\\
& \qquad-\mathbbm{1}[s>1]\frac{\delta_{10}}{1-\delta_{00}}\sum_{j=0}^{s-2}\mathbbm{E}_{F_A\left(\cdot| Z^A=1,Z^B=0\right)}\left[D^{s}_{\underline{G}_A}(Y^A)\,R_j(Y^A,x)\right]\\
\overline{D}_{N_B}^{s}(x) & = F_B\left(x| Z^A=Z^B=1\right)\,D^{s}_{\overline{G}_B}(x)-\mathbbm{1}[s>1]\sum_{j=0}^{s-2}\mathbbm{E}_{F_B\left(\cdot| Z^A=Z^B=1\right)}\left[D^{s}_{\overline{G}_B}(Y^B)\,R_j(Y^B,x)\right].
\end{align*}
\end{example}

\begin{example}
{\bf~\cite{klinesantos2013}}. In conjunction with the WC bounds on wave nonresponse, the following specification of $\varphi_{N_A,N_B}$ encodes this assumption on unit and wave nonrepsonse: for each $x\in[\underline{t},\overline{t}]$,
\begin{align*}
\varphi_1(x) & =-\frac{(\delta_{11}+(1-\gamma_A)\delta_{00})(\delta_{11}+\delta_{10})}{\delta_{11}},\;\varphi_2(x)=-(\delta_{11}+\delta_{10}),\,\varphi_3(x)=\frac{(x-\underline{Y}^B)^{s-1}}{(s-1)!}\,(\gamma^{00}_B \delta_{00}+\gamma^{10}_B\delta_{10}) \\
& \quad\text{and}\quad\varphi_4(x)=-\frac{(\delta_{11}+(1-\gamma^{00}_B)\delta_{00}+(1-\gamma^{10}_B)\delta_{10})(\delta_{11}+\delta_{10})}{\delta_{11}}.
\end{align*}
It defines $\theta_{N_A,N_B}(\cdot;\varphi_{N_A,N_B})=\overline{D}_{N_B}^{s}(\cdot)-\underline{D}_{N_A}^{s}(\cdot)$, where for each $x\in[\underline{t},\overline{t}]$,
\begin{align*}
\underline{D}_{N_A}^{s}(x)  & = \mathbbm{E}_{F_A\left(x| Z^A=Z^B=1\right)}\left[\frac{(x-Y^A)^{s-1}}{(s-1)!}\,\mathbbm{1}\left[Y^A\leq x\right]\right]\,(\delta_{11}+(1-\gamma_A)\delta_{00}), \\
& \qquad +\mathbbm{E}_{F_A\left(x| Z^A=1,Z^B=0\right)}\left[\frac{(x-Y^A)^{s-1}}{(s-1)!}\,\mathbbm{1}\left[Y^A\leq x\right]\right]\,\delta_{10} \quad\text{and}\\
\overline{D}_{N_B}^{s}(x) & =   \underline{D}_{N_B}^{s}(x)+ \frac{(x-\underline{Y}^B)^{s-1}}{(s-1)!}\,(\gamma^{00}_B \delta_{00}+\gamma^{10}_B\delta_{10})
\end{align*}
where $\underline{D}_{N_B}^{s}(x) =  \mathbbm{E}_{F_B\left(x| Z^A=Z^B=1\right)}\left[\frac{(x-Y^B)^{s-1}}{(s-1)!}\,\mathbbm{1}\left[Y^B\leq x\right]\right]\,(\delta_{11}+(1-\gamma^{00}_B)\delta_{00}+(1-\gamma^{10}_B)\delta_{10})$.
\end{example}

\subsection{Section~\ref{Subsection - Discussion - Longitudinal weights}}\label{Appendix E Longitudinal weights}

\par This section provides the details on the statistical procedure for the testing problem~(\ref{eq - test problem Longitudinal}). It is based on calibration via pseudo-empirical likelihood as in~\cite{wurao2006}. When $A$ represents a wave beyond the first one, the event $\{Z_i^{A}=0, Z_i^{B}=1\}$ can occur, which was not possible when $A$ represented the population of the first wave. In words, this event corresponds to the sampled unit $i$ not responding in wave $A$ but responding in wave $B$. The event $\{Z_i^{A}=0, Z_i^{B}=0\}$ in the setup of this section now corresponds to unit $i$ either being a unit nonresponder, or a wave nonresponder that did not respond in waves $A$ and $B$ but responded in other waves. This distinction between different types of nonresponse within this event can be meaningful when their characteristics differ (e.g., poor and wealthy individuals). The foregoing general estimating function approach that targets the estimand $\overline{D}_{N_A}^{s}(\cdot)-\underline{D}_{N_B}^{s}(\cdot)$ must be adjusted now to incorporate events of the form $\{Z_i^{A}=0, Z_i^{B}=1\}$. For each $x\in[\underline{t},\overline{t}]$, consider the following estimating function $\psi_s\left(Y^A, Y^B,Z^A,Z^B,\theta(x),\varphi(x)\right)$, given by
\begin{equation}
\begin{aligned}
&\frac{(x-Y^A)^{s-1}}{(s-1)!} \left[\mathbbm{1}\left[Y^A\leq x,Z^A=Z^B=1\right]\varphi_1(x)+\mathbbm{1}\left[Y^A\leq x,Z^A=1,Z^B=0\right]\varphi_2(x)\right]\\
 &-\frac{(x-Y^B)^{s-1}}{(s-1)!}\left[\mathbbm{1}\left[Y^B\leq x,Z^A=Z^B=1\right]\varphi_4(x)+\mathbbm{1}\left[Y^B\leq x,Z^A=0,Z^B=1\right]\varphi_5(x)\right]\\
 &\quad+\varphi_3(x)-\theta(x)
\end{aligned}
 \label{eq - Estimating Function beyond wave 1}
\end{equation}
where $\varphi=[\varphi_1,\varphi_2,\varphi_3,\varphi_4,\varphi_5]$ is a vector of nuisance functionals. The nuisance functions, as in the previous sections of this paper, serves as a vehicle to encode side information of maintained assumptions on nonresponse.

\par For a given value $\varphi=\varphi_{N_A,N_B}$, the solution of the census estimating equations
\begin{align*}
N_A^{-1}N_B^{-1}\sum_{i=1}^{N_A}\sum_{j=1}^{N_B}\psi_s\left(Y_i^A, Y_j^B,Z_i^A,Z_j^B,\theta(x),\varphi(x)\right)=0\quad\forall x\in [\underline{t},\overline{t}]
\end{align*}
is
\begin{align*}
\theta(x) & =\mathbbm{E}_{F_A\left(\cdot| Z^A=Z^B=1\right)}\left[\frac{(x-Y^A)^{s-1}}{(s-1)!}\,\mathbbm{1}\left[Y^A\leq x\right]\right]\delta_{11}\,\varphi_{1,N_A,N_B}(x)\\
 & +\mathbbm{E}_{F_A\left(\cdot| Z^A=1,Z^B=0\right)}\left[\frac{(x-Y^A)^{s-1}}{(s-1)!}\,\mathbbm{1}\left[Y^A\leq x\right]\right]\delta_{10}\,\varphi_{2,N_A,N_B}(x)\\
 & - \mathbbm{E}_{F_B\left(\cdot| Z^A=Z^B=1\right)}\left[\frac{(x-Y^B)^{s-1}}{(s-1)!}\,\mathbbm{1}\left[Y^B\leq x\right]\right]\delta_{11}\,\varphi_{4,N_A,N_B}(x)\\
& - \mathbbm{E}_{F_B\left(\cdot| Z^A=0,Z^B=1\right)}\left[\frac{(x-Y^B)^{s-1}}{(s-1)!}\,\mathbbm{1}\left[Y^B\leq x\right]\right]\delta_{01}\,\varphi_{5,N_A,N_B}(x)\\
& + \varphi_3(x),
\end{align*}
for each $x\in [\underline{t},\overline{t}]$.

\par Now we introduce notation regarding the sampled units. Let $V\subset \{1,2,\ldots,N_{W_1}\}$ denote the survey sample of the target population in the first wave of the panel, and let $V^\prime=\left\{i\in V: Z^{W_1}_i=1\right\}$, which consists of units that have responded in the first wave. As those units are followed over time, define
\begin{align}
U=\left\{i\in V^{\prime}: \text{either}\,\left\{Z_i^A=Z_i^B=1\right\},\,\text{or}\,\left\{Z_i^A=1,Z_i^B=0\right\},\,\text{or}\,\left\{Z_i^A=0,Z_i^B=1\right\}\right\},
\end{align}
to be the set of units that are followed form the first wave that have responded in either of the waves corresponding to $A$ and $B$. Also, let $\Psi_i (x;\hat{\varphi}(x))=\psi_s\left(Y_i^A, Y_i^B,Z_i^A,Z_i^B,0,\hat{\varphi}(x)\right)$ for each $i\in U$, where $\hat{\varphi}$ is a plug-in estimator of $\varphi_{N_A,N_B}$ than can depend on $V$. Furthermore, define
the moment functions
\begin{align}
\Gamma_{1,i}=G_0-G_i\mathbbm{1}\left[Z^{A}_i=1\right]-\overline{G}\mathbbm{1}\left[Z^{A}_i=0\right]\quad\text{and}\quad \Gamma_{2,i}=G_i\mathbbm{1}\left[Z^{A}_i=1\right]+\underline{G}\mathbbm{1}\left[Z^{A}_i=0\right]-G_0
\end{align}
for each $i\in \{1,2,\ldots,N_A\}$. Finally, let $\Gamma_i=[\Gamma_{1,i},\Gamma_{2,i}]^\intercal$, and let $\Gamma_{i,b}$ denote the subvector of $\Gamma_i$ corresponding to the active/binding inequalities in~(\ref{eq - WC bounds Aux})

\par Accordingly, the following pseudo-empirical likelihood statistic defines a \emph{restricted} testing procedure for the test problem~(\ref{eq - test problem Longitudinal}):
\begin{align}
LR_{\text{aux}}=\begin{cases}
    2\min\limits_{x \in [\underline{t},\overline{t}]}\left(L_{UR,\text{aux}}(x)-L_{R,\text{aux}}(x)\right)/\widehat{\text{Deff}}_{\text{GR}}(x) & \text{if} \ \tilde{\theta}(x;\hat{\varphi}(x))<0 \quad \forall x \in [\underline{t},\overline{t}] \\
    0, & \text{otherwise}
    \end{cases}
\end{align}\
where
\begin{equation}\label{eq: PELF- aux H0}
\begin{aligned}
L_{R,\text{aux}}(x) =\max_{\Vec{p} \in (0,1]^{n}}  \sum_{i\in U} W^\prime_i \log(p_i) &\quad \text{s.t.}\quad p_i>0 \quad \forall i,\;\sum_{i\in U} W^\prime_i p_i=1,\,\sum_{i\in U} W^\prime_i p_i \Psi_i (x;\hat{\varphi}(x)) = 0\\
&\quad \text{and}\, \sum_{i\in U} W^\prime_i p_i \Gamma_{\ell,i} \leq 0\quad \ell=1,2;
\end{aligned}
\end{equation}

\begin{equation}
\begin{aligned}
L_{UR,\text{aux}}(x)=\max_{\Vec{p} \in (0,1]^{n}} \sum_{i\in U} W^\prime_i \log(p_i) & \quad \text{s.t.}\quad p_i>0 \quad \forall i,\;\sum_{i\in U} W^\prime_i p_i=1,\\
&\quad \text{and}\, \sum_{i\in U} W^\prime_i p_i \Gamma_{\ell,i} \leq 0\quad \ell=1,2,
\end{aligned}
\label{eq: PELF- aux H1}
\end{equation}
 and $\tilde{\theta}(x;\hat{\varphi})=\sum_{i\in U} W^\prime_i \tilde{p}_i \Psi_i (x;\hat{\varphi}(x))$ with $\{\tilde{p}_i,i\in U\}$ the solution of~(\ref{eq: PELF- aux H1}). The design-effect in this scenario is associated with the estimator $\acute{\theta}(x;\hat{\varphi}(x))=\sum_{i\in U} (W^\prime_i/n) r_i(x)$,
\begin{align}
\widehat{\text{Deff}}_{\text{GR}}(x)= \left[n^{-1}\sum_{i\in U}(W^\prime_i/n)\Psi^2_i (x;\hat{\varphi}(x))\right]^{-1}\,\widehat{Var}\left(\acute{\theta}(x;\hat{\varphi}(x)) \right),
\end{align}
 where for each $i\in U$
 \begin{align}
r_i(x)& =\Psi_i (x;\hat{\varphi}(x))-{\bf B}^\intercal(x)\Gamma_{i,b}\quad\text{and}\\
 {\bf B}(x) & =\left[\frac{1}{N_A}\sum_{i=1}^{N_A}\Gamma_{i,b}\,\Gamma_{i,b}^\intercal \right]^{-1}\left[\frac{1}{N_A}\frac{1}{N_B}\sum_{i=1}^{N_A}\sum_{j=1}^{N_B}\Gamma_{i,b}\,\psi_s\left(Y_i^A, Y_j^B,Z_i^A,Z_j^B,0,\varphi_{N_A,N_B}(x)\right)\right],
 \end{align}
and $\widehat{Var}\left(\acute{\theta}(x;\hat{\varphi}(x)) \right)$ is an estimator of the design variance $Var\left(\acute{\theta}(x;\hat{\varphi}(x)) \right)$. In estimation of this design-variance, one must estimate ${\bf B}$ using its sample version:
\begin{align*}
{\bf \hat{B}}(x) & =\left[\sum_{i\in U} (W^\prime_i/n)\hat{\Gamma}_{i,b}\,\hat{\Gamma}_{i,b}^\intercal \right]^{-1}\left[\sum_{i\in U} (W^\prime_i/n)\hat{\Gamma}_{i,b}\,\Psi_i (x;\hat{\varphi}(x))\right],
\end{align*}
where $\hat{\Gamma}_{i,b}$ is exactly $\Gamma_{i,b}$ but based on an estimator of the set of active/binding inequalities.

\par Let $\nu=[\nu_{eq},\nu_b]\in \mathbb{R}\times\mathbb{R}_{-}^{\text{dim}(\Gamma_{i,b})}$. For a fixed nominal level $\alpha \in (0,1)$, the decision rule of the test is to
\begin{align}\label{eq - decision rule Restricted test}
    \text{reject} \ H_0^3 \iff LR_{\text{aux}} > c(\alpha),
\end{align}
where $c(\alpha)$ is be the $1-\alpha$ quantile of the distribution for the random variable
\begin{align}\label{eq - Appendix Longitud Asy null Dist}
\inf_{\nu\in\mathbb{R}\times\mathbb{R}_{-}^{\text{dim}(\Gamma_{i,b})}:\nu_{eq}=0}\left(\nu-T\right)^\intercal\Omega^{-1}\left(\nu-T\right)-\inf_{\nu\in\mathbb{R}\times\mathbb{R}_{-}^{\text{dim}(\Gamma_{i,b})}}\left(\nu-T\right)^\intercal\Omega^{-1}\left(\nu-T\right),
\end{align}
where $T\sim \text{MVN}\left({\bf 0},\Omega\right)$, and $\Omega$ is the limit of the covariance matrix
\begin{align*}
\frac{1}{N_A}\frac{1}{N_B}\sum_{i=1}^{N_A}\sum_{j=1}^{N_B}\begin{bmatrix} \psi_s^2\left(Y_i^A, Y_j^B,Z_i^A,Z_j^B,0,\varphi_{N_A,N_B}(x_0)\right) & \Gamma_{i,b}^\intercal\,\psi_s\left(Y_i^A, Y_j^B,Z_i^A,Z_j^B,0,\varphi_{N_A,N_B}(x_0)\right) \\
 \Gamma_{i,b}\psi_s\left(Y_i^A, Y_j^B,Z_i^A,Z_j^B,0,\varphi_{N_A,N_B}(x_0)\right) & \Gamma_{i,b}\,\Gamma_{i,b}^\intercal
\end{bmatrix}
\end{align*}
as $N_A,N_B\rightarrow+\infty$, $x_0=\min\{x\in[\underline{t},\overline{t}]:C(x)=0\}$ with $C(\cdot)$ defined in~(\ref{C function}).

\par Some remarks are in order.

\begin{remark}
The reason the set of active/binding inequalities in~(\ref{eq - WC bounds Aux}) enters the design effect is because they would enter the Taylor expansion of $L_{UR,\text{aux}}(x)-L_{R,\text{aux}}(x)$, and the inactive/slack inequalities do not because their Lagrange multipliers would be zero on account of complementary slackness.
\end{remark}

\begin{remark}
The above procedure relies on consistent estimation of the set of active/binding inequalities in~(\ref{eq - WC bounds Aux}) through $\{\hat{\Gamma}_{i,b},i\in U\}$. Estimation of this set can be implemented, for example, using a generalized moment selection procedure as in~\cite{Andrews-Soares2010}.
\end{remark}

\begin{remark}
This procedure uses an asymptotic critical value, where justification of the asymptotic form of $LR_{\text{aux}}$, given by~(\ref{eq - Appendix Longitud Asy null Dist}), is through an application of a central limit theorem. This asymptotic form of the test statistic is for a cone-based testing problem on a multivariate normal mean vector $\nu$:
\begin{align}
\mathcal{H}_0:\nu\in\mathbb{R}\times\mathbb{R}_{-}^{\text{dim}(\Gamma_{i,b})}:\nu_{eq}=0\quad\text{versus}\quad\mathcal{H}_1:\nu\in\mathbb{R}\times\mathbb{R}_{-}^{\text{dim}(\Gamma_{i,b})}.
\end{align}
\end{remark}

\end{document}